\documentclass[11pt,a4paper]{article}
\pdfoutput=1 

\usepackage[margin=1in]{geometry}

\usepackage{amsmath}
\usepackage{amsthm}
\usepackage{amsfonts}
\usepackage{amssymb}

\usepackage[pdfencoding=auto,psdextra,hidelinks]{hyperref}
\usepackage[svgnames]{xcolor}
\hypersetup{colorlinks={true},urlcolor={blue},linkcolor={DarkBlue},citecolor=[named]{DarkGreen}}
\usepackage[authoryear,square]{natbib}
\usepackage{doi}

\usepackage{microtype}
\usepackage[capitalise,nameinlink,noabbrev]{cleveref}
\crefname{ineq}{Inequality}{Inequalities}
\creflabelformat{ineq}{#2{\upshape(#1)}#3}

\usepackage{crossreftools}
\pdfstringdefDisableCommands{%
    \let\Cref\crtCref
    \let\cref\crtcref
    
    \def\FIXP{FIXP}
    \def\eps{ε}
}

\usepackage{mathptmx}
\usepackage[utf8]{inputenc}
\usepackage[T1]{fontenc}
\usepackage[english]{babel}
\usepackage{csquotes}
\usepackage{commath}
\usepackage{bm}
\usepackage{xfrac}
\usepackage{mathtools}
\usepackage{authblk}
\usepackage{bookmark}
\usepackage{float}
\usepackage{booktabs}
\usepackage{subcaption}
\usepackage{tikz}
\usepackage{enumerate}
\usetikzlibrary{calc,positioning,shapes,arrows}
\usepackage{array}

\usepackage{mdframed}

\usepackage{enumitem}

\newlist{conditions}{enumerate}{2}
\setlist[conditions]{label={\arabic*.}}
\crefname{conditionsi}{Condition}{Conditions}

\newlist{assumptions}{enumerate}{2}
\setlist[assumptions]{label={\arabic*.}}
\crefname{assumptionsi}{Assumption}{Assumptions}

\newcommand{\transpose}{\ensuremath{\mathsf{T}}}

\newcommand{\Simplex}{\ensuremath{\Delta}}

\newcommand{\eps}{\ensuremath{\varepsilon}}
\newcommand{\NN}{\ensuremath{\mathbb{N}}}

\newcommand{\RR}{\ensuremath{\mathbb{R}}}
\newcommand{\RRnn}{\ensuremath{\mathbb{R}_{\geq0}}}
\newcommand{\RRp}{\ensuremath{\mathbb{R}_{>0}}}

\DeclareMathOperator*{\argmax}{arg\,max}

\newcommand{\PPAD}{\ensuremath{\mathrm{PPAD}}}
\newcommand{\FIXP}{\ensuremath{\mathrm{FIXP}}}
\newcommand{\linearFIXP}{\ensuremath{\mathrm{Linear}\text{-}\mathrm{FIXP}}}
\newcommand{\FIXPa}{\ensuremath{\mathrm{FIXP}_a}}

\newcommand{\poly}{\ensuremath{\mathrm{poly}}}

\newcommand{\Heaviside}{\ensuremath{\operatorname{H}}}
\newcommand{\conv}{\ensuremath{\operatorname{Conv}}}

\newcommand{\calH}{\ensuremath{\mathcal{H}}}

\newcommand{\support}{\ensuremath{\operatorname{Supp}}}

\newcommand{\Sol}{\ensuremath{\operatorname{Sol}}}

\newcommand{\Exp}{\operatorname*{E}}

\newcommand{\Feas}{\ensuremath{\operatorname{Feas}}}
\newcommand{\Opt}{\ensuremath{\operatorname{Opt}}}
\newcommand{\size}{\ensuremath{\operatorname{size}}}
\newcommand{\LP}{\ensuremath{\operatorname{LP}}}
\newcommand{\CP}{\ensuremath{\operatorname{CP}}}
\newcommand{\fix}{\ensuremath{\operatorname{\mathsf{Fix}}}}

\theoremstyle{definition}
\newtheorem{definition}{Definition}[section]
\newtheorem{remark}{Remark}
\newtheorem{example}{Example}

\newtheorem{obstacle}{Obstacle}

\theoremstyle{plain}
\newtheorem{theorem}{Theorem}[section]
\newtheorem{proposition}{Proposition}
\newtheorem{lemma}{Lemma}
\newtheorem{corollary}{Corollary}
\newtheorem{claim}{Claim}

\title{FIXP-membership via Convex Optimization: \\Games, Cakes, and Markets}

\author[1]{Aris Filos-Ratsikas}
\author[2]{Kristoffer Arnsfelt Hansen}
\author[2]{Kasper H{\o}gh}
\author[3]{Alexandros Hollender}
\affil[1]{University of Edinburgh, UK}
\affil[2]{Aarhus University, Denmark}
\affil[3]{University of Oxford, UK}

\date{} 

\begin{document}

\maketitle 

\begin{abstract}
We introduce a new technique for proving membership of problems in FIXP -- the class capturing the complexity of computing a fixed-point of an algebraic circuit. Our technique constructs a ``pseudogate'' which can be used \emph{as a black box} when building FIXP circuits. This pseudogate, which we term the ``OPT-gate'', can solve most convex optimization problems. Using the OPT-gate, we prove \emph{new} FIXP-membership results, and we \emph{generalize} and \emph{simplify} several known results from the literature on fair division, game theory and competitive markets.

In particular, we prove complexity results for two classic problems: computing a market equilibrium in the Arrow-Debreu model with general concave utilities is in FIXP, and computing an envy-free division of a cake with very general valuations is FIXP-complete. We further showcase the wide applicability of our technique, by using it to obtain simplified proofs and extensions of known FIXP-membership results for equilibrium computation for various types of strategic games, as well as the pseudomarket mechanism of Hylland and Zeckhauser.
\end{abstract}

\clearpage

\section{Introduction}

Equilibria, i.e., stable states of some dynamic process or environment \citep{yannakakis2009equilibria}, appear in several classic applications in economics and computer science. Prominent examples include the Nash equilibrium \citep{Nash50}, which captures the stable outcome of deliberation between strategic agents, as well as the competitive equilibrium \citep{arrow1954existence}, which corresponds to a market-clearing outcome after the adjustment of prices based on demand and supply. These equilibria can most often be captured by fixed points of functions, i.e., points $x$ for which $f(x) = x$. For instance, Nash's existence theorem, i.e., that every strategic game has a mixed Nash equilibrium, was famously proven using Brouwer's fixed point theorem \citep{MA:Brouwer1911}. 

The computational class FIXP was defined by \citet{SICOMP:EtessamiY10} to capture the complexity of fixed point problems, and in particular those related to Brouwer's fixed point theorem. These problems are \emph{total search problems}, i.e., problems for which a solution is guaranteed to exist via Brouwer's (or some other) fixed point theorem, and for which we aim to find such a solution.
Indeed, the class has been successful in that regard, with interesting problems related to game theory \citep{SICOMP:EtessamiY10} and competitive markets \citep{SICOMP:EtessamiY10,garg2017settling,ChenPY17-non-monotone-markets} among others, being either members of FIXP, or complete for the class. 

At the heart of the definition of FIXP lies the notion of an algebraic circuit, used to represent a continuous function mapping a domain to itself. This representation effectively allows for the study of exact fixed points of the function, including irrational ones, and therefore can be used to capture the \emph{exact} complexity of these types of equilibrium problems. In contrast, in the usual Turing model of computation, sometimes the best one can hope for is approximate solutions (e.g., $\varepsilon$-Nash equilibria). The counterpart of FIXP in the Turing model is the class PPAD of \citet{JCSS:Papadimitriou1994} which famously captures the complexity of computing an $\varepsilon$-Nash equilibrium in strategic games \citep{SICOMP:DaskalakisGP2009,ChenDT09-Nash}. Indeed, for several of the aforementioned problems, computing approximate equilibria is in PPAD, whereas computing exact equilibria is in FIXP. Another interpretation of FIXP in the Turing model of computation is in terms of \emph{strong approximations}, i.e., computing points that are close in the sense of distance (e.g., in the max norm) to equilibrium points. In contrast, PPAD typically captures weak approximations, i.e., points that are approximately equilibrium points, but not necessarily close to an exact equilibrium point in the geometric sense.  

Contrary to the case of decision problems in NP, for which the membership in the class is often immediate, proving membership of a total search problem in the corresponding computational class is typically much more involved, and often requires ``transforming'' an existence proof into a computational reduction. This poses certain challenges, but it has been largely successful for problems in PPAD. For example, the PPAD-membership of Nash equilibrium computation incorporates Nash's existence proof (e.g., see \citep[Section 3.2]{goldberg2011survey}), and the PPAD-membership of the approximately envy-free cake cutting problem \citep{robertson1998cake,brams1996fair,OR:DengQS2012} is essentially a modification of an existence proof due to Simmons \citep{AMM:Su1999}. 

In the case of FIXP however, the aforementioned challenges are much more pronounced; for an existence proof to be used as a basis for a membership result, it has to display several characteristics. First, it has to go via Brouwer's fixed point theorem, and more importantly, it has to avoid using any ``discontinuous'' components, precluding the use of several types of discrete steps and limit arguments. For this reason, FIXP-membership results tend to be much more ad-hoc, using inventive but often rather involved techniques, which do not necessarily follow the known existence proofs. Even worse, for certain problems like the envy-free cake cutting problem for instance, the literature has not managed to produce any FIXP-membership result for the reasons mentioned above. 

A closer inspection into the several proofs of existence for versions of strategic games or competitive markets reveals that they often exhibit a common characteristic: they all include one or multiple optimization problems as subroutines. For example, at the heart of the Nash equilibrium notion is an agent's utility maximization problem, which can be expressed as a linear program (see \cref{eq:LP-Nash} in \cref{sec:OPT-gate-Nash}). Another example comes from competitive markets, where the market equilibrium notion includes convex optimization programs for maximizing the utilities of consumers and producers given a set of prices. This offers a possible explanation as to why the literature has fallen short of producing a systematic and unified approach for proving FIXP-membership results: Up until now, it was not known how to actually compute these optimization programs in FIXP, or more specifically, how to incorporate these programs as part of a FIXP circuit, as required for a membership result. 

Our paper remedies this situation: We show how to compute convex optimization programs, which can be used as black-box components of FIXP circuits. Simply put, under some mild assumptions, whenever such an optimization program is encountered in an existence proof, it can be effectively substituted by such a component in the FIXP-membership proof. Using our newly introduced technique, we manage to generalize and simplify several FIXP-membership proofs in the literature of game theory and competitive markets, as well as prove for the first time the seemingly elusive FIXP-completeness result for the envy-free cake cutting problem. We present our contributions in more detail below.

\subsection{Our Contribution}
Our main contribution is the introduction of the \emph{OPT-gate}, a new ``plug and play'' component which can be used as a black-box in FIXP-membership proofs for computing Linear Programs or more general convex optimization programs. The OPT-gate is a special kind of gate, with the following crucial property:
\begin{quote}
\emph{The OPT-gate is a ``pseudogate'', in the sense that its correct operation is only ensured at a fixed point of the function encoded by the algebraic circuit; with regards to a \FIXP-membership proof, it operates as a normal gate for all intents and purposes.}
\end{quote}
More specifically, the OPT-gate can solve any convex program with convex inequality constraints, explicit equality constraints and an explicit bound on its feasible region, as long as it satisfies a ``FIXP-appropriate'' variant of the well-known \emph{Slater condition} \citep{Slater50} for convex programs (see \cref{sec:OPT-gate-LP,sec:OPT-gate-convex}, and \cref{def:explicit-slater-LP,def:explicit-slater-convex}). Having programs of this form is in fact necessary (see the discussion in \cref{sec:OPT-gate}), but at the same time it is sufficient for capturing the rather general optimization problems that appear in the existence proofs mentioned above. 

To demonstrate the effectiveness of our technique, we present a host of different applications related to the \emph{envy-free cake cutting problem}, to computing different types of equilibria in various \emph{strategic games}, and to computing competitive equilibria in \emph{markets}. Our results advance the state of the art in three different ways: (a) we provide results for problems for which the complexity was previously entirely unknown, (b) we provide results that generalize known special cases in the literature to domains which are as general as possible, and (c) we provide proofs which are conceptually simpler and reminiscent of the known proofs of existence for those problems.

\subsubsection*{Applications to Game Theory} 

First, we discuss the application of our technique to the problem of computing exact equilibria in strategic games. Already in \cref{sec:OPT-gate-Nash}, we use the case of normal form games as a motivating example to demonstrate the strength of the OPT-gate. The FIXP-completeness of the problem was established by \citet{SICOMP:EtessamiY10}, in the same paper where they defined the class FIXP. We show that via the employment of our technique, the membership problem essentially boils down to simply writing the standard utility-maximization linear programs for the players and substituting them by the OPT-gate in the FIXP circuit, making the proof entirely straightforward. 

Then, in \cref{sec:games}, we move on to present more general classes of games and different equilibrium concepts, for which we also obtain FIXP-completeness or FIXP-membership results. In particular:
\begin{itemize}[leftmargin=*]
    \item[-] \textbf{FIXP-completeness of concave games.} In \cref{sec:concave-games}, we prove the FIXP-completeness of \emph{concave games} \citep{rosen1965concave}, a class of games which generalizes the class of normal-form games. \citet{rosen1965concave} showed via the employment of Kakutani's fixed point theorem \citep{kakutani1941generalization} that a Nash equilibrium of these games always exists. Our FIXP-membership proof defines a Brouwer function that uses the agent's utility-maximization program, now a convex program, as a subroutine, substituted by the OPT-gate. Similarly to the case of normal form games described above, our proof is very simple and intuitive.
    \item[-] \textbf{FIXP-membership of \eps-proper equilibria.} In \cref{sec:eps-proper}, we consider a Nash equilibrium refinement notion due to \citet{IJGT:Myerson78}, that of an \emph{$\eps$-proper equilibrium}.\footnote{We remark here that the \eps\ parameter is not the same type of approximation as in an \eps-Nash equilibrium mentioned earlier; see \cref{def:eps-proper} and the ``approximate'' vs ``almost'' discussion in \citep[Section 2]{GEB:Etessami2021}.} \citet{EC:HansenL18} showed that approximating a proper equilibrium (i.e., a limit point of \eps-proper equilibria) is complete for \FIXPa\  \citep{SICOMP:EtessamiY10}, the class of \emph{discrete} total search problems that reduce to (strong) approximate Brouwer fixed points. We show that computing an \eps-proper equilibrium is in FIXP. To obtain the result, we first develop a more general method based on solving \emph{systems of conditional convex constraints} (see \cref{sec:cond-conv-constraints}), making use of our OPT-gate, which might have applications beyond the \eps-proper equilibrium result. 
    \item[-] \textbf{FIXP-completeness of $\boldsymbol{n}$-player Stochastic Games}. In \cref{sec:stochastic-games}, we consider $n$-player stochastic games, which generalize the classic 2-player stochastic games of \citet{PNAS:Shapley53}. The existence of a \emph{stationary $\lambda$-discounted equilibrium} for any discount factor $\lambda$ was proven by \citet{JSHUA:Takahashi1964} and \citet{JSHUA:Fink1964} using a generalization of Kakutani's fixed point theorem. For 2-player zero-sum games, \citet{SICOMP:EtessamiY10} showed that computing a \emph{stationary $\lambda$-discounted equilibrium} is in FIXP. We generalize this membership result to $n$-player general stochastic games. Our proof is based on an enlarged domain of triples consisting of valuation profiles and pairs of stationary strategies, and constructs a Brouwer function from this domain to itself, for which the fixed points ``contain'' fixed points of the correspondence defined in Takahashi's proof on the original domain. The FIXP-hardness follows from \citep{SICOMP:EtessamiY10}, by noting that a normal form game may simply be viewed as a stochastic game with a single state.
\end{itemize}

\subsubsection*{Applications to Cake Cutting}

Next, in \cref{sec:cakes}, we prove our main result for the well-known envy-free cake cutting problem \citep{gamow1958puzzle} (see also \citep{brams1996fair,robertson1998cake,procaccia2013cake}). In this problem, the cake serves as a metaphor for a divisible resource, which needs to be divided fairly among a set of agents. The agents have different preferences over how to divide the resource, and an envy-free division is one which guarantees that each agent would rather have their own piece than any other agent's piece.  The existence of an envy-free division was proven by \citet{AMM:Stromquist1980}, even for the case where each agent receives a single piece (known as the \emph{contiguous} version or the version with \emph{connected pieces}). An alternative proof was provided by Simmons (cited in \citep{AMM:Su1999}). Both proofs employ a discretization of the space of possible divisions and then apply some topological lemma (either a variant of the K-K-M lemma \citep{FM:KnasterKM1929} or Sperner's lemma \citep{sperner1928neuer}), together with a limit argument. 

In terms of the complexity of the problem, results were only known for the approximate version of the problem: \citet{OR:DengQS2012} proved that for agents with very general valuations, computing a contiguous envy-free division of the cake is PPAD-complete. \citeauthor{OR:DengQS2012}'s proof closely follows Simmons' proof \citep{AMM:Su1999}, which, without the limit argument, obtains the existence of an approximately envy-free division. However, before our paper, the complexity of the \emph{exact} envy-free cake cutting problem was not known. To this end, we provide the following result.

\begin{quote}
\emph{The (contiguous) envy-free cake cutting problem with very general valuations is $\FIXP$-complete}. 
\end{quote}
By ``very general valuations'' we mean valuations that are not necessarily additive measures or even monotone over subsets of the cake, and which can assign different values to different divisions for an agent, even if the agent receives the same piece in all of those. The aforementioned existence proofs apply to this very general case as well, and therefore our FIXP-membership result is as strong as possible. We discuss this in more detail in \cref{sec:cakes} (see \cref{rem:general-valuations}).

In order to obtain the FIXP-membership result, we develop a new proof of existence for envy-free cake cutting, one which is not based on discretizations and limit arguments. Our proof constructs a bipartite graph between agents and preferred pieces and computes a maximum flow on this graph. This computation can be immediately substituted by our OPT-gate, effectively turning this new existence proof into a FIXP-membership result. This proof is somehow reminiscent of another existence proof by \citet{woodall1980dividing}, but as we explain in \cref{sec:cakes}, \citeauthor{woodall1980dividing}'s proof uses discontinuous steps and therefore cannot conceivably be ``turned'' into a FIXP-membership proof. 

For the FIXP-hardness, we construct a very simple reduction from a generalization of Brouwer's fixed point problem due to \citet{MP:Bapat1989}. The very same reduction also shows that Bapat's Brouwer fixed point problem is in FIXP. This in turn has implications for the \emph{rainbow} K-K-M problem \citep{IJGT:Gale1984}, a generalization of the K-K-M problem \citep{FM:KnasterKM1929}, which we show to be FIXP-complete via reductions from and to Brouwer's fixed point problem. These results, which are included in \cref{sec:kkm}, develop a potentially useful machinery for proving FIXP-completeness results for more general cake cutting and fair division problems. For example \citet{aharoni2020fractionally} establish the relation between K-K-M-type theorems and envy-free divisions of multiple cakes; whether these can yield FIXP-membership results for those problems as well is something to be explored in the future. 

\subsubsection*{Applications to Markets}

Our last application domain is that of competitive markets. Here we provide results for general \emph{Arrow-Debreu markets} \citep{arrow1954existence}, as well as for the \emph{pseudomarket mechanism} of \citet{hylland1979efficient}.

\begin{itemize}[leftmargin=*]
\item[-] \textbf{Arrow-Debreu markets.} In \cref{sec:markets-arrow-debreu} we prove a very general result, namely that computing competitive equilibria in Arrow-Debreu markets with concave utilities is in FIXP. The Arrow-Debreu market is the most fundamental market model, proposed and studied by \citet{arrow1954existence}. It consists of a set of consumers with utilities, consumption sets and endowments, and a set of producers or firms with production sets. A \emph{competitive} or \emph{market} equilibrium is a stable state in which supply equals demand, and all participants maximize their utilities or profits at the current set of prices. \citet{arrow1954existence} proved that under mild assumptions, every market has a competitive equilibrium.

FIXP-membership results were only previously known for special cases of Arrow-Debreu markets. \citet{SICOMP:EtessamiY10} in their original paper already proved the FIXP-membership of a setting where there are no explicit utilities, and the aggregate demand is a given function, rather than a correspondence which is typically the case in these markets. \citet{garg2016dichotomies} proved a FIXP-membership result for markets with \emph{Piecewise Linear Concave (PLC)} utilities, straightforward consumption sets (i.e., where consumption is only constrained to be non-negative), and production sets that are also given by PLC functions.

Our result for Arrow-Debreu markets generalizes\footnote{To be precise, our result applies to any class of concave utility functions, as long as we have access to the supergradients of those functions or when we can compute them given access to the functions. This is possible for the PLC utilities of \citep{garg2016dichotomies} as we explain in \cref{app:PDC}, but not for the CES utilities of \citep{ChenPY17-non-monotone-markets}, since these are non-superdifferentiable at $0$ coordinates. See \cref{rem:CES} in \cref{sec:markets-arrow-debreu} for more details.} the aforementioned results as it considers (a) more general utility functions (i.e., general concave functions) and (b) more general consumption and production sets (i.e., general convex sets).
Additionally, compared to the proofs in these papers, our membership proof is arguably simpler and follows rather easily from the original existence proof of \citeauthor{arrow1954existence}. Essentially, the only difference is that we ``organically'' devise a Brouwer function rather than a fixed point correspondence, and we substitute the various convex optimization programs that appear in the proof (for the consumers' and producers' optimality) by our OPT-gate.

\item[-] \textbf{The pseudomarket mechanism of \citet{hylland1979efficient}.} In \cref{sec:markets-hz} we consider the problem of computing equilibria of the pseudomarket mechanism of \citet{hylland1979efficient}. This mechanism solves the \emph{random assignment problem} (e.g., see \citep{bogomolnaia2001new}) by allocating to each agent a unit of artificial currency, and by then setting up a ``pseudomarket'' where agents buy probability shares of the different items. The \citeauthor{hylland1979efficient} pseudomarket is not a special case of the Arrow-Debreu market, because of additional allocation constraints that ensure that each agent receives exactly one item in expectation. \citeauthor{hylland1979efficient} employed Kakutani's fixed point theorem to prove that an equilibrium of this market is always guaranteed to exist.

The complexity of computing a \citeauthor{hylland1979efficient} equilibrium was an open problem since the definition of the mechanism in 1979 and certainly since the introduction of the relevant complexity classes for equilibrium computation problems. Very recently, \citet{vazirani_et_al:LIPIcs.ITCS.2021.59} showed that the problem lies in FIXP, leaving the FIXP-hardness as an open question. We employ our OPT-gate to obtain the same membership result, via, what we believe to be, an easier proof. Again, like most of our results, the proof resembles strongly the existence proof of \citet{hylland1979efficient}, except that it constructs a Brouwer fixed point function (rather than a Kakutani fixed point correspondence) and substitutes the agents' utility maximization Linear Programs by instances of the OPT-gate.
\end{itemize}

\subsection{Related Work}\label{sec:related}

Below we present some further related work related to our applications, as well as to fixed point computation problems.

\paragraph{Strategic games.} The field of game theory was developed in the late 1920s by the works of \citeauthor{neumann1928theorie} \citep{neumann1928theorie,von1944theory} and then notably in the 1950s with the concept of Nash equilibrium, guaranteed to exist by Nash's theorem \citep{Nash50}. The theorem can be proven by either using Brouwer's fixed point theorem \citep{nash1951non} or Kakutani's fixed point theorem \citep{Nash50}. The complexity of Nash equilibrium computation was firstly considered by \citet{JCSS:Papadimitriou1994}, who actually defined the class PPAD with this problem as the central consideration. More than a decade later, the celebrated results of \citet{SICOMP:DaskalakisGP2009} and \citet{ChenDT09-Nash} showed the PPAD-completeness of the approximate version of the problem, followed by the definition of FIXP and the FIXP-completeness result of \citet{SICOMP:EtessamiY10} for exact equilibria. Since then, several variants of the main normal form game setting and several refinements of the standard equilibrium notions have been considered, with corresponding complexity results being obtained (e.g., see \citep{deligkas2016inapproximability,deligkas2017computing,hansen2010computational,rubinstein2016settling}). 
Out of these refinements, the most relevant to us is the notion of proper equilibria defined by \citep{IJGT:Myerson78}. These equilibria were studied by \citet{EC:HansenL18} as we explained above, and it was shown that approximating them is complete for the class \FIXPa, a discrete variant of FIXP also defined by \citet{SICOMP:EtessamiY10}. 

Stochastic games were defined by \citet{PNAS:Shapley53} in the early 1950s, and they constitute one of the most fundamental models of repeated games in the literature, which can capture very general scenarios; we refer the reader to \citep{mertens1981stochastic,neyman2003stochastic} for more details on different types of stochastic games and their definitions. Our FIXP-membership result establishes the FIXP-completeness of the problem for $n$-player games; for 2-player games, besides the FIXP-membership shown in \citep{SICOMP:EtessamiY10}, the authors also show that the problem is at least as hard as the Square Root Sum problem, defined therein; whether the 2-player problem is FIXP-complete is still an open question. Exponential or superexponential time algorithms for the 2-player problem were developed by \citet{hansen2011exact} and \citet{oliu2021new}.

\paragraph{Cake cutting.} The cake cutting problem was introduced by \citet{Steinhaus1949} in the late 1940s and has since been studied extensively in the literature of mathematics, economics and computer science. The problem of finding an envy-free division was introduced by \citet{gamow1958puzzle} about a decade later. The existence of an envy-free division was shown in several proofs, but perhaps the most famous are those by \citet{AMM:Stromquist1980} and Simmons \citep{AMM:Su1999} that we mentioned earlier, which in fact guarantee the existence of contiguous divisions. The computational complexity of the approximate problem was considered by \citet{OR:DengQS2012} who proved a PPAD-completeness result for the case of general continuous preferences, which is equivalent to the setting of very general valuations we consider here. As we explained earlier, our paper provides the first computational complexity results for the problem of finding an exact envy-free division. For the usual case of additive valuation functions (e.g., see \citep{brams1996fair,robertson1998cake}), the FIXP-hardness for exact equilibria, or even the PPAD-hardness for approximate equilibria is still a major open problem.

In a related, but, in a sense, orthogonal line of work, several discrete protocols for finding an envy-free solution were proposed over the years, starting from the cut-and-choose protocol for 2 agents and the Selfridge-Conway protocol for 3 agents (e.g., see \citep{robertson1998cake}), leading to recent breakthrough results from the literature of computer science \citep{aziz2016discrete}. These protocols interact with the agents via a set of queries, in the so-called Robertson-Webb (RW) model (see \citep{woeginger2007complexity}). The RW model is not inherently a computational model, and RW queries can in fact return irrational points as answers. In that regime, the goal is to come up with a protocol that finds an envy-free solution using the smallest number of such queries possible. Even for the non-contiguous version, the discrepancy between the lower bound of \citet{procaccia2009thou} and the upper bound of \citet{aziz2016discrete} is astronomical.

\paragraph{Competitive Markets.} The fundamental principles of competitive markets and equilibrium theory date back to the 1870s and the works of \citet{walras1874elements}. \citeauthor{walras1874elements} described a process of adjusting the market prices based on supply and demand, the so-called ``t\^atonnement process'', which would eventually lead to the stable outcome that was later known as the competitive equilibrium. Foundational in the establishment of the associated equilibrium theory were the contributions of \citet{arrow1954existence} and \citet{mckenzie1954equilibrium},\footnote{In fact, sometimes the fundamental market model is referred to as the ``Arrow-Debreu-McKenzie'' market.} who proved the existence of an equilibrium. The proof of \citeauthor{arrow1954existence} uses a fixed point theorem due to \citet{debreu1952social}, whereas \citeauthor{mckenzie1954equilibrium} used Kakutani's fixed point theorem to obtain the result. An alternative proof via Brouwer's fixed point theorem was given by \citet{geanakoplos2003nash}. 

In computer science, much work has been devoted to the question of computing exact or approximate equilibria of different markets, which are special cases of the general Arrow-Debreu market that we study. There are several works that developed polynomial-time algorithms for finding or approximating equilibria for some classes of utility functions, e.g., see \citep{devanur2008market,jain2003approximating,jain2007polynomial,duan2015combinatorial,duan2016improved,garg2019strongly, garg2004auction,garg2015complementary}. For more complex utility functions, besides the results that we mentioned earlier, the approximate equilibrium computation problem for additively separable piecewise linear concave (SPLC) functions was shown to be PPAD-complete by \citep{vazirani2011market,ChenDDT09-Arrow-Debreu}, where the approximation notion is a ``weak approximation'' in the market clearing and utility-maximization conditions, see \citep{scarf1967approximation}. For exact equilibria, \citet{garg2014equilibrium} showed the PPAD-completeness of Arrow-Debreu markets with linear utility functions and SPLC production sets; in this case, it turns out that there always exist rational exact equilibria, and they can be computed in PPAD. An interesting class of utility functions is that of Leontief utilities, which are simultaneously subcases of the PLC utilities studied in \citep{garg2016dichotomies,garg2017settling} and limit cases of the CES utilities studied in \citep{ChenPY17-non-monotone-markets}. For this class, \citet{CodenottiSVY08-economies-games} showed a PPAD-hardness result. \citet{garg2017settling} showed hardness results for a market model with Leontief utilities, but their FIXP-hardness does not quite yield a FIXP-completeness result together with our membership proof, or even the membership proof of \citet{garg2016dichotomies}, because it is obtained for markets with different sufficiency conditions for equilibrium existence, and not for the market model as presented by \citet{arrow1954existence} that we study in \cref{sec:markets}.

\paragraph{Fixed point computation.} Besides the applications above, the class \FIXP\ also captures the complexity of other problems, such as branching process and context-free grammars \citep{SICOMP:EtessamiY10}, equilibrium refinements \citep{SAGT:EtessamiHMS14,GEB:Etessami2021}, and more recently the complexity of computing a Bayes-Nash equilibrium in the first-price auction with subjective priors \citep{fghlp2021_ec}. Besides \FIXP, there are some other computational classes that capture the complexity of different fixed point problems, namely the classes BU \citep{deligkas2021computing} and BBU \citep{batziou2021strong} which correspond to the Borsuk-Ulam theorem \citep{Borsuk1933}, and the class HB \citep{GoldbergH19-HairyBall}, which corresponds to the Hairy Ball theorem \citep{Poincare85-courbes}.

\section{Preliminaries}

In this section, we provide some definitions and theorems that we will use or reference throughout the paper, as well as the formal definition of the class FIXP. 

\subsection{Fixed Point Theorems}

We start with the definition of \emph{Brouwer's fixed point theorem} \citep{MA:Brouwer1911}, one of the most widely used fixed point theorems in economic applications, such as game theory or market theory.  

\begin{theorem}[Brouwer's Fixed Point Theorem \citep{MA:Brouwer1911}] \label{thm:brouwer}
  Let $A \subseteq \RR^n$ be a nonempty, compact, and convex set. Let
  $f \colon A \rightarrow A$ be continuous. Then there is $x \in A$
  such that $f(x)=x$.
\end{theorem}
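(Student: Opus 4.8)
The plan is to reduce the general statement to the special case $A=\Simplex^n$, the standard $n$-simplex, and then to settle that case combinatorially via Sperner's lemma. For the reduction, note that $A$ is bounded, hence contained in a sufficiently large simplex $S$, and consider the nearest-point projection $r\colon S\to A$, $r(x)=\argmin_{a\in A}\Norm{x-a}$; this map is well-defined and continuous precisely because $A$ is nonempty, closed, and convex. Then $g:=f\circ r\colon S\to A\subseteq S$ is continuous, and any fixed point of $g$ satisfies $x^\ast=g(x^\ast)\in A$, so $r(x^\ast)=x^\ast$ and therefore $f(x^\ast)=x^\ast$. Since $S$ is affinely isomorphic (hence homeomorphic) to $\Simplex^n$, it suffices to produce a fixed point of an arbitrary continuous self-map of $\Simplex^n$.

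For the simplex case I would argue as follows. Write $f(x)=(f_0(x),\dots,f_n(x))$ in barycentric coordinates and assign to each $x\in\Simplex^n$ a label $\ell(x)\in\support(x)$ chosen so that $f_{\ell(x)}(x)\le x_{\ell(x)}$; such an index exists because $\sum_i x_i=\sum_i f_i(x)=1$ forces at least one coordinate in the support of $x$ not to increase, and requiring $\ell(x)\in\support(x)$ makes $\ell$ a valid Sperner labeling on every face of $\Simplex^n$. Taking the $m$-th barycentric subdivision of $\Simplex^n$, whose mesh tends to $0$, and applying Sperner's lemma produces for each $m$ a fully labeled cell with vertices $v^0_m,\dots,v^n_m$ where $\ell(v^i_m)=i$, so that the $i$-th barycentric coordinate of $v^i_m$ is at least $f_i(v^i_m)$. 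Because the cell diameters tend to $0$, a compactness argument lets me pass to a subsequence along which all $n+1$ vertex sequences converge to a common limit $x^\ast$; continuity of $f$ then yields $f_i(x^\ast)\le x^\ast_i$ for every $i$, and since both $\sum_i f_i(x^\ast)$ and $\sum_i x^\ast_i$ equal $1$, all these inequalities must be equalities, i.e.\ $f(x^\ast)=x^\ast$.

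The step I expect to be the real obstacle is Sperner's lemma itself, which I would establish by induction on $n$ via a parity/double-counting argument: count the incidences between cells of the subdivision and their boundary faces labeled $\{0,\dots,n-1\}$, note that faces interior to $\Simplex^n$ are incident to an even number of cells while the inductive hypothesis controls the parity of such faces on the facet $\{x_n=0\}$, and deduce that the number of fully labeled cells is odd and in particular nonzero. The remaining ingredients are routine but should be spelled out in a full write-up: continuity of the metric projection $r$, the fact that $\ell$ restricted to a boundary face of $\Simplex^n$ uses only labels belonging to that face, and the simultaneous convergence of all $n+1$ vertex sequences to the same point. (One could instead run the simplex step through the K-K-M lemma, or via a smooth ``no retraction onto the boundary'' argument using Sard's theorem, but the Sperner route is the most self-contained.)
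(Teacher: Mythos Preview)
The paper does not supply its own proof of \cref{thm:brouwer}: the theorem is stated in the Preliminaries as a classical result cited to Brouwer and used throughout as a black box, so there is no argument in the paper to compare against. Your write-up is a correct and standard proof. The reduction step is fine (nearest-point projection onto a nonempty closed convex set is $1$-Lipschitz, hence continuous, and any fixed point of $g=f\circ r$ lands in $A$ so is fixed by $f$), the Sperner labeling is well-defined (if $f_i(x)>x_i$ for every $i\in\support(x)$ then summing over the support would exceed $1$, contradicting $\sum_i f_i(x)=1$), the labeling is proper on faces by construction, and the compactness/limit passage is the usual one. The parity argument you sketch for Sperner's lemma is the standard inductive double-count and goes through without surprises. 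Nothing to fix.
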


\noindent The next fixed point theorem that we will present is \emph{Kakutani's fixed point theorem} \citep{kakutani1941generalization}, a generalization of Brouwer's fixed point theorem. Importantly, this fixed point theorem applies to \emph{correspondences} rather than functions; we provide the definition of a correspondence below.

\begin{definition}[Correspondence]
  A correspondence $f$ (or multi-valued function) between sets $A$ and
  $B$ is a function $f \colon A \rightarrow \mathcal{P}(B)$, where $\mathcal{P}(B)$ denotes the powerset of $B$. We denote
  this by $f \colon A \rightrightarrows B$. In case $f(a)=\{b\}$ we
  use the function notation $f(a)=b$ for notational simplicity. In a
  similar way, if for all $a \in A$ we have $\abs{f(a)}=1$, we may think of
    $f$ simply as a function $f \colon A \rightarrow B$.
\end{definition}

\noindent For the statement of the theorem, we need the definitions of \emph{upper} and \emph{lower hemicontinuous} correspondences.

\begin{definition}[Upper and lower hemicontinuous correspondence]
  Let $A \subseteq \RR^n$, $B \subseteq \RR^m$, and
  $f\colon A \rightrightarrows B$.
  \begin{enumerate}   
  \item $f$ is upper hemicontinuous (uhc) at $a \in A$ if and only if
    for all open sets $V \subseteq B$ for which $f(a) \subseteq V$
    there is an open set $U \subseteq A$ with $a \in U$ such that
    $f(x) \subseteq V$ for all $x \in U$.
  \item $f$ is lower hemicontinuous (lhc) at $a \in A$ if and only if
    for all open sets $V \subseteq B$ for which
    $f(a) \cap V \neq \emptyset$ there is an open set $U \subseteq A$
    with $a \in U$ such that $f(x) \cap V \neq \emptyset$ for all
    $x \in U$.
  \end{enumerate}
  We say that $f$ is uhc (lhc) if $f$ is uhc (lhc) at every $a \in
  A$. If $f$ is both uhc and lhc we simply say that $f$ is continuous.
\end{definition}

\noindent We are now ready to state the fixed point theorem.

\begin{theorem}[Kakutani's Fixed Point Theorem \citep{kakutani1941generalization}] \label{thm:kakutani}
  Let $A \subseteq \RR^n$ be a nonempty, compact, and convex set. Let
  $f \colon A \rightrightarrows A$ be uhc as well as nonempty-, compact- and convex-valued. Then there is $x \in A$
  such that $x \in f(x)$.
\end{theorem}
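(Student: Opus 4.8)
The plan is to derive \Cref{thm:kakutani} from Brouwer's theorem (\Cref{thm:brouwer}) by the classical simplicial-approximation argument. First I would reduce to the case where the domain is a simplex. Embed $A$ in a full-dimensional simplex $\Delta \subseteq \RR^n$ with $A \subseteq \Delta$, let $\pi \colon \RR^n \to A$ be the nearest-point projection onto the compact convex set $A$ (which is continuous, indeed $1$-Lipschitz), and consider the correspondence $g \colon \Delta \rightrightarrows \Delta$ given by $g(x) = f(\pi(x))$. It inherits nonempty-, compact- and convex-valuedness from $f$, and upper hemicontinuity from the uhc of $f$ and the continuity of $\pi$. Moreover any $x$ with $x \in g(x)$ automatically lies in $A$, since $g(x) \subseteq A$; hence $\pi(x) = x$ and $x \in f(x)$. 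So it suffices to prove the theorem when $A = \Delta$.

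Next I would build continuous single-valued approximants of the correspondence. Fix a sequence of triangulations $(T_k)_{k \ge 1}$ of $\Delta$ — for concreteness, iterated barycentric subdivisions — whose mesh tends to $0$. For each $k$ and each vertex $v$ of $T_k$ choose an arbitrary point $y_v \in f(v) \subseteq \Delta$, and let $f_k \colon \Delta \to \Delta$ be the map that takes the value $y_v$ at every vertex $v$ and is affine on each simplex of $T_k$. This is well defined (two adjacent simplices agree on their shared vertices, hence on the affine extension over the common face), continuous, and maps $\Delta$ into itself because $\Delta$ is convex and each $y_v \in \Delta$. By Brouwer's theorem (\Cref{thm:brouwer}), $f_k$ has a fixed point $x_k = f_k(x_k)$. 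Writing $x_k$ in barycentric coordinates of the simplex $\sigma_k = \conv\{v_0^k, \dots, v_n^k\}$ of $T_k$ that contains it, say $x_k = \sum_{i=0}^n \lambda_i^k v_i^k$, we obtain $x_k = f_k(x_k) = \sum_{i=0}^n \lambda_i^k y_{v_i^k}$ with $y_{v_i^k} \in f(v_i^k)$ for each $i$.

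Then I would pass to the limit. The points $x_k$, the coefficient vectors $(\lambda_0^k, \dots, \lambda_n^k)$, and the points $y_{v_i^k}$ all range over compact sets, so after passing to a subsequence we may assume $x_k \to x^*$, $\lambda_i^k \to \lambda_i^*$, and $y_{v_i^k} \to y_i^*$ for each $i$; since $\operatorname{mesh}(T_k) \to 0$, we also have $v_i^k \to x^*$ for every $i$. Using the standard fact that an upper hemicontinuous, compact-valued correspondence into a compact space has closed graph, from $v_i^k \to x^*$, $y_{v_i^k} \in f(v_i^k)$ and $y_{v_i^k} \to y_i^*$ we conclude $y_i^* \in f(x^*)$ for each $i$. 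Taking limits in $x_k = \sum_{i=0}^n \lambda_i^k y_{v_i^k}$ gives $x^* = \sum_{i=0}^n \lambda_i^* y_i^*$, a convex combination of points of $f(x^*)$; since $f(x^*)$ is convex, $x^* \in f(x^*)$, which proves the theorem.

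I expect the only genuinely non-routine ingredient to be the closed-graph step: one must invoke upper hemicontinuity as defined in the excerpt, together with compact-valuedness and the compactness of $A$, to push the memberships $y_{v_i^k} \in f(v_i^k)$ through to the limit (this is where the correspondence-specific hypotheses, rather than Brouwer's theorem, do the work). The reduction to a simplex, the verification that $f_k$ is a well-defined continuous self-map of $\Delta$, the bookkeeping with barycentric coordinates, and the classical bound $\operatorname{mesh}(T_k) \le (n/(n+1))^k \operatorname{diam}(\Delta)$ guaranteeing the mesh tends to $0$ are all routine.
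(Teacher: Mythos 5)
The paper does not prove this statement at all: Kakutani's theorem is imported as a classical black-box result with a citation to \citet{kakutani1941generalization}, so there is no in-paper argument to compare against. Your proof is the standard derivation from Brouwer's theorem via simplicial approximation (essentially Kakutani's original argument), and it is correct: the reduction to a simplex through the nearest-point projection, the piecewise-affine approximants built from selections $y_v \in f(v)$, the compactness/subsequence extraction, and the closed-graph consequence of upper hemicontinuity plus compact-valuedness all work as you describe. The only bookkeeping point worth a sentence in a written-up version is that when $x_k$ lies on a lower-dimensional face of $T_k$ you should either fix one $n$-simplex containing it or pad the barycentric representation with zero coefficients, so that the identity $x_k = \sum_i \lambda_i^k y_{v_i^k}$ is stated with respect to a single simplex; this changes nothing in the limit argument. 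It is worth noting, in the spirit of the paper, that this proof relies precisely on the kind of discretize-and-take-limits reasoning that cannot be transplanted into a \FIXP-membership argument, which is why the paper replaces uses of Kakutani by explicit Brouwer functions built with the OPT-gate rather than by invoking this classical equivalence.
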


\noindent 
Kakutani's fixed point theorem is often used in conjunction with the following
theorem, called \emph{the maximum theorem}, proven in 1963 by \citet{berge1997topological}.
For this to be possible, it is additionally needed that the
maximizer-correspondence $g^*$ is convex-valued. This is in
particular ensured if $f$ is quasi-concave in its second variable and
$g$ is convex-valued.

\begin{theorem}[Berge's Maximum Theorem \citep{berge1997topological}]\label{thm:berge}
  Let $A \subseteq \RR^n$ and $B \subseteq \RR^m$. Let
  $f \colon A \times B \rightarrow \RR$ be continuous and
  $g \colon A \rightrightarrows B$ continuous as well as nonempty- and
  compact-valued. Define $f^* \colon A \rightarrow \RR$ and
  $g^* \colon A \rightrightarrows B$ by
  $f^*(a) = \max_{b \in g(a)} f(a,b)$ and
  $g^*(a) = argmax_{b \in g(a)} f(a,b)$. Then $f^*$ is continuous and
  $g^*$ is uhc as well as nonempty- and compact-valued.
\end{theorem}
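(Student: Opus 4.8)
The plan is to establish the three pointwise assertions about $g^*$ first (and with them the well-definedness of $f^*$), then prove the continuity of $f^*$, and finally bootstrap from the continuity of $f^*$ to the upper hemicontinuity of $g^*$; since $A \subseteq \RR^n$ and $B \subseteq \RR^m$ are metric, I will run all continuity/hemicontinuity arguments sequentially. Fix $a \in A$. Because $g(a)$ is nonempty and compact and $b \mapsto f(a,b)$ is continuous, the maximum $f^*(a) = \max_{b \in g(a)} f(a,b)$ is attained, so $f^*(a)$ is well-defined and $g^*(a) \neq \emptyset$. Moreover $g^*(a) = \{\, b \in g(a) : f(a,b) = f^*(a) \,\}$ is a closed subset of the compact set $g(a)$ — closed since $b_k \in g^*(a)$, $b_k \to b$ forces $b \in g(a)$ and $f(a,b) = \lim_k f(a,b_k) = f^*(a)$ — hence compact. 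This settles the nonempty- and compact-valued part.

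The heart of the argument is the continuity of $f^*$, which I will prove by showing, for every sequence $a_k \to a$, both $\liminf_k f^*(a_k) \geq f^*(a)$ (lower semicontinuity) and $\limsup_k f^*(a_k) \leq f^*(a)$ (upper semicontinuity). For lower semicontinuity I use lower hemicontinuity of $g$: pick $b \in g^*(a)$; unwinding the lhc definition against the balls $B(b,1/m)$ and diagonalising produces $b_k \in g(a_k)$ with $b_k \to b$, and then $f^*(a_k) \geq f(a_k,b_k) \to f(a,b) = f^*(a)$ by continuity of $f$. For upper semicontinuity I use upper hemicontinuity of $g$: choose $b_k \in g^*(a_k)$ realising (along a subsequence) $\limsup_k f^*(a_k)$; applying uhc at $a$ to the open $1$-neighbourhood of $g(a)$ traps all but finitely many $b_k$ in a bounded set, so a further subsequence converges, $b_{k_j} \to b$; applying uhc to the $\eps$-neighbourhoods of $g(a)$ and using that $g(a)$ is closed yields $b \in g(a)$; hence $f^*(a_{k_j}) = f(a_{k_j},b_{k_j}) \to f(a,b) \leq f^*(a)$. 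Combining the two inequalities gives $f^*(a_k) \to f^*(a)$.

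Finally, for uhc of $g^*$, suppose it fails: there is a point $a$, an open set $V \supseteq g^*(a)$, a sequence $a_k \to a$, and points $b_k \in g^*(a_k) \setminus V$. Exactly the subsequence-extraction from the previous paragraph (uhc plus compact-valuedness of $g$) yields $b_{k_j} \to b$ with $b \in g(a)$, and $b \notin V$ because $B \setminus V$ is closed. Continuity of $f$ gives $f(a_{k_j},b_{k_j}) \to f(a,b)$, while $f(a_{k_j},b_{k_j}) = f^*(a_{k_j}) \to f^*(a)$ by the continuity of $f^*$ just proved; so $f(a,b) = f^*(a)$, i.e.\ $b \in g^*(a) \subseteq V$, a contradiction. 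I expect the main obstacle to be the upper-semicontinuity half of the continuity of $f^*$, and the identical sub-argument reused for uhc of $g^*$: one must produce convergent subsequences of (near-)maximisers $b_k$ and certify that their limit lands back in $g(a)$ — which is precisely where both upper hemicontinuity and the closedness/compactness of the values of $g$ are indispensable (lower hemicontinuity alone, or pointwise compactness without the uhc ``trapping'' of nearby values in a common bounded set, does not suffice).
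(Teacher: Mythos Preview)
The paper does not prove this theorem; it is stated as a classical result and attributed to \citet{berge1997topological}, so there is no ``paper's own proof'' to compare against. Your argument is correct and is essentially the standard textbook proof of Berge's Maximum Theorem: pointwise compactness gives nonemptiness and compactness of $g^*(a)$; lower hemicontinuity of $g$ yields lower semicontinuity of $f^*$ via approximating maximisers; upper hemicontinuity together with compact-valuedness of $g$ yields upper semicontinuity of $f^*$ via subsequence extraction; and the same extraction, combined with the now-established continuity of $f^*$, gives upper hemicontinuity of $g^*$.
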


\subsection{The Class FIXP}

As we said in the introduction, the class FIXP captures the complexity of real-valued search problems associated with Brouwer's fixed point theorem. We provide the formal definition of the class below.\medskip

\noindent A search problem $\Pi$ with real-valued search space is defined by associating to
any input instance $I$ (encoded as a string over a finite alphabet $\Sigma$)
a search space $D_I\subseteq\RR^{d_I}$ and a set of solutions $\Sol(I)$. 
We assume there is a polynomial time algorithm that given $I$ computes a description of $D_I$.
In order to define $\FIXP$, we first introduce a set of basic $\FIXP$-problems
corresponding to the formulation of Brouwer's fixed point theorem. Afterwards,
we explain how the class is closed with respect to a certain type of reductions. We start with the definition of an algebraic circuit.

\begin{definition}[Algebraic Circuit]\label{def:algebraic-circuit}
An algebraic circuit $C$ is a circuit using gates in $\{+,-,\ast,\div,\max,\min\}$ as well as rational constants. We let $\size(C)$ denote the size of the circuit, including the description of the rational constants.
\end{definition}

\noindent Next, we define the notion of a \emph{basic \FIXP\ problem}.

\begin{definition}[Basic \FIXP\ problem]\label{def:basic-fixp}
A search problem $\Pi$ is a basic $\FIXP$ problem
if every instance $I$ describes a nonempty compact convex domain
$D_I$ described by a set of linear inequalities with rational coefficients and a continuous map $F_I\colon D_I\rightarrow D_I$ given
by an algebraic circuit $C_I$,\footnote{Note that given an algebraic circuit, it is not clear how to check that it is indeed well-defined (i.e., does not divide by zero), and that it indeed represents a function $F_I$ with $F_I(D_I) \subseteq D_I$. For that reason, we assume that it is promised that the algebraic circuit $C_I$ indeed satisfies these two properties (in other words, we only consider instances where this is the case). Furthermore, note that as long as the circuit is well-defined, the function will be continuous, since all gates perform continuous operations.} and the solution set is $\Sol(I) = \{x\in D_I\mid
F_I(x)=x\}$. 
\end{definition} 

\noindent We now discuss reductions between search problems. Let $\Pi$ and $\Gamma$ be search problems with real-valued search space. A $\emph{many-one reduction}$ 
from $\Pi$ to $\Gamma$ is a pair of maps $(f,g)$. The instance mapping $f$ maps
instances $I$ of $\Pi$ to instances $f(I)$ of $\Gamma$, and for any solution 
$y\in\Sol(f(I))$ the solution mapping $g$ maps the pair $(I,y)$ to a solution $g(I,y)\in\Sol(I)$
of $\Pi$. In order to avoid meaningless reductions, it is required that $\Sol(f(I))\neq\emptyset$
if $\Sol(I)\neq\emptyset$. 
We require that the instance mapping $f$ is computable in polynomial time.
\citet{SICOMP:EtessamiY10} defined the notion of \emph{SL}-reductions where
the solution mapping $g$ is \emph{separable linear}. This means there 
exists a map $\pi\colon\{1,\dots, d_I\}\rightarrow \{1,\dots, d_{f(I)}\}$
and rational constants $a_i,b_i$, $i=1,\dots, d_I,$ such that for $y\in\Sol(f(I))$
one has that $x=g(I,y)$ is given by $x_i = a_iy_{\pi(i)}+b_i$ for all~$i$.\footnote{
This allows one, for example, to project away auxiliary variables.} The map $\pi$ and the constants $a_i,b_i$ should be computable from $I$ in polynomial time.\medskip

\noindent We are now ready to define the class \FIXP.

\begin{definition}[\FIXP]\label{def:FIXP}
The class $\FIXP$ consists of all search problems with real-valued search space that SL-reduce
to a basic $\FIXP$ problem for which the domain $D_I$ is a convex polytope
described by a set of linear inequalities with rational coefficients and the function
$F_I$ is defined by an algebraic circuit $C_I$. 
\end{definition}

\noindent In our definition of basic $\FIXP$ problems, we have assumed that the domains of the functions are polytopes. However, the definition of the class $\FIXP$ is robust to modifications of the definition of basic $\FIXP$ problems \citep{SICOMP:EtessamiY10}. For example, the basic $\FIXP$ problems could have been defined as having for instance unit-balls
$B_p^d$ as domains, $p\in [0,\infty]$, and one could have allowed the functions $F_I$
to be computed by circuits over $\{+,-,\ast,\div,\max,\min,\sqrt[k]{}\}$. 

Note that if we only allow the gates $\{+,-,\ast c,\max,\min\}$ (where ``$\ast c$'' denotes multiplication by a constant) in the definition of $\FIXP$, then we instead obtain the class $\linearFIXP$. It is known that $\linearFIXP = \PPAD$ \citep{SICOMP:EtessamiY10}. In other words, the difference between $\FIXP$ and $\PPAD$ comes from the added power of the general multiplication gate.

\section{The OPT-gate}\label{sec:OPT-gate}

In this section, we present a new technique for proving membership in \FIXP. Namely, we introduce the OPT-gate: a gate that can essentially solve Linear Programs, under some minor conditions. This gate can be used like any other gate for the purpose of proving \FIXP-membership. We begin with a motivating example in \cref{sec:OPT-gate-Nash} and then present some obstacles to constructing the OPT-gate, which lead us to define the notion of a pseudogate in \cref{sec:OPT-gate-pseudogate}. Using this notion we show how to construct an OPT-gate that solves LPs (\cref{sec:OPT-gate-LP}) and even more general convex programs (\cref{sec:OPT-gate-convex}).

\subsection{A motivating example: Nash equilibrium computation}\label{sec:OPT-gate-Nash}

To provide an example of how the OPT-gate could be used, we consider the classical problem of computing a Nash equilibrium in a normal form game.

\paragraph*{Normal form game.}
There are $n$ players and every player $i \in [n] := \{1, 2, \dots, n\}$ has a finite set of pure strategies $S_i=[m_i]$ and a payoff function $u_i \colon S \to \RR$, where $S := S_1 \times \dots \times S_n$. A \emph{mixed strategy} of player $i$ is a probability distribution on $S_i$. We let $\Sigma_i := \Delta(S_i)$ denote the set of all such distributions, i.e., $\Sigma_i := \{y \in \RRnn^{m_i}: \sum_j y_j = 1\}$. In other words, $\Sigma_i$ is the $(m_i-1)$-dimensional unit simplex. A \emph{mixed strategy profile} is a vector $x \in \Sigma := \Sigma_1 \times \dots \times \Sigma_n$, where $x_i \in \Sigma_i$ is the mixed strategy played by player $i$ in the strategy profile $x$. For $j \in S_i = [m_i]$, the $j$th coordinate of $x_i$ is denoted $x_{i,j}$ and it corresponds to the probability that player $i$ plays its pure strategy $j$. The payoff function $u_i$ can be extended to all mixed strategy profiles to obtain the \emph{expected} payoff function $\tilde{u}_i \colon \Sigma \to \RR$ where
\[\tilde{u}_i (x) = \Exp_{j_i \sim x_i} [u_i(j_1,\dots,j_n)] = \sum_{(j_1,\dots,j_n) \in S} x_{1,j_1} \cdots x_{n,j_n} u_i(j_1,\dots,j_n).\]

For a mixed strategy profile $x \in \Sigma$ and a mixed strategy $x_i' \in \Sigma_i$ for player $i$, we let $(x_i',x_{-i}) \in \Sigma$ denote the mixed strategy profile where $x_i$ has been replaced by $x_i'$.
A mixed strategy profile $x \in \Sigma$ is a \emph{Nash equilibrium} if for every player $i$ and every mixed strategy $x_i' \in \Sigma_i$ it holds that $\tilde{u}_i(x) \geq \tilde{u}_i(x_i',x_{-i})$. In other words, no player can improve its expected utility by unilaterally modifying its strategy.

\paragraph*{Computational problem.}
We consider the problem of computing a Nash equilibrium of a normal form game, where the payoff functions $u_i$ are given explicitly, i.e., for every $i \in [n]$ and every $(j_1,\dots,j_n) \in S$, the value $u_i(j_1,\dots,j_n)$ is provided as a rational number. By Nash's theorem such an equilibrium always exists \citep{Nash50}.

The simplest proof of Nash's theorem uses Kakutani's fixed point theorem (\cref{thm:kakutani}). However, prior to our work, it was not known how to use this proof to prove membership of the computational problem in \FIXP. Instead, the membership in \FIXP\ was shown by relying on an alternative proof of Nash's theorem that uses Brouwer's fixed point theorem \citep{SICOMP:EtessamiY10}.

Now assume for an instant that we allow an extra gate in the definition of \FIXP, namely the OPT-gate: a gate that can solve a Linear Program (LP). To be more precise, assume that the gate takes as input the description of an LP of the form
\begin{equation*}
\begin{aligned}
\mbox{maximize}\quad & c^\transpose x\\
\mbox{subject to}\quad & A x \leq b
\end{aligned}
\end{equation*}
namely, it takes as input $c,A,b$ and it outputs an optimal solution of the LP.

If such a gate was allowed in the construction of a \FIXP-circuit, then the \FIXP-membership of the Nash problem would essentially follow immediately. Indeed, note that at a Nash equilibrium $x$, every player maximizes its utility given the mixed strategies chosen by the other players. In other words, the mixed strategy $x_i$ played by player $i$ is an optimal solution of the following LP, where the variables are $y \in \RR^{m_i}$:
\begin{equation}\label{eq:LP-Nash}
\begin{aligned}
\mbox{maximize}\quad & \tilde{u}_i(y,x_{-i})\\
\mbox{subject to}\quad & \sum_{j=1}^{m_i} y_j = 1\\
& y_j \geq 0 \qquad j=1,\dots,m_i
\end{aligned}
\end{equation}
Note that this is indeed an LP, since $\tilde{u}_i(y,x_{-i})$ is linear in $y$.

In more detail, the \FIXP-circuit $F$ for this problem would be constructed as follows. On input $x=(x_1,\dots,x_n)$, it outputs $F(x)=(F_1(x),\dots,F_n(x))$, where $F_i(x) \in \RR^{m_i}$ is an optimal solution of the corresponding LP \eqref{eq:LP-Nash}. Since the LP \eqref{eq:LP-Nash} can be put in the form needed for the hypothetical OPT-gate above by replacing the equality constraint by two inequality constraints, we can use the hypothetical gate for solving an LP for this. In more detail, the inputs to the gate will be $A$ and $b$ that encode the inequality constraints, and the vector $c$ for the objective function, which will depend on the values of the other inputs $x_{-i}$. Clearly, any fixed point of $F$ is a Nash equilibrium of the game.

This simple example already shows how such a gate could make some \FIXP-membership results very easy to prove. Importantly, the technique also feels very ``natural'', because it can be applied almost immediately given the description of the problem, without the need to reformulate the problem in any way. Indeed, in this example, the \FIXP-membership is essentially immediately obtained from the simple proof of Nash's theorem based on Kakutani's fixed point theorem.

Unfortunately, this ``ideal'' gate described above is in fact too good to be true. Indeed, there are some fundamental obstacles to constructing such a gate using the standard gates allowed in \FIXP-circuits.

\subsection{Pseudogates: Circumventing obstacles to the construction of an OPT-gate}\label{sec:OPT-gate-pseudogate}

We consider the task of constructing a gate that solves LPs. To be more precise, we would like to use the standard algebraic gates allowed in a \FIXP-circuit to construct, for any $n \in \NN$ and $m \in \NN_0$ (where $\NN_0 = \NN \cup \{0\}$), a new gate $G_{n,m}$ that takes as input $c \in \RR^n$, $A \in \RR^{m \times n}$ and $b \in \RR^m$, and outputs an optimal solution to the following LP:
\begin{equation}\label{eq:LP-initial-form}
\begin{aligned}
\mbox{maximize}\quad & c^\transpose x\\
\mbox{subject to}\quad & A x \leq b
\end{aligned}
\end{equation}

\begin{obstacle}
Any function mapping the description of an LP to an optimal solution of the LP cannot be continuous everywhere. This holds even for very simple LPs.
\end{obstacle}

As an example for this obstacle, consider the following very simple LP:
\begin{equation}\label{eq:LP-discontinuous}
\begin{aligned}
\mbox{maximize}\quad & x_1 v_1 + x_2 v_2\\
\mbox{subject to}\quad & x_1 + x_2 = 1\\
& x_1,x_2 \geq 0
\end{aligned}
\end{equation}
where the variables are $x_1,x_2$, and $v_1,v_2 \in \RR$ are external parameters. Clearly, our gate should be able to solve the following task: given $v_1,v_2$ as input, output any optimal solution of the LP. However, note that this function is \emph{not} continuous: when $v_1>v_2$ it outputs $(x_1,x_2)=(1,0)$, but when $v_1<v_2$ it outputs $(x_1,x_2)=(0,1)$. Thus, there is no hope of implementing a gate computing this function by using the gates allowed in a \FIXP-circuit, which are all continuous.

\paragraph*{Pseudogates: The power of fixed point computation.}
The crucial observation that allows us to go beyond this impossibility result is the following: when the gate is used inside a \FIXP-circuit $F$, it does not have to work correctly for all inputs $x$ to $F$; it suffices if it works correctly whenever the input to $F$ is a fixed-point $x^*$ of $F$. Indeed, in order to prove the membership of some problem in \FIXP\ using $F$, we have to show that any fixed point $x^*$ of $F$ yields a solution to the problem. Thus, we only care about the behavior of the gate when the input to $F$ is some fixed point $x^*$. Of course, the gate should remain well-defined for all inputs $x$, namely not divide by zero, etc.

This observation essentially allows us to use an additional---very powerful---tool in the construction of the gate: fixed point computation. In order to illustrate this point, we show how this tool can be used to construct a ``gate'' that computes the so-called \emph{Heaviside step function}. For our purposes, we define the Heaviside function as the correspondence $\Heaviside \colon \RR \rightrightarrows [0,1]$ with
\[
\Heaviside(x) = \begin{cases}
1 & \text{ if } x>0 \\
[0,1] & \text{ if } x=0 \\
0 & \text{ if } x<0 
\end{cases} \enspace .
\]
We would like a gate that on input $x$, outputs any $y \in \Heaviside(x)$. Clearly, a gate that is constructed using only the standard \FIXP-gates cannot compute $\Heaviside$, which is discontinuous at $x=0$. Indeed, note that the Heaviside function is closely related to the example LP \eqref{eq:LP-discontinuous} above. If we had a gate computing the Heaviside function, then by computing $y \in H(v_1-v_2)$ and then outputting $(y,1-y)$, we would simulate a gate solving \eqref{eq:LP-discontinuous}. Similarly, if we had a gate solving the LP \eqref{eq:LP-discontinuous}, then by computing a solution $(x_1,x_2)$ to the LP \eqref{eq:LP-discontinuous} with parameters $(v_1,v_2) = (x,0)$, and outputting $x_1$, we would simulate a gate for $\Heaviside$.

Let us now see how we can construct a ``gate'' computing the Heaviside function $\Heaviside$. Consider the function $G \colon \RR \times [0,1] \to [0,1]$ given by $G(x,y) = \min(1,\max(0,x+y))$. Let us examine the fixed points of $G$, where we think of $x$ as being fixed or an external parameter. If $x > 0$, then the fixed point condition $G(x,y) = y$ implies that $y$ must be equal to $1$. If $x < 0$, then the fixed point condition implies that $y=0$. Finally, when $x = 0$, the fixed point condition implies that $y$ can take any value in $[0,1]$. In particular, note that we always have $y \in \Heaviside(x)$.

How can we use this to prove membership in \FIXP? Imagine that we can reduce our problem of interest to the problem of finding a fixed point of a correspondence $F \colon D \rightrightarrows D$, i.e., a point $x \in D$ with $x \in F(x)$. Imagine, further, that we can construct a circuit computing $F$ that uses the standard gates, but also a gate computing $\Heaviside$. Then, we can construct a \FIXP-circuit $\tilde{F}$ for this problem by replacing the gate for $\Heaviside$ in $F$ by the function $G$ defined above. In more detail, if we want to use a gate computing $\Heaviside$ with some input $x$, we instead compute $G(x,y)$, where $y$ is an additional input to $\tilde{F}$. We also add this value $G(x,y)$ as an additional output to $\tilde{F}$ (namely, the output corresponding to the new input $y$). As a result, we obtain a FIXP-circuit $\tilde{F} \colon D \times [0,1] \to D \times [0,1]$ that only uses the standard gates and is such that any fixed point $(z,y)$ of $\tilde{F}$ satisfies $z \in F(z)$. In other words, finding a fixed point of the correspondence $F$ reduces to finding a fixed point of the function $\tilde{F}$, which is a standard \FIXP-circuit. In the case where $F$ makes use of multiple gates computing $\Heaviside$, every occurrence of the gate will be replaced by the construction above using $G$. In particular, if the gate for $\Heaviside$ is used $\ell$ times, then we will obtain $\tilde{F} \colon D \times [0,1]^\ell \to D \times [0,1]^\ell$ such that $\tilde{F}(z,y_1,\dots,y_\ell) = (z,y_1,\dots,y_\ell) \implies z \in F(z)$.

As a result, when constructing a \FIXP-circuit for some problem, we can assume that we also have access to a gate computing $\Heaviside$. However, one should keep in mind that the gate is only guaranteed to work correctly at a fixed point of the circuit. In order to stress this limitation, we say that we have a \emph{pseudogate} computing $\Heaviside$. Note that for the purpose of proving membership in \FIXP, a pseudogate is just as good as a normal gate. We now present these ideas more formally.

\begin{definition}\label{def:fixed-point-repr}
Let $A \subseteq \RR^n$, and let $B \subset \RR^\ell$ be a nonempty, compact, and convex set. For any continuous function $G \colon A \times B \to \RR^m \times B$ we let $\fix_B[G]$ denote the correspondence induced by $G$ with fixed-point constraints on $B$. Formally, the correspondence $\fix_B[G] \colon A \rightrightarrows \RR^m$ is defined as
\[x \mapsto \{z \in \RR^m : \exists y \in B \quad G(x,y)=(z,y)\}.\]
When $f = \fix_B[G]$ we will say that $G$ is a fixed-point representation of $f$. We will often have $B=[0,1]^\ell$ for some $\ell \in \NN$, in which case we will use $\fix_\ell$ as an abbreviation for $\fix_{[0,1]^\ell}$.
\end{definition}

\begin{example}
The function $G_{\Heaviside} \colon \RR \times [0,1] \to \RR \times [0,1]$, $(x,y) \mapsto (y,\min(1,\max(0,x+y)))$, is a fixed-point representation of the Heaviside function $\Heaviside$, i.e., $\fix_1[G_{\Heaviside}] = \Heaviside$. Now let us consider a function similar to the Heaviside function, but which will require us to have the first output of $G$ be something other than just $y$ itself. Let $f \colon \RR \rightrightarrows \RR$ be defined by
\[
f(x) = \begin{cases}
x+1 & \text{ if } x>0 \\
[0,1] & \text{ if } x=0 \\
x & \text{ if } x<0 
\end{cases} \enspace .
\]
Then a fixed-point representation of $f$ is given by
\[G_f \colon \RR \times [0,1] \to \RR \times [0,1], \quad (x,y) \mapsto (x+y,\min(1,\max(0,x+y)))\]
i.e., $\fix_1[G_f] = f$.
\end{example}

\begin{definition}\label{def:pseudogate}
Let $A \subseteq \RR^n$ and let $f \colon A \rightrightarrows \RR^m$ be a correspondence. We say that there is a pseudogate computing $f$ if there exists $\ell \in \NN_0$ and an algebraic circuit computing $G \colon A \times [0,1]^\ell \to \RR^m \times [0,1]^\ell$ such that for all $x \in A$, $\fix_\ell[G](x) \subseteq f(x)$.
\end{definition}
The algebraic circuit computing $G$ can use any of the standard gates allowed in \FIXP-circuits, and should be well-defined, in the sense that it never divides by zero, never takes the square root of a negative number, etc.
Note that by Brouwer's fixed point theorem, $\fix_\ell[G](x)$ is never empty. Thus, if there is a pseudogate computing some correspondence $f$, then $f$ must be nonempty-valued. For a discussion about why \cref{def:pseudogate} uses ``$\fix_\ell[G](x) \subseteq f(x)$'' instead of ``$\fix_\ell[G](x) = f(x)$'' see \cref{rem:pseudogate-def} at the end of the section.

Using this terminology we can now formally state:

\begin{lemma}\label{lem:Heaviside-gate}
There exists a pseudogate computing the Heaviside function $\Heaviside \colon \RR \rightrightarrows [0,1]$.
\end{lemma}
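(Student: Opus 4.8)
The plan is to take $\ell = 1$ and $B = [0,1]$, and to use as the fixed-point representation the map already singled out in the Example preceding the lemma, namely $G_{\Heaviside} \colon \RR \times [0,1] \to \RR \times [0,1]$ given by $(x,y) \mapsto (y,\ \min(1,\max(0,x+y)))$. The first thing to observe is that $G_{\Heaviside}$ is indeed computed by an algebraic circuit: it uses only the gates $+$, $\max$, $\min$ and the rational constants $0$ and $1$, and it performs no division and no root extraction, so it is well-defined on all of $\RR \times [0,1]$ as required by \cref{def:pseudogate}. It therefore remains to check that $\fix_1[G_{\Heaviside}](x) \subseteq \Heaviside(x)$ for every $x \in \RR$ (in fact we will get equality, which of course suffices).

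Next I would unwind the definition of $\fix_1$ from \cref{def:fixed-point-repr}. Writing $\fix_1 = \fix_{[0,1]}$, a value $z$ lies in $\fix_1[G_{\Heaviside}](x)$ iff there is some $y \in [0,1]$ with $G_{\Heaviside}(x,y) = (z,y)$; looking at the two coordinates of $G_{\Heaviside}$ this is equivalent to: $z = y$ and $y = \min(1,\max(0,x+y))$. Hence $z \in \fix_1[G_{\Heaviside}](x)$ iff $z \in [0,1]$ and $z = \min(1,\max(0,x+z))$. So the whole lemma reduces to analysing, for fixed $x$, the solutions $z \in [0,1]$ of the scalar equation $z = \min(1,\max(0,x+z))$, and checking they all lie in $\Heaviside(x)$.

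The core of the argument is then a short case split on the sign of $x$. If $x > 0$, then for any $z \in [0,1]$ we have $x+z \geq x > 0$, so $\max(0,x+z) = x+z$ and the equation becomes $z = \min(1,x+z)$; the branch $z = x+z$ would force $x = 0$, so the only possibility is $z = 1$, which indeed satisfies the equation, and $1 \in \Heaviside(x) = \{1\}$. If $x < 0$, then for any $z \in [0,1]$ we have $\max(0,x+z) \leq \max(0,1+x) = 1+x < 1$, so $\min(1,\max(0,x+z)) = \max(0,x+z)$ and the equation becomes $z = \max(0,x+z)$; the branch $z = x+z$ again forces $x = 0$, so the only possibility is $z = 0$, which satisfies the equation, and $0 \in \Heaviside(x) = \{0\}$. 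Finally, if $x = 0$, then for $z \in [0,1]$ we have $\min(1,\max(0,z)) = z$, so every $z \in [0,1]$ solves the equation, giving $\fix_1[G_{\Heaviside}](0) = [0,1] = \Heaviside(0)$. In all three cases $\fix_1[G_{\Heaviside}](x) = \Heaviside(x)$, which proves the lemma.

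I do not expect any genuine obstacle here; the only points that need care are (i) confirming that $G_{\Heaviside}$ is everywhere well-defined as an algebraic circuit (so that it is a legitimate pseudogate and \cref{thm:brouwer} guarantees $\fix_1[G_{\Heaviside}](x)\neq\emptyset$, consistent with $\Heaviside$ being nonempty-valued), and (ii) making sure the clamping interval $[0,1]$ is chosen so that $B$ is compact and convex as \cref{def:fixed-point-repr} demands, while still being exactly wide enough that at $x=0$ the fixed-point set recovers the full interval $\Heaviside(0)=[0,1]$ rather than something smaller. Everything else is the routine scalar case analysis above.
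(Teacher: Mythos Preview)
Your proposal is correct and takes exactly the same approach as the paper: both use the map $G_{\Heaviside}(x,y) = (y,\min(1,\max(0,x+y)))$, note that it is an algebraic circuit built from $+,\max,\min$ and the constants $0,1$, and conclude that $\fix_1[G_{\Heaviside}] = \Heaviside$. The paper simply asserts the last equality as ``easy to see,'' whereas you spell out the three-case sign analysis explicitly.
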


\begin{proof}
The function $G_{\Heaviside} \colon \RR \times [0,1] \to \RR \times [0,1]$, $(x,y) \mapsto (y,\min(1,\max(0,x+y)))$, can be represented by an algebraic circuit using the gates $+, \max, \min$ and rational constants $0$ and $1$. Furthermore, it is easy to see that $\fix_1[G_{\Heaviside}]=H$.
\end{proof}

The pseudogate for the Heaviside function will be a crucial building block for the construction of the OPT-gate. In fact, as mentioned above, the pseudogate for $\Heaviside$ essentially immediately yields a pseudogate solving the simple LP \eqref{eq:LP-discontinuous}.

Note that a (potential) pseudogate for our general LP \eqref{eq:LP-initial-form} will necessarily depend on $n$ and $m$, namely the number of variables and constraints, respectively. As a result, we say that a pseudogate solves such an LP, if, given $n,m$ we can, in polynomial time in $n$ and $m$, construct a pseudogate solving the LP for fixed $n$ and $m$.

\bigskip

Can we construct a pseudogate for the general LP \eqref{eq:LP-initial-form}? Unfortunately, there are a few more obstacles.

\begin{obstacle}
A pseudogate cannot solve a general LP without some explicit bound on the feasible region.
\end{obstacle}

As an example, consider the following LP, which corresponds to letting $n=1$, $m=2$, $c=1$, $A = (a, -1)^\transpose$ and $b = (1, 0)$ in \eqref{eq:LP-initial-form}:
\begin{equation*}
\begin{aligned}
\mbox{maximize}\quad & x\\
\mbox{subject to}\quad & a x \leq 1\\
& x \geq 0
\end{aligned}
\end{equation*}
where the variable is $x \in \RR$. The solution to this LP is $x = 1/a$ when $a > 0$, and the LP is unbounded when $a \leq 0$.

Assume that we have a pseudogate solving the LP \eqref{eq:LP-initial-form} and we use it to solve the LP above.
It is reasonable to only demand that the pseudogate solve the LP correctly when $a > 0$. However, the pseudogate---or, to be more precise, the continuous function $G$ implementing it---should be well-defined for \emph{all} $a \in \RR$. In particular, it should never divide by zero or take a square root of a negative number. This is to ensure that the pseudogate can really be used like a normal gate without a second thought.

Unfortunately, this means that no pseudogate can be constructed for LP \eqref{eq:LP-initial-form}. Indeed, by \cref{def:pseudogate}, the existence of such a pseudogate would imply the existence of an algebraic circuit $G \colon \RR \times [0,1]^\ell \to \RR \times [0,1]^\ell$ such that $\fix_\ell[G](a) = \{1/a\}$ for all $a > 0$. In particular, $\fix_\ell[G](a)$ would be unbounded when $a$ tends to $0$ from above. However, this is a contradiction to the continuity of $G$, which says that $G([0,1] \times [0,1]^\ell)$ must be compact and thus, in particular, bounded.

\paragraph*{Explicitly bounded domain.}
This issue can be resolved by introducing an explicit bound on the feasible region, namely by replacing \eqref{eq:LP-initial-form} by:
\begin{equation}\label{eq:LP-bounded}
\begin{aligned}
\mbox{maximize}\quad & c^\transpose x\\
\mbox{subject to}\quad & A x \leq b\\
& x \in [-R,R]^n
\end{aligned}
\end{equation}
where $R \in \RRp$. Note that the notation ``\,$x \in [-R,R]^n$\,'' is used for convenience here; this constraint can equivalently be rewritten as ``\,$-R \leq x_i \leq R, \forall i$\,''.

Importantly, the parameter $R$ is \emph{not} fixed, but is just another input to the gate, like $c$, $A$ and $b$. As a result, this explicit bound is not a significant limitation, since in most applications it is straightforward---or even trivial---to provide such a bound. For example, in the problem of computing a Nash equilibrium, the LP \eqref{eq:LP-Nash} that we used is clearly bounded with $R=1$.

If we are only interested in finding a feasible point of the LP \eqref{eq:LP-bounded}, or equivalently in solving the LP when $c=0$, then indeed there exists a pseudogate for that! However, there is still one last obstacle to constructing a pseudogate that \emph{solves} \eqref{eq:LP-bounded}.

\begin{obstacle}
A pseudogate cannot solve a general LP without some constraint qualification.
\end{obstacle}

A constraint qualification is some property that the constraints must satisfy. Importantly, it is a property of the constraints and not of the feasible region. In other words, when a feasible region can be represented by various different sets of constraints, some of them may satisfy the constraint qualification, and others not.

As an example for this obstacle, consider the following LP:
\begin{equation*}
\begin{aligned}
\mbox{maximize}\quad & x_2\\
\mbox{subject to}\quad & x_1 + a x_2 \leq 0\\
& x_1 \geq 0\\
& x \in [-1,1]^n
\end{aligned}
\end{equation*}
Note that for $a=0$ the optimal solution is $(x_1,x_2)=(0,1)$, while for $a>0$ it is $(x_1,x_2)=(0,0)$.
Clearly, this LP can be expressed in the form \eqref{eq:LP-bounded} by letting $n=2$, $m=2$, and
\[c=\begin{pmatrix}
0 \\
1
\end{pmatrix}, \quad
A=\begin{pmatrix}
1 & a \\
-1 & 0
\end{pmatrix}, \quad
b=\begin{pmatrix}
0 \\
0
\end{pmatrix}, \quad
R=1.\]
Thus, a pseudogate for \eqref{eq:LP-bounded} should in particular correctly solve this LP for any $a \in [0,1]$. According to \cref{def:pseudogate}, this would mean that there exists an algebraic circuit $G \colon [0,1] \times [0,1]^\ell \to \RR^2 \times [0,1]^\ell$ such that $\fix_\ell[G](a) = \{(0,0)\}$ for all $a \in (0,1]$, and $\fix_\ell[G](0) = \{(0,1)\}$. However, this contradicts the continuity of $G$.

Indeed, consider the sequence $(a_n)_n$ where $a_n=1/n > 0$. For any $n \in \NN$, let $y_n \in [0,1]^\ell$ be a fixed point of the function $h \colon [0,1]^\ell \to [0,1]^\ell$, $y \mapsto G_2(a_n,y)$, where $G_2(a_n,y) \in [0,1]^\ell$ denotes the second output of $G$ on input $(a_n,y)$. Recall that such a fixed point must exist by Brouwer's fixed point theorem. Since $(y_n)_n$ is a sequence in the compact set $[0,1]^\ell$, it has a subsequence $(y_{n_k})_k$ that converges to some $y \in [0,1]^\ell$. Note that for all $k \in \NN$ we have $G(a_{n_k},y_{n_k})=((0,0),y_{n_k})$, because $\fix_\ell[G](a_{n_k}) = \{(0,0)\}$. Now, since $a_{n_k} \to 0$, $y_{n_k} \to y$, and by the continuity of $G$, it follows that $G(0,y) = ((0,0),y)$. However, this implies that $(0,0) \in \fix_\ell[G](0)$, a contradiction to $\fix_\ell[G](0) = \{(0,1)\}$.

The issue in this example essentially stems from the fact that, when $a=0$, the two inequality constraints are equivalent to a single equality constraint. In fact, it is possible to construct a pseudogate for \eqref{eq:LP-bounded} that works as long as this does not happen, i.e., as long as the constraint qualification $\{x \in (-R,R)^n: Ax < b\} \neq \emptyset$ holds (where $<$ is componentwise). However, this rules out equality constraints, which we would clearly like our pseudogate to be able to handle, in particular for the Nash problem. To address this issue, in the next section we consider a modified formulation of our LP that allows explicit equality constraints and we show that we can construct a pseudogate that solves it as long as a well-known constraint qualification holds.

In particular, our constraint qualification will require the equality constraints to be linearly independent.
As an example for why the linear independence of the equality constraints is needed, consider the following LP:
\begin{equation*}
\begin{aligned}
\mbox{maximize}\quad & x_2\\
\mbox{subject to}\quad & x_1 + a x_2 = 0\\
& x_1 = 0\\
& x \in [-1,1]^n
\end{aligned}
\end{equation*}
By the same arguments as above, it can be shown that a pseudogate cannot solve this LP correctly for all $a \in [0,1]$.

\begin{remark}\label{rem:pseudogate-def}
The attentive reader might look at the definition of a pseudogate (\cref{def:pseudogate}) and wonder why the condition ``$\fix_\ell[G](x) \subseteq f(x)$'' is not simply replaced by ``$\fix_\ell[G](x) = f(x)$''. Indeed, the pseudogate presented above for the Heaviside function does satisfy the condition with equality. In fact, using Brouwer's fixed point theorem, it is not too hard to show that \emph{any} pseudogate computing $\Heaviside$ will satisfy the condition with equality. However, consider now the following modification of the Heaviside function:
\[
\widehat{\Heaviside}(x) = \begin{cases}
1 & \text{ if } x>1 \\
[0,1] & \text{ if } x \in [0,1] \\
0 & \text{ if } x<0 
\end{cases} \enspace .
\]
Note that the pseudogate $G_{\Heaviside}$ we provided above for $\Heaviside$ is also a pseudogate for $\widehat{\Heaviside}$, but we now have $\fix_1[G_{\Heaviside}](1) \subsetneq \widehat{\Heaviside}(1)$. This begs the question of whether we lose anything by allowing the pseudogate to only compute a subset of the output of the initial correspondence. For the purpose of proving membership in \FIXP\, the answer is no. Ultimately, we only want to show that any fixed point of the circuit that we construct satisfies some conditions (e.g., is a Nash equilibrium). Using a pseudogate that enforces a stronger condition than actually intended will not make this any harder. The important thing to note is that the constructed circuit will always have a fixed point, and thus, even if we use pseudogates that enforce stronger conditions than intended, there is no risk of the conditions being ``too strong.'' Thus, in this context, there is no reason to require equality in \cref{def:pseudogate}, since we only really care about the containment in one direction. Is there any setting where we would care about having equality? The only setting that comes to mind is if one is not only interested in proving \FIXP-membership, but wants to construct a circuit such that there is a one-to-one correspondence between its fixed points (perhaps after projecting away some coordinates) and the solutions of the problem that is studied.
\end{remark}

\subsection{The OPT-gate for Linear Programming}\label{sec:OPT-gate-LP}

We consider the following LP formulation, which includes explicit equality constraints and an explicit bound on the feasible region:
\begin{equation}\label{eq:standard-LP}
\begin{aligned}
\mbox{maximize}\quad & c^\transpose x\\
\mbox{subject to}\quad & A x = b\\
& Cx \leq d\\
& x \in [-R,R]^n
\end{aligned}
\end{equation}
where $x \in \RR^n$ is the vector of unknown variables, $c \in \RR^n$ defines the objective function, and the constraints are given by $A \in \RR^{m \times n}$, $b \in \RR^m$, $C \in \RR^{k \times n}$, $d \in \RR^k$, and $R \in \RRp$.

We introduce the following constraint qualification for our LP formulation.
\begin{definition}\label{def:explicit-slater-LP}
We say that the \emph{explicit Slater condition} is satisfied by the LP \eqref{eq:standard-LP} if the following two conditions hold:
\begin{enumerate}
	\item \emph{non-empty interior:} there exists $x \in (-R,R)^n$ with $Ax=b$ and $Cx < d$ (componentwise),
	\item \emph{linear independence:} the rows of $A$ are linearly independent.
\end{enumerate}
\end{definition}
The Slater condition \citep{Slater50} is very popular in convex optimization, where it is usually defined using only the first condition. This is without loss, because the second condition can always be enforced with some additional preprocessing (namely, eliminating redundant equality constraints). For our purpose, however, the second condition is required because we cannot perform the usual preprocessing inside an algebraic circuit. To avoid any confusion, we thus refer to the two conditions above as the \emph{explicit} Slater condition.

\bigskip

\noindent The main result of this section can now informally be stated as:
\begin{mdframed}[linewidth=1pt,frametitle={The OPT-gate for Linear Programs},frametitlealignment=\centering]
There exists a pseudogate for the LP formulation \eqref{eq:standard-LP}. This pseudogate has the following guarantees:
\begin{itemize}
	\item when the feasible region of the LP is non-empty, it outputs a feasible point.
	\item when the LP satisfies the explicit Slater condition, it outputs an optimal solution.
\end{itemize}
This pseudogate can be used like any other algebraic gate for the purpose of proving membership in \FIXP.
\end{mdframed}
The informal statement above is formally stated in \cref{thm:OPT-gate-LP} below.
Note that with the OPT-gate we can in particular directly prove the \FIXP-membership of the Nash problem (\cref{sec:OPT-gate-Nash}), since the explicit Slater condition is trivially satisfied by all the LPs in question.

\bigskip

To formalize the statement, we think of the LP \eqref{eq:standard-LP} as being parameterized by the tuple $(c,A,b,C,d,R)$. Thus, after fixing $n \in \NN$ and $m,k \in \NN_0$, we can define the parameter space
\[P_{n,m,k} = \RR^n \times \RR^{m \times n} \times \RR^m \times \RR^{k \times n} \times \RR^k \times \RRp.\]
For any choice of parameters $p=(c,A,b,C,d,R) \in P_{n,m,k}$ we let $\LP(p)$ denote the corresponding LP formulated in \eqref{eq:standard-LP}. We will use $\Feas(\LP(p))$ and $\Opt(\LP(p))$ to denote its set of feasible and optimal solutions, respectively. The main result of this section can be stated formally as follows.

\begin{theorem}\label{thm:OPT-gate-LP}
Given $n \in \NN$ and $m,k \in \NN_0$ we can construct an algebraic circuit $G \colon P_{n,m,k} \times [0,1]^\ell \rightarrow \RR^n \times [0,1]^\ell$ in time $\poly(n,m,k)$ such that for any parameters $p = (c,A,b,C,d,R) \in P_{n,m,k}$ it holds:
\begin{itemize}
	\item if the feasible region of $\LP(p)$ is non-empty, i.e., $\Feas(\LP(p)) \neq \emptyset$, then
		\[\fix_\ell[G](p) \subseteq \Feas(\LP(p)).\]
	\item if $\LP(p)$ satisfies the explicit Slater condition (\cref{def:explicit-slater-LP}), then
		\[\fix_\ell[G](p) \subseteq \Opt(\LP(p)).\]
\end{itemize}
\end{theorem}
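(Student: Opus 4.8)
The plan is to build the circuit $G$ by encoding, into the fixed-point mechanism, the Karush–Kuhn–Tucker (KKT) optimality conditions for the LP \eqref{eq:standard-LP}. Since all constraints are linear, the KKT conditions are not just necessary but sufficient for optimality, and under the explicit Slater condition a primal optimum together with suitable dual multipliers exists. Concretely, I would introduce auxiliary "memory" coordinates playing the role of (i) the primal vector $x \in [-R,R]^n$, (ii) the dual multipliers $\mu \in \RRnn^k$ for the inequality constraints $Cx \le d$, and (iii) the multipliers $\nu \in \RR^m$ for the equalities $Ax = b$. The map $G$ would then update these using clipped gradient-ascent/descent steps: something like $x \mapsto \Pi_{[-R,R]^n}(x + c - C^\transpose \mu - A^\transpose \nu)$ for the primal, $\mu \mapsto \Pi_{\RRnn}(\mu + (Cx - d))$ for the inequality duals, and a similar (unclipped, or clipped to a large box) step for $\nu$, exactly in the spirit of the Heaviside construction $G_\Heaviside$ from \cref{lem:Heaviside-gate}. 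The key point, mirroring the Heaviside example, is that a fixed point of such a clipped update is precisely a point satisfying the corresponding stationarity/complementarity relation: a fixed point of the $\mu$-update forces $\mu \ge 0$, $Cx \le d$, and $\mu_j > 0 \Rightarrow (Cx-d)_j = 0$; a fixed point of the clipped $x$-update forces the KKT stationarity condition with the box multipliers folded in; and a fixed point of the $\nu$-update forces $Ax = b$. Putting all these together, any fixed point of the assembled $G$ is a KKT point of \eqref{eq:standard-LP}, hence (by linearity) an optimal solution.

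There are two technical hurdles that the construction must handle, corresponding to the two clauses of the theorem. The first is boundedness of the dual variables: the $\mu$ and $\nu$ coordinates must live in an explicit compact box $[0,M]^k \times [-M,M]^m$ for the fixed-point apparatus to apply, yet $M$ must be chosen (from the input parameters $p$, in $\poly(n,m,k)$ time) large enough that a genuine optimal dual solution fits inside it. This is where the explicit Slater condition does the work: linear independence of the rows of $A$ plus the strict-feasibility point let one derive, via standard LP duality / sensitivity bounds, an a priori bound on the norm of some optimal multiplier vector in terms of $R$, $\|c\|$, and the margins $d - Cx_0$; one sets $M$ to (a circuit-computable overestimate of) this bound. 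The second hurdle is the first clause: when the feasible region is nonempty but Slater may fail, we still need $\fix_\ell[G](p) \subseteq \Feas(\LP(p))$. Here I would arrange the primal update so that its fixed-point condition independently forces feasibility — e.g. by running, in parallel, a pure feasibility sub-iteration (the "$c = 0$" case, which the text already asserts has a pseudogate) whose fixed points are feasible points, and only letting the objective term influence the dynamics in a way that cannot destroy feasibility at a fixed point. Equivalently, project onto $\{Ax = b, Cx \le d\} \cap [-R,R]^n$ using an inner fixed-point gadget; since this set is a (possibly empty, but here nonempty) polytope given explicitly, projection onto it has a pseudogate, and any fixed point then automatically lies in $\Feas(\LP(p))$.

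The main obstacle I anticipate is making the whole clipped-dynamics system have the property that its fixed points are \emph{exactly} KKT points and nothing spurious — in particular ensuring that when Slater holds the dual box $[0,M]^m$ is not "hit at the boundary artificially" (which would create fixed points that are not true KKT points because $\mu$ wants to be larger than $M$), and conversely that the chosen step sizes don't introduce fixed points violating complementarity. This is precisely why the a priori bound $M$ must be an honest upper bound on an optimal multiplier, and why the argument needs the linear-independence half of the explicit Slater condition rather than just strict feasibility; I'd expect the bulk of the real work to be the duality-theoretic lemma producing a $\poly$-time-computable such $M$, together with the verification that at a fixed point the clipped stationarity equation, read off coordinatewise, reproduces exactly the LP optimality conditions. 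Everything else — assembling the sub-gadgets, bounding $\ell$, checking that $G$ maps $P_{n,m,k} \times [0,1]^\ell$ into $\RR^n \times [0,1]^\ell$ after rescaling the dual boxes to $[0,1]$, and that it uses only standard gates and is well-defined (no division by zero) — should be routine, following the template already set by $G_\Heaviside$.
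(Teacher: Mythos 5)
Your route (Arrow--Hurwicz-style clipped primal--dual dynamics whose fixed points are KKT points) differs from the paper's, and it has a gap that is not just technical but fatal as stated: the ``a priori bound $M$ on an optimal multiplier, computable from the input parameters'' does not exist. The circuit $G$ is built knowing only $n,m,k$; the data $(c,A,b,C,d,R)$ are \emph{inputs} to $G$, so $M$ would have to be computed by the circuit itself, i.e.\ by a continuous, everywhere well-defined algebraic function of the LP data (a Slater point $x_0$ and its margins are not part of the input, only the \emph{existence} of such a point is assumed). But the optimal dual multipliers are unbounded as the Slater margin degenerates while the data stay in a bounded set: for the paper's own Obstacle-3 example, maximize $x_2$ subject to $x_1+ax_2\le 0$, $-x_1\le 0$, $x\in[-1,1]^2$, the explicit Slater condition holds for every $a\in(0,1]$ and the unique multipliers are $\mu_1=\mu_2=1/a$. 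Any valid $M(a)$ must dominate $1/a$ on $(0,1]$, yet a well-defined circuit is continuous on the compact set $[0,1]\times[0,1]^\ell$ and hence bounded there, so no such $M$ can be computed (you cannot divide by a quantity that vanishes at $a=0$, since the gate must never divide by zero even on inputs where Slater fails). Moreover, if the dual box is too small the dynamics really do acquire spurious fixed points: in the same example with dual cap $M<1/a$, the point $x=(0,1)$, $\mu_1=\mu_2=M$ is a fixed point of your clipped updates, and it is not even feasible. Your fallback for the first clause is also circular: ``projection onto the feasible polytope has a pseudogate'' is essentially the statement being proved (and an exact projection map is discontinuous in the constraint data, so it can only be a pseudogate of the very kind this theorem constructs).

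The paper's proof (via \cref{thm:OPT-gate-convex}) sidesteps unbounded duals entirely by never introducing free dual variables: the multipliers are \emph{normalized Fritz John} multipliers produced pointwise by Heaviside pseudogates, $\mu_i\in\Heaviside(g_i(x))\subseteq[0,1]$ and $\lambda_j\in 2\Heaviside(a_j\cdot x-b_j)-1\subseteq[-1,1]$, together with a weight $\mu_0=1-\max(\mu_1,\dots,\mu_k,|\lambda_1|,\dots,|\lambda_m|)$ on the objective gradient, and the primal update is the single clipped step \eqref{eq:opt-gate-output}. Feasibility of fixed points (your first clause) then comes for free, because any violated constraint forces $\mu_0=0$ and the remaining terms certify infeasibility cannot persist at a fixed point; the explicit Slater condition, including the linear-independence half, is used only at the end to rule out $\mu_0=0$, at which point the Fritz John conditions collapse to KKT and convexity gives optimality. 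If you want to salvage your plan, you need some analogue of this normalization (bounded multipliers with the objective switched off upon infeasibility) rather than an explicit magnitude bound on true dual variables.
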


\noindent \cref{thm:OPT-gate-LP} follows from the more general \cref{thm:OPT-gate-convex}, which is stated and proved in the next section.

\subsection{The OPT-gate for Convex Optimization}\label{sec:OPT-gate-convex}

We can apply the approach presented above to the more general setting of convex optimization. Consider a Convex Program (CP) of the following form:
\begin{equation*}
\begin{aligned}
\mbox{minimize}\quad & f(x)\\
\mbox{subject to}\quad & A x = b\\
& g_i(x) \leq 0 \qquad i=1,\dots,k\\
& x \in [-R,R]^n
\end{aligned}
\end{equation*}
where $f \colon \RR^n \rightarrow \RR$ and $g_i \colon \RR^n \rightarrow \RR$, $i=1,\dots,k$, are convex functions, and, as before, the remaining constraints are given by $A \in \RR^{m \times n}$, $b \in \RR^m$, and $R \in \RRp$.

For this setting we can again define the appropriate explicit Slater condition.
\begin{definition}\label{def:explicit-slater-convex}
We say that the \emph{explicit Slater condition} is satisfied by the Convex Program \eqref{eq:standard-CP} if the following two conditions hold:
\begin{enumerate}
	\item \emph{non-empty interior:} there exists $x \in (-R,R)^n$ with $Ax=b$ and $g_i(x) < 0$ for $i=1,\dots,k$,
	\item \emph{linear independence:} the rows of $A$ are linearly independent.
\end{enumerate}
\end{definition}

\bigskip

\noindent The main result of this section can now informally be stated as follows:
\begin{mdframed}[linewidth=1pt,frametitle={The OPT-gate for Convex Optimization},frametitlealignment=\centering]
There exists a pseudogate for the Convex Program (CP) \eqref{eq:standard-LP}. This pseudogate has the following guarantees:
\begin{itemize}
	\item when the feasible region of the CP is non-empty, it outputs a feasible point.
	\item when the CP satisfies the explicit Slater condition, it outputs an optimal solution.
\end{itemize}
This pseudogate can be used like any other algebraic gate for the purpose of proving membership in \FIXP.
\end{mdframed}
For a formal statement, see \cref{thm:OPT-gate-convex} below.

\paragraph*{Parameters.}
On the way to making this statement formal, we need to allow various parts of the optimization problem to depend on a set of parameters (which are going to be the inputs to our pseudogate). Clearly, $A$, $b$ and $R$ are such parameters---as before---but we would also like the objective function $f$ and the inequality constraints $g_i$ to be parameterized (even just to be able to encode the LP \eqref{eq:standard-LP} from the previous section). To address this, we introduce an additional parameter $w \in \RR^s$ and reformulate the optimization problem as follows:
\begin{equation}\label{eq:standard-CP}
\begin{aligned}
\mbox{minimize}\quad & f(x\,;w)\\
\mbox{subject to}\quad & A x = b\\
& g_i(x\,;w) \leq 0 \qquad i=1,\dots,k\\
& x \in [-R,R]^n
\end{aligned}
\end{equation}
where $f \colon \RR^n \times \RR^s \rightarrow \RR$ and $g_i \colon \RR^n \times \RR^s \rightarrow \RR$, $i=1,\dots,k$, are continuous functions such that $f(\cdot\,;w)$ and $g_i(\cdot\,;w)$ are convex functions for any $w \in \mathbb{R}^s$. Note that $x \in \RR^n$ is still the vector of unknown variables and $w$ is simply an additional \emph{external} parameter, just like $A$, $b$ and $R$, and is thus treated as completely fixed when optimizing.

After fixing $n \in \NN$ and $m, k, s \in \NN_0$, as well as the functions $f$ and $g_i$, $i = 1,\dots,k$, we can define the parameter space $P_{n,m,k,s,f,g} = \RR^s \times \RR^{m \times n} \times \RR^m \times \RRp$. To simplify notation we write $P_{n,m,f,g}$ to mean $P_{n,m,k,s,f,g}$, since $k$ and $s$ are, in a certain sense, also implicitly given by $f$ and $g=(g_1,\dots, g_k)$. For any choice of parameters $p=(w,A,b,R) \in P_{n,m,f,g}$ we let $\CP(p)$ denote the corresponding CP formulated in \eqref{eq:standard-CP}. As before, we will use $\Feas(\CP(p))$ and $\Opt(\CP(p))$ to denote its set of feasible and optimal solutions, respectively. As above, the parameters $p=(w,A,b,R) \in P_{n,m,f,g}$ will be the inputs of the pseudogate we construct.

\paragraph*{Representation of functions and subgradients.}
For computational purposes, we assume that the functions $f$ and $g_i$ are given as algebraic circuits. However, we will also need access to \emph{subgradients} of these functions.

\begin{definition}\label{def:subgradient}
Let $A \subseteq \RR^n$ be a convex set and let $f \colon A \to \RR$ be a convex function. A vector $v \in \RR^n$ is a \emph{subgradient} of $f$ at the point $x \in A$ if, for all $y \in A$,
\[f(y) - f(x) \geq v \cdot (y-x).\]
We let $\partial f(x)$ denote the \emph{subdifferential} of $f$ at $x$, namely the set of all subgradients of $f$ at $x$.

If $f$ is a concave function instead, then the \emph{superdifferential} of $f$ at $x$ is given by $\partial f(x) := -\partial (-f)(x)$. In that case, the elements of $\partial f(x)$ are called \emph{supergradients}.
\end{definition}

The subdifferential has the following well-known properties (see, e.g., \citep{Rockafellar70-convex-analysis}).
\begin{lemma}
Let $A \subseteq \RR^n$ be a convex set and let $f \colon A \to \RR$ be a convex function. Then it holds that:
\begin{itemize}
	\item $\partial f(x)$ is a closed convex set for all $x \in A$,
	\item $\partial f(x)$ is nonempty for all $x \in \textup{rel int } A$,
	\item if $f$ is differentiable at $x$, then $\partial f(x) = \{\nabla f(x)\}$,
	\item $x^\star \in A$ is a global minimum of $f$ on $A$, if and only if $0 \in \partial f(x^\star)$.
\end{itemize}
\end{lemma}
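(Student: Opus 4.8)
The plan is to establish the four bullet points one at a time; all are classical facts of convex analysis, so I would cite \citep{Rockafellar70-convex-analysis} for the substantive part while still indicating the arguments, and I would order them by increasing difficulty. The global-minimum characterization is immediate from \cref{def:subgradient}: the condition $0 \in \partial f(x^\star)$ unwinds to $f(y) - f(x^\star) \geq 0 \cdot (y - x^\star) = 0$ for every $y \in A$, which is exactly the assertion that $x^\star$ minimizes $f$ over $A$. Closedness and convexity follow by writing $\partial f(x) = \bigcap_{y \in A} \{v \in \RR^n : v \cdot (y-x) \leq f(y) - f(x)\}$; each set in the intersection is a closed halfspace (all of $\RR^n$ when $y = x$), and an arbitrary intersection of closed convex sets is closed and convex.

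For the differentiable case I would prove both inclusions. That $\nabla f(x) \in \partial f(x)$ is the standard first-order condition: for fixed $y \in A$, convexity gives $\frac{f(x + t(y-x)) - f(x)}{t} \leq f(y) - f(x)$ for $t \in (0,1]$, and letting $t \downarrow 0$ the left-hand side converges to $\nabla f(x) \cdot (y-x)$ by differentiability. Conversely, if $v \in \partial f(x)$ then for every direction $d$ and small $t > 0$ we have $v \cdot d \leq \frac{f(x + td) - f(x)}{t}$, whose limit as $t \downarrow 0$ is $\nabla f(x) \cdot d$; applying this to both $d$ and $-d$ forces $(v - \nabla f(x)) \cdot d = 0$ for all $d$, hence $v = \nabla f(x)$. (If $A$ is not full-dimensional, ``differentiable at $x$'' is interpreted relative to $\operatorname{aff} A$ and the directions $d$ range over the parallel subspace; the argument is unchanged.)

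The only genuinely nontrivial item, and the one I expect to be the main obstacle, is nonemptiness of $\partial f(x)$ for $x$ in the relative interior of $A$. My plan is to apply the supporting hyperplane theorem to the convex set $\operatorname{epi} f = \{(y,t) \in A \times \RR : t \geq f(y)\}$ at its boundary point $(x, f(x))$, obtaining a nonzero $(a, \beta) \in \RR^n \times \RR$ with $a \cdot y + \beta t \geq a \cdot x + \beta f(x)$ for all $(y,t) \in \operatorname{epi} f$; sending $t \to +\infty$ yields $\beta \geq 0$. The crux is ruling out $\beta = 0$: in that case $a \cdot (y - x) \geq 0$ for all $y \in A$ with $a \neq 0$, which contradicts $x \in \textup{rel int } A$ unless $a$ is orthogonal to $\operatorname{aff} A$ -- and in the latter case one discards that orthogonal component and reruns the argument inside $\operatorname{aff} A$, where $x$ is a true interior point, eventually forcing $\beta > 0$. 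Normalising $\beta = 1$ and setting $v = -a$ then gives $f(y) \geq f(x) + v \cdot (y - x)$ for all $y \in A$, i.e.\ $v \in \partial f(x)$. An equivalent and arguably cleaner route, which I might adopt instead, is to observe that the one-sided directional derivative $d \mapsto f'(x;d) = \lim_{t \downarrow 0} \frac{f(x+td) - f(x)}{t}$ is finite and sublinear in $d$ on the subspace parallel to $\operatorname{aff} A$ when $x \in \textup{rel int } A$, and then invoke the Hahn--Banach theorem to find a linear functional dominated by it -- which is precisely a subgradient of $f$ at $x$.
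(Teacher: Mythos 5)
The paper does not actually prove this lemma: it is stated as a list of well-known facts with a pointer to \citep{Rockafellar70-convex-analysis}, so there is no in-paper argument to compare yours against. Your proofs are the standard ones and are correct. The minimality characterization and the closedness/convexity of $\partial f(x)$, written as $\bigcap_{y \in A}\{v : v\cdot(y-x) \le f(y)-f(x)\}$, are immediate from \cref{def:subgradient}; the differentiable case is the usual two-sided difference-quotient argument; and nonemptiness on $\operatorname{rel\,int} A$ is exactly the supporting-hyperplane-to-the-epigraph argument (equivalently, Hahn--Banach applied to the sublinear directional derivative) that one finds in Rockafellar, including the correct way of ruling out $\beta=0$ by working inside $\operatorname{aff} A$. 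One small caveat concerns your parenthetical on the third bullet: with the paper's definition, subgradients are arbitrary vectors of $\RR^n$, so if $A$ is not full-dimensional then $\partial f(x)$ is invariant under adding any vector orthogonal to the affine hull of $A$ and is never a singleton; interpreting differentiability relative to $\operatorname{aff} A$ therefore does not rescue the equality $\partial f(x)=\{\nabla f(x)\}$. That bullet should simply be read with $x$ an interior point of $A$ (the same interiority is what legitimizes using both directions $d$ and $-d$ in your converse inclusion), which is harmless here since the paper only applies the lemma to convex functions defined on all of $\RR^n$. This is a matter of statement hygiene for a cited folklore lemma, not a gap in the substance of your argument.
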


Note that since $f(\cdot\,;w)$ and $g_i(\cdot\,;w)$, $i=1,\dots, k$, are convex functions defined over $\RR^n$, the subdifferentials $\partial f(\cdot\,;w)$, $\partial g_i(\cdot\,;w) \colon \RR^n \rightrightarrows \RR^n$, $i=1,\dots,k$, are guaranteed to exist and be nonempty. For our purposes we will assume that we are given pseudogates computing these subgradients. In other words, we assume that we are given algebraic circuits $G_{\partial f}, G_{\partial g_i} \colon \RR^n \times \RR^s \times [0,1]^\ell \rightarrow \RR^n \times [0,1]^\ell$, $i=1,\dots,k$, such that $\fix_\ell[G_{\partial f}](x,w) \subseteq \partial f(x\,;w)$ and $\fix_\ell[G_{\partial g_i}](x,w) \subseteq \partial g_i(x\,;w)$, $i=1,\dots,k$, for all $x,w \in \RR^n \times \RR^s$. See \cref{app:PDC} for an example of how such pseudogates can be constructed.

\begin{example}
As an example, let us see why our convex optimization setting \eqref{eq:standard-CP} is indeed a generalization of our LP setting \eqref{eq:standard-LP}. To go from \eqref{eq:standard-LP} to \eqref{eq:standard-CP} we set $s = n + kn + k$ and decompose $w = (c,C,d) \in \RR^n \times \RR^{k \times n} \times \RR^k$ accordingly. Then we let $f(x\,;w) = - c^\transpose x$ and $g_i(x\,;w) = C_i^\transpose x - d_i$ for $i=1,\dots,k$, where $C_i$ denotes the $i$th row of $C$. As a result, it is easy to see that the subdifferentials are in fact gradients, namely $\nabla f(x\,;w) = -c$ and $\nabla g_i(x\,;w) = C_i$. Clearly, the functions are convex for any fixed value of $w$, and both the functions and their subdifferentials can easily be expressed as algebraic circuits. In particular, this means that \cref{thm:OPT-gate-convex} below implies \cref{thm:OPT-gate-LP}.
\end{example}

\noindent We can now formally state the main result of this section.
\begin{theorem}\label{thm:OPT-gate-convex}
Given $n \in \NN$, $m, k, s \in \NN_0$ and $G_{\partial f}$, $g_i$, $G_{\partial g_i}$, $i=1,\dots,k$, we can construct an algebraic circuit $G \colon P_{n,m,f,g} \times [0,1]^\ell \rightarrow \RR^n \times [0,1]^\ell$ in time $\poly(n,m,k,s,\size(G_{\partial f}),\size(g),\size(G_{\partial g_i}))$ such that for any parameters $p = (w,A,b,R) \in P_{n,m,f,g}$ it holds:
\begin{itemize}
	\item if the feasible region of $\CP(p)$ is non-empty, i.e., $\Feas(\CP(p)) \neq \emptyset$, then
	\[\fix_\ell[G](p) \subseteq \Feas(\CP(p)).\]
	\item if $\CP(p)$ satisfies the explicit Slater condition, then
	\[\fix_\ell[G](p) \subseteq \Opt(\CP(p)).\]
\end{itemize}
\end{theorem}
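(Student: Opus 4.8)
The plan is to assemble $G$ as a composition of pseudogates — the standard algebraic gates, the Heaviside pseudogate of \cref{lem:Heaviside-gate}, and the supplied subgradient pseudogates $G_{\partial f},G_{\partial g_i}$ — using the fact that a composition of pseudogates is again a pseudogate (their auxiliary fixed-point coordinates simply concatenate), so it suffices to describe each piece. The device underlying every piece is that no convergence analysis is needed: for a closed convex set $X$, a convex function $h$, and \emph{any} fixed step size $\eta>0$, a point $x\in X$ satisfies $x=\Pi_X(x-\eta v)$ for some $v\in\partial h(x)$ if and only if $x$ minimizes $h$ over $X$; and for $X=[-R,R]^n$ the projection $\Pi_X$ is coordinatewise clamping, i.e.\ a handful of $\min/\max$ gates. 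Hence projected-subgradient maps have exactly the minimizers as fixed points, and we may take $\eta=1$.

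For \emph{feasibility} (no constraint qualification needed) I would use the convex penalty $\psi(x)=\tfrac12\Norm{Ax-b}^2+\sum_{i=1}^k\tfrac12\max(0,g_i(x;w))^2$ on $[-R,R]^n$: it is nonnegative, $\min_{[-R,R]^n}\psi=0$ precisely when $\Feas(\CP(p))\neq\emptyset$, and in that case the minimizers of $\psi$ over the box are exactly the feasible points. Since $\tfrac12\max(0,\cdot)^2$ is $C^1$, a subgradient of $\psi$ at $x$ is $A^{\transpose}(Ax-b)+\sum_i\max(0,g_i(x;w))\,v_i$ with $v_i\in\partial g_i(x;w)$, assembled from the given circuits for $g_i$, the pseudogates $G_{\partial g_i}$, and arithmetic/$\max$ gates. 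Thus $x\mapsto\Pi_{[-R,R]^n}\big(x-A^{\transpose}(Ax-b)-\sum_i\max(0,g_i(x;w))\,v_i\big)$ is a pseudogate, well-defined for every parameter tuple, whose fixed points lie in $\Feas(\CP(p))$ whenever it is nonempty. Running the same device on $\{x\in[-R,R]^n:Ax=b\}$ with objective $\max_i g_i(\cdot;w)$ (suitably relaxed) additionally extracts, whenever the explicit Slater condition holds, a Slater point $\bar x$ together with a margin $\delta=\min_i(-g_i(\bar x;w))>0$.

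For \emph{optimality} under the explicit Slater condition I would look for a saddle point of $f(x;w)+\rho\sum_i\max(0,g_i(x;w))+\mu^{\transpose}(Ax-b)$ with $x\in[-R,R]^n$ and $\mu\in[-M,M]^m$, implemented as the joint fixed point of the projected-subgradient map in $x$ (its subgradient now uses $\Heaviside(g_i(x;w))$ for the hinge terms, hence the Heaviside pseudogate, together with $G_{\partial f}$ and the $G_{\partial g_i}$) and the projected-ascent map $\mu\mapsto\Pi_{[-M,M]^m}(\mu+(Ax-b))$. Convex duality then gives: if $\rho$ exceeds an upper bound on $\Norm{\lambda^\star}_1$ and $M$ exceeds an upper bound on $\Norm{\mu^\star}$, then at every joint fixed point the $x$-part is optimal for $\CP(p)$. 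The bound on $\Norm{\lambda^\star}_1$ is the classical Slater estimate $(f(\bar x;w)-p^\star)/\delta$, and $f(\bar x;w)-p^\star$ is itself controlled using a subgradient of $f$ at $\bar x$ and the box diameter — all computable from pieces already in hand.

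The main obstacle is the bound $M$ on the equality multiplier, and this is exactly what the linear-independence clause of the explicit Slater condition is for: it pins $\mu^\star$ down as the unique solution of $A^{\transpose}\mu^\star=-(\text{bounded subgradient terms})$, so $\Norm{\mu^\star}$ is governed by (a quantity behaving like) the inverse smallest singular value of $A$. The subtlety is that this quantity is unbounded over $P_{n,m,f,g}$ and undefined when the rows of $A$ are dependent, whereas the circuit must be well-defined on \emph{all} of $P_{n,m,f,g}$; handling this cleanly — letting the needed bound be carried by an extra fixed-point coordinate that rises only as far as the dynamics force it, and writing every equality-related term multiplicatively so that nothing is ever divided by a possibly-vanishing determinant — is the technical heart of the construction, and is what makes a genuine interior saddle point exist exactly when the explicit Slater condition holds. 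With the construction in place the remaining verification is routine: fixed points are feasible (penalty argument, no qualification), fixed points are optimal under the explicit Slater condition (duality plus the multiplier bounds), the composite circuit has the claimed size $\poly(n,m,k,s,\size(G_{\partial f}),\size(g),\size(G_{\partial g_i}))$, and it is well-defined everywhere.
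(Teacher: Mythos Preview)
Your proposal has the right primitives (projected-subgradient fixed points as optimality conditions, Lagrangian duality) but leaves precisely the hard step undone. You correctly identify that the equality-multiplier bound $M$ scales like the inverse smallest singular value of $A$, is unbounded over $P_{n,m,f,g}$, and is undefined when the rows are dependent; you then say that ``handling this cleanly\dots is the technical heart of the construction'' and offer only the hint that $M$ could be ``carried by an extra fixed-point coordinate that rises only as far as the dynamics force it.'' But auxiliary coordinates must live in a fixed compact box $[0,1]^\ell$, so nothing can rise unboundedly; and without a concrete mechanism you have no argument that the saddle-point fixed points are optimal. A second gap is structural: you present three separate gadgets (feasibility, Slater-point extraction, saddle point) but the theorem demands a \emph{single} circuit $G$ whose fixed points are feasible whenever $\Feas\neq\emptyset$---even when Slater fails and your $\bar x,\delta,\rho,M$ are garbage. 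Your saddle-point gadget, with $\mu$ confined to $[-M,M]^m$ for a possibly-meaningless $M$, need not have feasible fixed points in that regime, and you do not say how the feasibility gadget's output would override it.

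The paper avoids both problems with one idea you are circling but never state: the Fritz John normalization. Instead of bounding the multipliers, it \emph{normalizes} them---taking $\mu_i\in\Heaviside(g_i(x\,;w))\subseteq[0,1]$ and $\lambda_j\in 2\Heaviside(a_j\cdot x-b_j)-1\subseteq[-1,1]$ via the Heaviside pseudogate---and compensates by putting a coefficient $\mu_0:=1-\max_i(\mu_i,|\lambda_j|)$ on the objective subgradient. The single step $z=\Pi_{[-R,R]^n}\bigl(x-\mu_0 v_0-\sum_i\mu_i v_i-\sum_j\lambda_j a_j\bigr)$ then does everything at once: at a fixed point, the Heaviside definitions give complementary slackness and force $\mu_0=0$ whenever a constraint is violated, from which feasibility follows by a direct contradiction with any feasible $u$; and under the explicit Slater condition, a short argument rules out $\mu_0=0$ (using the linear-independence clause to kill the $\lambda_j$'s), whence the KKT conditions and optimality. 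No multiplier bounds are ever computed, no separate feasibility gadget is needed, and the circuit is well-defined for all parameters. This normalization is exactly the ``multiplicative'' device you gesture at; making it explicit is what your proof is missing.
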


\begin{remark}
First of all, note that according to the statement of \cref{thm:OPT-gate-convex}, the construction does not actually need access to an algebraic circuit computing $f$, but only to a pseudogate for $\partial f$. However, it requires both a circuit for $g_i$ and a pseudogate for $\partial g_i$.
Furthermore, a careful examination of the proof of \cref{thm:OPT-gate-convex} below reveals that the construction also works if the functions $f$ and $g_i$ are pseudoconvex, instead of convex. A differentiable function $f \colon A \to \RR$, where $A \subseteq \RR^n$ is an open convex set, is said to be \emph{pseudoconvex} if for all $x,y \in A$ it holds that
\[f(y) < f(x) \implies \nabla f(x) \cdot (y-x) < 0.\]
Any differentiable convex function is pseudoconvex, but every pseudoconvex function is not necessarily convex. Furthermore, every convex function is not necessarily pseudoconvex, because it might not be differentiable. It is possible to define a notion that generalizes both convexity and pseudoconvexity, and to state \cref{thm:OPT-gate-convex} with this notion, but for simplicity we have stated it only for convex functions above.
\end{remark}

\subsection{Proof of \cref{thm:OPT-gate-convex}}

\paragraph*{High-level idea.}
We want to construct a circuit $G$ that takes as input $(w,A,b,R) \in P_{n,m,f,g}$ and $y \in [0,1]^n$, and outputs $z \in \RR^n$ and $\overline{y} \in [0,1]^n$, such that if $y = \overline{y}$, then $z$ is an optimal solution of $\CP(w,A,b,R)$ (when $\CP(w,A,b,R)$ satisfies the explicit Slater condition). The circuit $G$ will roughly perform the following computations:
\begin{enumerate}
    \item Compute $x := 2Ry - R$, i.e., scale $y \in [0,1]^n$ into a point $x$ in $[-R,R]^n$.
    \item Compute $\mu_i \in \Heaviside(g_i(x\,;w))$ and $\lambda_j \in 2 \Heaviside(a_j \cdot x - b_j) - 1$ for $i = 1, \dots, k$, and $j = 1, \dots, m$, where $\Heaviside$ denotes the Heaviside function.
    \item Compute $\mu_0 := 1 - \max(\mu_1,\dots,\mu_k,|\lambda_1|,\dots,|\lambda_m|)$.
    \item Compute $v_0 \in \partial f(x\,;w)$ and $v_i \in \partial g_i(x\,;w)$ for $i = 1, \dots, k$.
    \item Compute
    \[z := \Pi_R \Bigg( x - \mu_0 v_0 - \sum_{i=1}^k \mu_i v_i - \sum_{j=1}^m \lambda_j a_j \Bigg)\]
    where $\Pi_R$ denotes projection to $[-R,R]^n$.
    \item Compute $\overline{y} := (z+R)/2R$, i.e., scale $z \in [-R,R]^n$ back into a point $\overline{y}$ in $[0,1]^n$.
    \item Output $(z,\overline{y})$.
\end{enumerate}
Note that steps 2 and 4 compute correspondences and will actually be implemented by pseudogates, which will require us to add more auxiliary variables (which we denote by $y'$ in the full construction).

Now assume that $y = \overline{y}$. By construction of the circuit, it follows that $x = z$. For simplicity, assume that $x \in (-R,R)^n$ (the case where $x$ lies on the boundary is handled in the full proof below). Then, from $x=z$ it follows that
\[\mu_0 v_0 - \sum_{i=1}^k \mu_i v_i - \sum_{j=1}^m \lambda_j a_j = 0.\]
Using this equation, we show that by construction of $\mu$ and $\lambda$, it must be that $x \in \Feas(\CP((w,A,b,R)))$, if this feasible region is non-empty.
Furthermore, again by construction, we additionally have that $(\mu, \lambda) \neq (0,0)$, $\mu \geq 0$ (componentwise), and $g_i(x\,;w) < 0 \implies \mu_i = 0$ for all $i = 1, \dots, k$. Taking all these conditions together, it follows that $x$ satisfies the so-called Fritz John conditions \citep{John48-Fritz,MangarasianF67-Fritz-John}, which are necessary conditions for optimality of $x$. Now, by using the explicit Slater condition, we can show that $\mu_0 = 0$. In that case, the Fritz John conditions become the well-known Karush-Kuhn-Tucker (KKT) conditions \citep{Karush39-thesis,KuhnT51-nonlinear-programming}. Given that we have a convex program, the KKT conditions are also sufficient for optimality. Thus, it follows that $x$ is an optimal solution, i.e., $x \in \Opt(\CP((w,A,b,R)))$. The full proof that we present below does not assume any knowledge of the various optimality conditions mentioned here.

\paragraph*{Notation.}
Let $G_{\Heaviside} \colon \RR \times [0,1] \to \RR \times [0,1]$ denote the algebraic circuit which implements the pseudogate computing the Heaviside function $\Heaviside$, as given by \cref{lem:Heaviside-gate}, i.e., such that $\fix_1[G_{\Heaviside}](x) \subseteq \Heaviside(x)$ for all $x \in \RR$.

For $i=1,\dots,k$ let $g_i \colon \RR^n \times \RR^s \to \RR$ denote the algebraic circuits computing the functions for the inequality constraints. For $i=1,\dots,k$ let $G_{\partial g_i} \colon \RR^n \times \RR^s \times [0,1]^t \to \RR^n \times [0,1]^t$ denote an algebraic circuit which implements a pseudogate computing the subdifferential $\partial g_i$, i.e., such that $\fix_t[G_{\partial g_i}](x,w) \subseteq \partial g_i(x\,;w)$ for all $(x,w) \in \RR^n \times \RR^s$. Similarly, let $G_{\partial f} \colon \RR^n \times \RR^s \times [0,1]^t \to \RR^n \times [0,1]^t$ denote an algebraic circuit which implements a pseudogate computing the subdifferential $\partial f$, i.e., such that $\fix_t[G_{\partial f}](x,w) \subseteq \partial f(x\,;w)$ for all $(x,w) \in \RR^n \times \RR^s$. Note that we have assumed that $G_{\partial f}$, $G_{\partial g_1}, \dots, G_{\partial g_k}$ all use the same number $t$ of auxiliary inputs/outputs. This is without loss of generality, because additional auxiliary inputs/outputs can be added to such a circuit without altering the represented correspondence.

\paragraph*{Construction of the algebraic circuit $\boldsymbol{G}$.}
We now describe in detail the construction of the algebraic circuit $G \colon P_{n,m,f,g} \times [0,1]^\ell \rightarrow \RR^n \times [0,1]^\ell$. 
The circuit $G$ has exactly $\ell = n + k + m + t(k+1)$ auxiliary inputs/outputs. We denote the input to circuit $G$ by $(w,A,b,R,y,y')$ where $(w,A,b,R) \in P_{n,m,f,g}$ are the parameters for the convex program (i.e., the inputs to the pseudogate we are constructing), $y \in [0,1]^n$ are the first $n$ auxiliary inputs, and $y' \in [0,1]^{\ell-n}$ are the remaining $\ell-n = k + m + t(k+1)$ auxiliary inputs. The output of the circuit is denoted by $(z,\overline{y},\overline{y}')$, where $z \in \RR^n$ is the primary output (i.e., the actual output of the pseudogate we are constructing), $\overline{y} \in [0,1]^n$ are the first $n$ auxiliary outputs, and $\overline{y}' \in [0,1]^{\ell-n}$ are the remaining $n-\ell$ auxiliary outputs. We now describe how the outputs of the circuit are computed using the inputs and standard algebraic gates.

The circuit $G$ begins by computing the vector $x \in [-R,R]^n$ as $x := 2Ry-R$. This simply corresponds to scaling $y \in [0,1]^n$ to a vector in $[-R,R]^n$, and can clearly be computed using the standard algebraic gates. Next, $G$ uses the given algebraic circuits $g_i$ to compute $g_1(x\,;w), \dots, g_k(x\,;w)$. Then, for each $i=1,\dots,k$, the circuit computes $\mu_i \in \Heaviside(g_i(x\,;w))$ by using the pseudogate computing $\Heaviside$. To be more precise, the circuit computes $(\mu_i,\overline{y}_i') := G_{\Heaviside}(g_i(x\,;w),y_i')$, using the algebraic circuit $G_{\Heaviside}$. Note that when $y_i' = \overline{y}_i'$, we indeed have $\mu_i \in \Heaviside(g_i(x\,;w))$, as desired.

For $j=1, \dots, m$ let $a_j \in \RR^n$ denote the $j$th row of the matrix $A$. In particular, the $j$th equality constraint can be written as $a_j \cdot x = b_j$. The next step is to compute $\lambda_j \in 2 \Heaviside(a_j \cdot x - b_j) - 1$ for each $j=1, \dots, m$, again by using the pseudogate computing $\Heaviside$. Formally, this means that the circuit sets $(\lambda_j',\overline{y}_{k+j}') := G_{\Heaviside}(a_j \cdot x - b_j,y_{k+j}')$ and then $\lambda_j := 2\lambda_j' - 1$. Note that the computation of the $\mu_i$'s and the $\lambda_j$'s has used up exactly $k + m$ coordinates of the auxiliary inputs/outputs $y',\overline{y}'$, which means that $t(k+1)$ are still available at this point.

Next, the circuit computes $v_0 \in \partial f(x\,;w)$ and $v_i \in \partial g_i(x\,;w)$ for $i=1, \dots, k$. Formally, this is achieved by setting $(v_0,\overline{y}_{(0)}') := G_{\partial f}(x,w,y_{(0)}')$ and $(v_i,\overline{y}_{(i)}') := G_{\partial g_i}(x,w,y_{(i)}')$ for $i=1, \dots, k$, where $y_{(i)}' =$ $(y_{k+m +it+1}',$ $\dots,$ $y_{k+m +it+t}')$ for $i=0,1,\dots,k$, and $\overline{y}_{(i)}'$ is defined analogously.

We summarize some properties of the construction up to this point in the following claim.

\begin{claim}\label{clm:opt-gate-1}
If $y' = \overline{y}'$, then we have:
\begin{itemize}
	\item $\mu_i \in \Heaviside(g_i(x\,;w))$ for $i=1, \dots, k$,
	\item $\lambda_j \in 2 \Heaviside(a_j \cdot x - b_j) - 1$ for $j=1, \dots, m$,
	\item $v_0 \in \partial f(x\,;w)$,
	\item $v_i \in \partial g_i(x\,;w)$ for $i=1, \dots, k$.
\end{itemize}
\end{claim}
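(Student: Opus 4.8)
The claim is a direct consequence of unwinding the definitions of the pseudogates used in the construction, and the plan is essentially bookkeeping. First I would record that the $\ell-n = k+m+t(k+1)$ auxiliary coordinates grouped in $y'$ (and symmetrically in $\overline{y}'$) are split into pairwise disjoint blocks, one per pseudogate invocation performed so far: a single coordinate $y_i'$ feeding the $i$th call to $G_{\Heaviside}$ that produces $\mu_i$ (for $i=1,\dots,k$), a single coordinate $y_{k+j}'$ feeding the call to $G_{\Heaviside}$ that produces $\lambda_j'$ (for $j=1,\dots,m$), and a block $y_{(i)}'$ of $t$ coordinates feeding the subgradient pseudogate $G_{\partial f}$ (for $i=0$) or $G_{\partial g_i}$ (for $i=1,\dots,k$). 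The dimension count is exact, so these blocks partition $y'$, and therefore the hypothesis $y'=\overline{y}'$ is equivalent to the conjunction of all the block-wise equalities $y_i'=\overline{y}_i'$, $y_{k+j}'=\overline{y}_{k+j}'$, $y_{(i)}'=\overline{y}_{(i)}'$.

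Next I would discharge the four items one block at a time. Fix $i\in\{1,\dots,k\}$: the circuit sets $(\mu_i,\overline{y}_i') := G_{\Heaviside}(g_i(x\,;w),y_i')$, so the equality $y_i'=\overline{y}_i'$ gives $G_{\Heaviside}(g_i(x\,;w),y_i')=(\mu_i,y_i')$, which by \cref{def:fixed-point-repr} means $\mu_i\in\fix_1[G_{\Heaviside}](g_i(x\,;w))$; by \cref{lem:Heaviside-gate} this set is contained in $\Heaviside(g_i(x\,;w))$, giving the first bullet. The same reasoning applied to $(\lambda_j',\overline{y}_{k+j}') := G_{\Heaviside}(a_j\cdot x-b_j,y_{k+j}')$ yields $\lambda_j'\in\Heaviside(a_j\cdot x-b_j)$, and since $\lambda_j=2\lambda_j'-1$ we get $\lambda_j\in 2\Heaviside(a_j\cdot x-b_j)-1$, the second bullet. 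For the subgradients, $y_{(0)}'=\overline{y}_{(0)}'$ together with $(v_0,\overline{y}_{(0)}'):=G_{\partial f}(x,w,y_{(0)}')$ gives $G_{\partial f}(x,w,y_{(0)}')=(v_0,y_{(0)}')$, hence $v_0\in\fix_t[G_{\partial f}](x,w)$, which is contained in $\partial f(x\,;w)$ by the assumed pseudogate guarantee for $G_{\partial f}$; likewise $v_i\in\fix_t[G_{\partial g_i}](x,w)\subseteq\partial g_i(x\,;w)$ for each $i$, completing the last two bullets.

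There is no genuine difficulty here; the only point that needs care is the index bookkeeping confirming that the successive pseudogate invocations consume disjoint blocks of auxiliary coordinates — precisely the running tally ``$k+m$ used, $t(k+1)$ remaining'' noted during the construction — so that the single hypothesis $y'=\overline{y}'$ really does imply every block-wise fixed-point equation simultaneously. It is also worth stressing that this claim assumes only $y'=\overline{y}'$ and says nothing about $y$: the point $x=2Ry-R$ is an arbitrary but fixed element of $[-R,R]^n$ determined by the (held fixed) input $y$, and all four containments are assertions about that particular $x$; the consequences of additionally having $y=\overline{y}$ are deferred to the remainder of the proof.
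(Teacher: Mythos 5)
Your proposal is correct and follows essentially the same route as the paper, which states this claim as an immediate summary of the construction (noting already during the construction that $y_i'=\overline{y}_i'$ forces $\mu_i \in \Heaviside(g_i(x\,;w))$, and analogously for the other pseudogate invocations). Your explicit bookkeeping that the invocations consume disjoint blocks of the auxiliary coordinates, so that $y'=\overline{y}'$ yields all blockwise fixed-point equations simultaneously, is exactly the (implicit) content of the paper's argument.
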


\noindent We are now ready to finish the construction of $G$.
The circuit computes
\[\mu_0 := 1 - \max(\mu_1,\dots,\mu_k,|\lambda_1|,\dots,|\lambda_m|).\]
Note that $|\lambda_j|$ can simply be computed as $\max(\lambda_j,-\lambda_j)$. We let $\Pi_R: \RR^n \to [-R,R]^n$ denote the projection onto $[-R,R]^n$. The function $\Pi_R$ can easily be computed using algebraic gates, since it suffices to apply the function $\alpha \mapsto \max(-R, \min(R,\alpha))$ to each coordinate separately. The primary output $z$ of $G$ is computed as
\begin{equation}\label{eq:opt-gate-output}
z := \Pi_R \Bigg( x - \mu_0 v_0 - \sum_{i=1}^k \mu_i v_i - \sum_{j=1}^m \lambda_j a_j \Bigg)
\end{equation}
and the auxiliary output $\overline{y} \in [0,1]^n$ of $G$ is then computed as
\[\overline{y} := \frac{z + R}{2R}\]
which, in particular, implies that $\overline{y} \in [0,1]^n$. Note that here it is important that we always have $R > 0$.

This completes the construction of the circuit $G$. Clearly, the construction can be performed in time $\poly(n,m,k,s,\size(G_{\partial f}),\size(g_1), \dots, \size(g_k),\size(G_{\partial g_1}), \dots, \size(G_{\partial g_k}))$. Note that, in particular, we have not used a circuit computing $f$ at any point in the construction.

\paragraph*{Fixed-point properties.}
In \cref{clm:opt-gate-1} we have already noted some properties that must hold when $y'=\overline{y}'$. Now we consider the implications of $y = \overline{y}$. First of all, when $y = \overline{y}$, it follows that $x = z$, since $x = 2Ry-R$ and $z = 2R \overline{y}-R$. From \eqref{eq:opt-gate-output} we then obtain:

\begin{claim}\label{clm:opt-gate-2}
If $y = \overline{y}$, then $x = z$ and the vector
\begin{equation}\label{eq:opt-gate-nu}
\nu := \mu_0 v_0 + \sum_{i=1}^k \mu_i v_i + \sum_{j=1}^m \lambda_j a_j
\end{equation}
satisfies, for $r=1,\dots,n$,
\[\nu_r > 0 \implies x_r=-R\]
and
\[\nu_r < 0 \implies x_r=R.\]
\end{claim}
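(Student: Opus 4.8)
The plan is to read everything off directly from the explicit formula \eqref{eq:opt-gate-output} for $z$ and the definition of $\overline{y}$, reducing the claim to a coordinate-wise case analysis of the clamp $\Pi_R$. First I would establish $x = z$: since $x := 2Ry - R$ and, by construction, $\overline{y} := (z+R)/(2R)$, the hypothesis $y = \overline{y}$ together with $R > 0$ gives $x = 2Ry - R = 2R\overline{y} - R = z$.

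Next I would observe that the vector $\nu$ defined in \eqref{eq:opt-gate-nu} is exactly the quantity subtracted from $x$ inside the projection in \eqref{eq:opt-gate-output}, so $z = \Pi_R(x - \nu)$. Since $\Pi_R$ acts coordinate by coordinate via $\alpha \mapsto \max(-R, \min(R, \alpha))$, for each $r \in \{1,\dots,n\}$ we have $z_r = \max(-R, \min(R, x_r - \nu_r))$; combining with $x = z$ yields the scalar identity $x_r = \max(-R, \min(R, x_r - \nu_r))$, and the claim follows by inspecting it under each sign hypothesis, using that $x_r \in [-R,R]$. If $\nu_r > 0$, then $x_r - \nu_r < x_r \le R$, so the inner $\min$ equals $x_r - \nu_r$ and the identity becomes $x_r = \max(-R, x_r - \nu_r)$; since $x_r - \nu_r < x_r$, the value $x_r - \nu_r$ cannot be the maximizer, forcing $x_r - \nu_r < -R$ and hence $x_r = -R$. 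The case $\nu_r < 0$ is symmetric: then $x_r - \nu_r > x_r \ge -R$, so the outer $\max$ is inactive and the identity reads $x_r = \min(R, x_r - \nu_r)$; since $x_r - \nu_r > x_r$, this forces $x_r - \nu_r > R$ and hence $x_r = R$.

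I expect no real obstacle here: the claim is a direct unwinding of the circuit construction, and the only point requiring care is tracking which of the two clamping bounds is active in each case. The substantive work lies ahead — using \cref{clm:opt-gate-1,clm:opt-gate-2} to check that $x$ satisfies the Fritz John conditions, and then invoking the explicit Slater condition to force $\mu_0 = 0$ and thereby upgrade these to the KKT conditions and conclude optimality.
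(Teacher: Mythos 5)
Your proof is correct and matches the paper's intent exactly: the paper states this claim as an immediate consequence of \eqref{eq:opt-gate-output} without spelling out the details, and your argument ($x=z$ from the scaling identities with $R>0$, then the coordinate-wise case analysis of $x_r = \max(-R,\min(R,x_r-\nu_r))$) is precisely the omitted unwinding. No gaps.
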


\noindent Next, we prove the following technical result, which will be useful for the remainder of the proof.

\begin{claim}\label{clm:opt-gate-3}
If $(y,y') = (\overline{y},\overline{y}')$ and $u$ is a feasible point, i.e., $u \in \Feas(\CP(w,A,b,R))$, then
\begin{itemize}
	\item $\nu_r(u_r-x_r) \geq 0$ for $r=1,\dots,n$,
	\item $\mu_i v_i \cdot (u-x) \leq 0$ for $i=1,\dots,k$,
	\item $\lambda_j a_j \cdot (u-x) \leq 0$ for $j=1,\dots,m$.
\end{itemize}
Furthermore, if $\mu_0=0$, then all these terms are equal to zero.
\end{claim}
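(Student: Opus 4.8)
The plan is to verify the three sign conditions one bullet at a time, using only feasibility of $u$ together with the facts already recorded in \cref{clm:opt-gate-1} (so that $\mu_i\in\Heaviside(g_i(x\,;w))$, $\lambda_j\in 2\Heaviside(a_j\cdot x-b_j)-1$, $v_0\in\partial f(x\,;w)$, $v_i\in\partial g_i(x\,;w)$, since the hypothesis $(y,y')=(\overline{y},\overline{y}')$ supplies both premises) and \cref{clm:opt-gate-2} (so that $x=z$ and the coordinates of $\nu$ satisfy the boundary conditions), and then to obtain the ``furthermore'' part from a squeeze on $\nu\cdot(u-x)$.

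For the first bullet I would argue coordinatewise: feasibility gives $u_r\in[-R,R]$, so if $\nu_r>0$ then \cref{clm:opt-gate-2} yields $x_r=-R$, hence $u_r-x_r\ge 0$ and $\nu_r(u_r-x_r)\ge 0$; if $\nu_r<0$ then $x_r=R$, hence $u_r-x_r\le 0$ and again $\nu_r(u_r-x_r)\ge 0$; and $\nu_r=0$ is trivial. For the second bullet, note $\mu_i\ge 0$ and that $\mu_i>0$ forces $g_i(x\,;w)\ge 0$, since otherwise $\Heaviside(g_i(x\,;w))=\{0\}$; since $v_i$ is a subgradient of the convex function $g_i(\cdot\,;w)$ at $x$ and $g_i(u\,;w)\le 0$ by feasibility, the subgradient inequality gives $v_i\cdot(u-x)\le g_i(u\,;w)-g_i(x\,;w)\le -g_i(x\,;w)$, so when $\mu_i>0$ the right-hand side is $\le 0$ and $\mu_i v_i\cdot(u-x)\le 0$, while when $\mu_i=0$ the term vanishes. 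For the third bullet, feasibility gives $a_j\cdot u=b_j$, hence $a_j\cdot(u-x)=b_j-a_j\cdot x$; if $\lambda_j>0$ then $a_j\cdot x-b_j\ge 0$ (otherwise the Heaviside value would be $0$, forcing $\lambda_j=-1$), so $a_j\cdot(u-x)\le 0$ and $\lambda_j a_j\cdot(u-x)\le 0$; the case $\lambda_j<0$ is symmetric and $\lambda_j=0$ is trivial.

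For the ``furthermore'' part, observe that $\nu\cdot(u-x)=\sum_{r=1}^n\nu_r(u_r-x_r)\ge 0$ by the first bullet, whereas when $\mu_0=0$ the defining identity \eqref{eq:opt-gate-nu} reads $\nu=\sum_{i=1}^k\mu_i v_i+\sum_{j=1}^m\lambda_j a_j$, so $\nu\cdot(u-x)=\sum_{i=1}^k\mu_i v_i\cdot(u-x)+\sum_{j=1}^m\lambda_j a_j\cdot(u-x)\le 0$ by the second and third bullets. Hence $\nu\cdot(u-x)=0$: as a sum of nonnegative terms this forces every $\nu_r(u_r-x_r)=0$, and as a sum of nonpositive terms it forces every $\mu_i v_i\cdot(u-x)=0$ and every $\lambda_j a_j\cdot(u-x)=0$. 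The argument is essentially sign-chasing, so I do not anticipate a real obstacle; the one step that takes a moment of thought is this last one, namely noticing that the single quantity $\nu\cdot(u-x)$ admits both a ``primal'' expansion over coordinates (nonnegative) and a ``dual'' expansion over multipliers (nonpositive) precisely when $\mu_0$ drops out — which is exactly the reason the explicit Slater condition, used later to force $\mu_0=0$, is the crux of the optimality argument.
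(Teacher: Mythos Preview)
Your proof is correct and follows essentially the same approach as the paper's own proof: the three sign conditions are verified by the same case analysis (using \cref{clm:opt-gate-2} for the box constraints, the Heaviside membership plus subgradient inequality for the $\mu_i$-terms, and feasibility $a_j\cdot u=b_j$ for the $\lambda_j$-terms), and the ``furthermore'' part is obtained exactly as the paper does, by taking the inner product of \eqref{eq:opt-gate-nu} with $u-x$ when $\mu_0=0$ and squeezing between a nonnegative sum and a nonpositive sum.
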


\begin{proof}
Since $u \in \Feas(\CP(w,A,b,R))$, it holds that $u \in [-R,R]^n$, $Au=b$ and $g_i(u\,;w) \leq 0$ for $i=1,\dots,k$. It follows that $\nu_r \cdot (u_r-x_r) \geq 0$ for $r=1,\dots,n$, because
\[\nu_r > 0 \implies x_r = -R \implies u_r \geq x_r\]
and
\[\nu_r < 0 \implies x_r = R \implies u_r \leq x_r\]
where we used \cref{clm:opt-gate-2} and the fact that $u \in [-R,R]^n$.

By \cref{clm:opt-gate-1} we know that $\mu_i \geq 0$ for $i=1,\dots,k$. Now, if $\mu_i > 0$ for some $i$, then it must be that $g_i(x\,;w) \geq 0$. But this means that $g_i(x\,;w) \geq g_i(u\,;w)$, because $u$ is feasible. Since $v_i \in \partial g_i(x\,;w)$ (\cref{clm:opt-gate-1}), and by the definition of subgradients (\cref{def:subgradient}), it follows that $v_i \cdot (u-x) \leq g_i(u\,;w) - g_i(x\,;w) \leq 0$. As a result, we obtain that $\mu_i v_i \cdot (u-x) \leq 0$ for all $i=1,\dots,k$.

If $\lambda_j > 0$ for some $j$, then by \cref{clm:opt-gate-1} we have $a_j \cdot x -b_j \geq 0$. Since $u$ is feasible, we have $a_j \cdot u -b_j = 0$ and thus $a_j \cdot (u-x) \leq 0$. Similarly, if $\lambda_j < 0$ for some $j$, then by \cref{clm:opt-gate-1} we have $a_j \cdot x -b_j \leq 0$, which by feasibility of $u$ yields $a_j \cdot (u-x) \geq 0$. As a result, we obtain that $\lambda_j a_j \cdot (u-x) \leq 0$ for all $j=1,\dots,m$.

Finally, consider the case where $\mu_0=0$. Taking the inner product of \eqref{eq:opt-gate-nu} with $(u-x)$, we obtain
\[\sum_{r=1}^n \nu_r (u_r-x_r) = \sum_{i=1}^k \mu_i v_i \cdot (u-x) + \sum_{j=1}^m \lambda_j a_j \cdot (u-x)\]
which, together with the above, implies that all the terms must be zero.
\end{proof}

We are now ready to prove the desired properties of $G$ in the following two claims. Recall that $z$ is the primary output of the circuit $G$, i.e., the output of the pseudogate it computes.

\begin{claim}\label{clm:opt-gate-feasible}
If $(y,y') = (\overline{y},\overline{y}')$ and $\Feas(\CP(w,A,b,R)) \neq \emptyset$, then $z \in \Feas(\CP(w,A,b,R))$.
\end{claim}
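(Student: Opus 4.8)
The plan is to reduce the statement to the feasibility of $x$ itself and then argue by contradiction, leaning on \cref{clm:opt-gate-2} and \cref{clm:opt-gate-3}. Since $(y,y') = (\overline{y},\overline{y}')$, \cref{clm:opt-gate-2} gives $x = z$, so it suffices to show $x \in \Feas(\CP(w,A,b,R))$. The constraint $x \in [-R,R]^n$ is automatic because $x = 2Ry - R$ with $y \in [0,1]^n$, so only the equalities $a_j \cdot x = b_j$ and the inequalities $g_i(x\,;w) \le 0$ remain to be verified. Fix any $u \in \Feas(\CP(w,A,b,R))$, which exists by hypothesis.

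Next I would show that a violated constraint forces $\mu_0 = 0$. If $g_i(x\,;w) > 0$ for some $i$, then by \cref{clm:opt-gate-1} we have $\mu_i \in \Heaviside(g_i(x\,;w)) = \{1\}$, so $\mu_i = 1$; since every $\mu_\ell$ lies in $[0,1]$ and every $|\lambda_j| \le 1$ (because $\lambda_j \in 2\Heaviside(\cdot) - 1 \subseteq [-1,1]$), the definition $\mu_0 = 1 - \max(\mu_1,\dots,\mu_k,|\lambda_1|,\dots,|\lambda_m|)$ yields $\mu_0 \le 1 - \mu_i = 0$, and together with $\mu_0 \ge 0$ this gives $\mu_0 = 0$. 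Likewise, if $a_j \cdot x \ne b_j$ for some $j$, then $\lambda_j \in 2\Heaviside(a_j \cdot x - b_j) - 1 = \{\pm 1\}$, so $|\lambda_j| = 1$ and again $\mu_0 = 0$.

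Finally I would extract the contradiction from the ``furthermore'' clause of \cref{clm:opt-gate-3}: when $\mu_0 = 0$, all the terms $\mu_i v_i \cdot (u - x)$ and $\lambda_j a_j \cdot (u-x)$ vanish. But reusing the internal reasoning of that claim, in the first case $\mu_i = 1 > 0$ gives $v_i \cdot (u-x) \le g_i(u\,;w) - g_i(x\,;w) < 0$ (using $g_i(u\,;w) \le 0 < g_i(x\,;w)$), hence $\mu_i v_i \cdot (u-x) < 0$ --- a contradiction; in the second case $a_j \cdot u = b_j$ gives $\lambda_j a_j \cdot (u-x) = \lambda_j (b_j - a_j \cdot x)$, which is strictly negative because $\lambda_j$ and $a_j \cdot x - b_j$ share the same nonzero sign --- again a contradiction. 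Therefore no constraint is violated, $x$ is feasible, and $z = x \in \Feas(\CP(w,A,b,R))$. I do not anticipate a real obstacle: \cref{clm:opt-gate-2} and \cref{clm:opt-gate-3} carry the weight, and the only mild care points are recording that $\mu_0 \ge 0$ and reusing (rather than merely citing) the inner inequalities established in the proof of \cref{clm:opt-gate-3}.
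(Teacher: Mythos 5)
Your proposal is correct and follows essentially the same route as the paper's proof: reduce to feasibility of $x$ via \cref{clm:opt-gate-2}, assume a violated constraint to force $\mu_0=0$, and then contradict the vanishing of the cross terms from the ``furthermore'' clause of \cref{clm:opt-gate-3} using the subgradient inequality (for a violated $g_i$) or the sign of $\lambda_j(a_j\cdot x - b_j)$ (for a violated equality). The only cosmetic difference is your slightly more explicit derivation of $\mu_0=0$ from $\mu_0\le 0$ and $\mu_0\ge 0$, which the paper reads off directly from the definition of $\mu_0$.
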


\begin{proof}
We will show that $x \in \Feas(\CP(w,A,b,R))$, which suffices to prove the claim since $x=z$ by \cref{clm:opt-gate-2}. 
Since $\Feas(\CP(w,A,b,R)) \neq \emptyset$, there exists a feasible vector $u$, i.e., $u \in [-R,R]^n$ such that $Au=b$ and $g_i(u\,;w) \leq 0$ for $i=1,\dots,k$.

Now, towards a contradiction, let us assume that $x \notin \Feas(\CP(w,A,b,R))$. Since $x \in [-R,R]^n$, this means that there exists $i^\star$ with $g_{i^\star}(x\,;w) > 0$, or $j^\star$ with $a_{j^\star} \cdot x \neq b_{j^\star}$. In both cases, it follows that $\mu_0 = 0$, since by \cref{clm:opt-gate-1}, $\mu_{i^\star}=1$ or $\lambda_{j^\star} \in \{-1,1\}$, respectively. By \cref{clm:opt-gate-3}, it follows that $\mu_i v_i \cdot (u-x) = 0$ for all $i$, and $\lambda_j a_j \cdot (u-x) = 0$ for all $j$.

If there exists $i^\star$ with $g_{i^\star}(x\,;w) > 0$, then by \cref{clm:opt-gate-1} we have that $\mu_{i^\star} = 1 > 0$. Furthermore, since $v_{i^\star} \in \partial g_{i^\star}(x\,;w)$ (\cref{clm:opt-gate-1}), it follows by the definition of subgradients (\cref{def:subgradient}) that $v_{i^\star} \cdot (u-x) \leq g_{i^\star}(u\,;w) - g_{i^\star}(x\,;w) < 0$, since $u$ is feasible. But this means that $\mu_{i^\star} v_{i^\star} \cdot (u-x) < 0$, a contradiction.

It remains to consider the case where there exists $j^\star$ with $a_{j^\star} \cdot x \neq b_{j^\star}$. If $a_{j^\star} \cdot x > b_{j^\star}$, then $\lambda_{j^\star} = 1 > 0$ (\cref{clm:opt-gate-1}), and $a_{j^\star} \cdot (u-x) < 0$, since $u$ is feasible. On the other hand, if $a_{j^\star} \cdot x < b_{j^\star}$, then $\lambda_{j^\star} = -1 < 0$ (\cref{clm:opt-gate-1}), and $a_{j^\star} \cdot (u-x) > 0$, since $u$ is feasible. As a result, in both cases we obtain that $\lambda_{j^\star} a_{j^\star} \cdot (u-x) < 0$, a contradiction.

Since we have obtained a contradiction in all cases, it must be that $x \in \Feas(\CP(w,A,b,R))$.
\end{proof}

\begin{claim}
If $(y,y') = (\overline{y},\overline{y}')$ and $\CP(w,A,b,R)$ satisfies the explicit Slater condition, then we have $z \in \Opt(\CP(w,A,b,R))$.
\end{claim}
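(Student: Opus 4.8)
The plan is to show that the primary output $z$ equals the point $x$ built by the circuit, that $x$ is feasible, and that $x$ actually minimizes $f(\cdot\,;w)$ over the feasible region. The first two facts come essentially for free: by \cref{clm:opt-gate-2} we have $x=z$, and since a Slater point is in particular feasible we have $\Feas(\CP(w,A,b,R))\neq\emptyset$, so \cref{clm:opt-gate-feasible} gives $z=x\in\Feas(\CP(w,A,b,R))$. I would then fix a Slater point $\hat x$, i.e.\ $\hat x\in(-R,R)^n$ with $A\hat x=b$ and $g_i(\hat x\,;w)<0$ for all $i$, and it would remain to prove $f(u\,;w)\ge f(x\,;w)$ for every feasible $u$.

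The crux is to show that the multiplier $\mu_0=1-\max(\mu_1,\dots,\mu_k,|\lambda_1|,\dots,|\lambda_m|)$ is strictly positive. I would argue by contradiction, assuming $\mu_0=0$, so that $\max(\mu_1,\dots,\mu_k,|\lambda_1|,\dots,|\lambda_m|)=1$. First I would rule out $\mu_i>0$ for any $i$: by \cref{clm:opt-gate-1} this forces $g_i(x\,;w)\ge 0$, and then the subgradient inequality (\cref{def:subgradient}) for $v_i\in\partial g_i(x\,;w)$ together with $g_i(\hat x\,;w)<0$ gives $v_i\cdot(\hat x-x)\le g_i(\hat x\,;w)-g_i(x\,;w)<0$, hence $\mu_i v_i\cdot(\hat x-x)<0$ — contradicting the ``furthermore'' part of \cref{clm:opt-gate-3} applied with the feasible point $u=\hat x$ and $\mu_0=0$, which makes that quantity exactly $0$. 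So all $\mu_i=0$, and therefore $|\lambda_{j^\star}|=1$ for some $j^\star$. Next I would show that the vector $\nu$ of \cref{clm:opt-gate-2} vanishes: \cref{clm:opt-gate-3} with $u=\hat x$ and $\mu_0=0$ gives $\nu_r(\hat x_r-x_r)=0$ for all $r$, so $\nu\cdot\hat x=\nu\cdot x$; but \cref{clm:opt-gate-2} pins each coordinate with $\nu_r\neq 0$ to $\mp R$ (opposite sign to $\nu_r$), so $\nu\cdot x=-R\sum_r|\nu_r|$, while $\hat x\in(-R,R)^n$ yields $\nu_r\hat x_r>-R|\nu_r|$ for every $r$ with $\nu_r\neq 0$, so $\nu\cdot\hat x>-R\sum_r|\nu_r|$ unless $\nu=0$; comparing the two equalities forces $\nu=0$. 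Since $\mu_0=0$ and all $\mu_i=0$, the definition $\nu=\mu_0 v_0+\sum_i\mu_i v_i+\sum_j\lambda_j a_j$ then gives $\sum_{j=1}^m\lambda_j a_j=0$, and the linear independence of the rows of $A$ (the second clause of \cref{def:explicit-slater-convex}) forces $\lambda=0$, contradicting $|\lambda_{j^\star}|=1$. Hence $\mu_0>0$.

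Finally, with $\mu_0>0$ in hand, optimality falls out of convexity. For an arbitrary feasible $u$, rearranging $\nu=\mu_0 v_0+\sum_i\mu_i v_i+\sum_j\lambda_j a_j$ and taking the inner product with $u-x$ gives $\mu_0\, v_0\cdot(u-x)=\sum_r\nu_r(u_r-x_r)-\sum_i\mu_i v_i\cdot(u-x)-\sum_j\lambda_j a_j\cdot(u-x)$; by \cref{clm:opt-gate-3} the first sum on the right is nonnegative and the two subtracted sums are nonpositive, whence $\mu_0\, v_0\cdot(u-x)\ge 0$ and thus $v_0\cdot(u-x)\ge 0$. Since $v_0\in\partial f(x\,;w)$ by \cref{clm:opt-gate-1}, the subgradient inequality gives $f(u\,;w)-f(x\,;w)\ge v_0\cdot(u-x)\ge 0$. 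As $x$ is itself feasible, this shows $x=z\in\Opt(\CP(w,A,b,R))$.

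The step I expect to be the main obstacle is proving $\mu_0>0$: a priori the circuit could output a point satisfying only the degenerate ($\mu_0=0$) Fritz--John-type stationarity, which carries no information about the objective, and ruling this out is exactly where \emph{both} halves of the explicit Slater condition are needed — the strictly interior feasible point to annihilate the $\mu_i$'s and to force $\nu=0$, and the linear independence of the equality rows to then annihilate the $\lambda_j$'s. Everything else is routine bookkeeping on top of \cref{clm:opt-gate-2,clm:opt-gate-3} and the definition of subgradient.
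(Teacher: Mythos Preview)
Your proof is correct and follows essentially the same approach as the paper's own proof: use \cref{clm:opt-gate-2} and \cref{clm:opt-gate-feasible} for feasibility, rule out $\mu_0=0$ via the Slater point (kill the $\mu_i$'s with the strict inequality constraints, kill $\nu$ with the strict box interior, then kill the $\lambda_j$'s via linear independence), and finish with the subgradient inequality when $\mu_0>0$. The only cosmetic differences are that the paper eliminates $\nu$ coordinate-wise rather than via your aggregated inner-product comparison, and phrases the final contradiction as ``$\mu_0=1$'' rather than ``$|\lambda_{j^\star}|=1$ yet $\lambda=0$''.
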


\begin{proof}
By \cref{clm:opt-gate-2}, $x=z$, and thus it suffices to show that $x \in \Opt(\CP(w,A,b,R))$. By \cref{clm:opt-gate-feasible}, we already know that $x$ is feasible for $\CP(w,A,b,R)$.

By \cref{clm:opt-gate-1}, we always have $\mu_0 \geq 0$. Let us first consider the case where $\mu_0 > 0$. Let $u \in [-R,R]^n$ be any feasible point. Taking the inner product of \eqref{eq:opt-gate-nu} with $(u-x)$ we obtain
\[\mu_0 v_0 \cdot (u-x) = \sum_{r=1}^n \nu_r (u_r-x_r) - \sum_{i=1}^k \mu_i v_i \cdot (u-x) - \sum_{j=1}^m \lambda_j a_j \cdot (u-x).\]
By \cref{clm:opt-gate-3}, all the terms on the right hand side are non-negative. This implies that $\mu_0 v_0 \cdot (u-x) \geq 0$ and thus $v_0 \cdot (u-x) \geq 0$. Since $v_0 \in \partial f(x\,;w)$ (\cref{clm:opt-gate-1}), by the definition of subgradients (\cref{def:subgradient}), it follows that $f(u)-f(x) \geq v_0 \cdot (u-x) \geq 0$. Since this holds for any feasible point $u$, this means that $x$ is an optimal solution, i.e., $x \in \Opt(\CP(w,A,b,R))$.

It remains to handle the case where $\mu_0 = 0$. We will show that this case cannot occur. Towards a contradiction, assume that indeed $\mu_0 = 0$. Since $\CP(w,A,b,R)$ satisfies the explicit Slater condition, there exists $u \in (-R,R)^n$ with $Au=b$ and $g_i(u\,;w) < 0$ for $i=1, \dots, k$. In particular, $u$ is feasible and since $\mu_0=0$, by \cref{clm:opt-gate-3} we obtain that $\nu_r (u_r-x_r) = 0$ for all $r$, $\mu_i v_i \cdot (u-x) = 0$ for all $i$, and $\lambda_j a_j \cdot (u-x) = 0$ for all $j$.

If $\nu_r > 0$ for some $r$, then, by \cref{clm:opt-gate-2}, we have $x_r=-R$. Since $u_r \in (-R,R)$, it follows that $u_r-x_r > 0$, and thus $\nu_r (u_r-x_r) > 0$, a contradiction. If $\nu_r < 0$ for some $r$, then, by \cref{clm:opt-gate-2}, we have $x_r=R$, and thus again $\nu_r (u_r-x_r) > 0$, a contradiction. As a result, we obtain that $\nu_r = 0$ for all $r=1,\dots,n$.

If $\mu_i > 0$ for some $i$, then by \cref{clm:opt-gate-1} it must be that $g_i(x\,;w) \geq 0$. Since $v_i \in \partial g_i(x\,;w)$ (\cref{clm:opt-gate-1}), it follows by the definition of subgradients (\cref{def:subgradient}) that $v_i \cdot (u-x) \leq g_i(u\,;w) - g_i(x\,;w) < 0$, because $g_i(u\,;w) < 0$. Thus, we obtain that $\mu_i v_i \cdot (u-x) < 0$, a contradiction. As a result, we have $\mu_i = 0$ for all $i=1,\dots,k$.

Now, since $\mu_0=\mu_i=\nu_r=0$, the equation in \cref{clm:opt-gate-2} just yields $\sum_{j=1}^m \lambda_j a_j = 0$. But $\CP(w,A,b,R)$ satisfies the explicit Slater condition, so the vectors $a_j$, $j=1,\dots,m$, are linearly independent. It follows that $\lambda_j=0$ for all $j=1,\dots,m$. However, note that this is a contradiction, because according to the construction of $\mu_0$, if $\mu_i=0$, for all $i=1,\dots,k$, and $\lambda_j=0$, for all $j=1,\dots,m$, then $\mu_0=1$.
\end{proof}

The proof of \cref{thm:OPT-gate-convex} is thus completed.

\section{Applications to Game Theory}\label{sec:games}

In this section, we discuss further applications of our technique to equilibrium computation in strategic games. In \cref{sec:OPT-gate-Nash}, we already demonstrated how the employment of our OPT-gate can make the FIXP-membership proof of normal form games essentially straightforward. In this section, we provide further FIXP-membership results, namely:
\begin{itemize}[leftmargin=*]
\item[-] Computing equilibria in \emph{concave $n$-player games} \citep{rosen1965concave}. In these games, which generalize the normal form games mentioned above, the players have continuous strategy spaces and continuous payoff functions. Again via a relatively simple proof based on convex programs rather than Linear Programs, we show that computing equilibria of these games is in \FIXP; the \FIXP-completeness follows from the \FIXP-hardness of normal form games due to \citet{SICOMP:EtessamiY10}.
\item[-] Computing \emph{\eps-proper equilibria} in normal form games, an equilibrium refinement due to \citet{IJGT:Myerson78}. We show that the corresponding problem is in \FIXP. Our proof first shows how to compute solutions to \emph{systems of conditional convex constraints} (\cref{sec:cond-conv-constraints}) using our OPT-gate, and then employs this to show the FIXP-membership result for \eps-proper equilibria.  
\item[-] Computing \emph{stationary $\lambda$-discounted equilibria} in $n$-player stochastic games \citep{PNAS:Shapley53}. We show that this problem is \FIXP-complete. The FIXP-membership could technically already be achieved via the machinery used by \citet{SICOMP:EtessamiY10} to achieve the FIXP-membership of the 2-player problem, but our proof uses the OPT-gate here as well.
The \FIXP-hardness follows from \citep{SICOMP:EtessamiY10} by viewing a normal form game as a stochastic game consisting of a single state.
\end{itemize}

\subsection{Concave \texorpdfstring{$\boldsymbol{n}$}{n}-player games}\label{sec:concave-games}

In this section, we generalize the \FIXP-membership result from normal form games to concave games, a class of games studied by \citet{rosen1965concave}, which we define below. Together with the \FIXP-hardness result for normal form games \citep{SICOMP:EtessamiY10}, we obtain the following result.

\begin{theorem}\label{thm:concave-games}
Computing an equilibrium of a concave $n$-player game is \FIXP-complete.
\end{theorem}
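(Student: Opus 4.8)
The plan is to obtain \FIXP-hardness essentially for free and \FIXP-membership by feeding each player's best-response program into the OPT-gate of \cref{thm:OPT-gate-convex}, exactly as was done for normal form games in \cref{sec:OPT-gate-Nash}. For the hardness, I would note that a normal form game \emph{is} a concave $n$-player game: player $i$'s strategy set is the simplex $\Sigma_i = \{y \in \RRnn^{m_i} : \sum_j y_j = 1\}$, which is compact, convex, and carries a single linearly independent equality constraint, while the expected payoff $\tilde u_i$ is multilinear, hence in particular concave (indeed linear) in $x_i$ for every fixed $x_{-i}$. Thus the \FIXP-hardness of Nash equilibrium in normal form games \citep{SICOMP:EtessamiY10} transfers immediately, and only membership remains.

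For membership, recall Rosen's setting: each player $i$ has a compact convex strategy set $X_i \subseteq \RR^{n_i}$ and a continuous payoff $u_i \colon X \to \RR$ (with $X = X_1 \times \dots \times X_n$) that is concave in $x_i$ for every fixed $x_{-i}$, and $x^\star \in X$ is an equilibrium iff $x_i^\star \in \argmax_{y_i \in X_i} u_i(y_i, x_{-i}^\star)$ for all $i$. I would take the computational problem to present each $X_i$ by an explicit box $[-R,R]^{n_i}$ together with linear equalities $A_i y_i = b_i$ and convex inequalities $g_{i,j}(y_i) \leq 0$ satisfying the explicit Slater condition (\cref{def:explicit-slater-convex}), the payoffs $u_i$ as algebraic circuits, and the supergradients of $u_i(\cdot\,;x_{-i})$ in $x_i$ as pseudogates (which is automatic for the usual payoff classes; cf.\ \cref{app:PDC}). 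On the polytope domain $D = \prod_{i=1}^n [-R,R]^{n_i}$ I would then use the OPT-gate to define $F \colon D \to D$, $x = (x_1,\dots,x_n) \mapsto (F_1(x),\dots,F_n(x))$, where $F_i(x)$ is the OPT-gate output for the convex program
\begin{equation*}
\begin{aligned}
\text{minimize}\quad & -u_i(y_i\,;x_{-i})\\
\text{subject to}\quad & A_i y_i = b_i,\quad g_{i,j}(y_i) \leq 0 \quad (j=1,\dots,k_i),\quad y_i \in [-R,R]^{n_i}
\end{aligned}
\end{equation*}
with external parameter $w = x_{-i}$ for the objective. Since the constraint functions defining $X_i$ do not depend on $x_{-i}$, the explicit Slater condition holds for every value of the parameter, so by \cref{thm:OPT-gate-convex} each $F_i(x)$ lies in $\Opt$ of the corresponding program; moreover $F_i(x) \in [-R,R]^{n_i}$ always, thanks to the OPT-gate's built-in projection, so $F$ indeed maps $D$ into $D$. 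As explained in \cref{sec:OPT-gate}, the OPT-gates and the supergradient pseudogates contribute auxiliary inputs/outputs, so the object actually built is a genuine algebraic circuit $\widetilde F \colon D \times [0,1]^N \to D \times [0,1]^N$ with $F = \fix_N[\widetilde F]$, constructible in polynomial time.

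I would then conclude with the two standard checks. At a fixed point $(x,y) = \widetilde F(x,y)$, all OPT-gates and supergradient pseudogates fire correctly, so $x_i = F_i(x) \in \argmax_{y_i \in X_i} u_i(y_i, x_{-i})$ for every $i$; hence $x \in X$ and $x$ is a Rosen equilibrium of the game. A fixed point exists because $\widetilde F$ is a continuous self-map of the compact convex polytope $D \times [0,1]^N$ (by \cref{thm:brouwer}; equivalently this also follows from Rosen's existence theorem \citep{rosen1965concave}), which in particular gives $\Sol(f(I)) \neq \emptyset$ whenever $\Sol(I) \neq \emptyset$, as required for the reduction. Finally, this is an \emph{SL}-reduction to a basic \FIXP\ problem in the sense of \cref{def:FIXP}: the instance map outputs $(D \times [0,1]^N, \widetilde F)$ in polynomial time, and the solution map simply projects a fixed point of $\widetilde F$ onto its $x$-coordinates. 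Together with the hardness above, this gives \FIXP-completeness.

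I expect no deep obstacle here: all the real work sits inside \cref{thm:OPT-gate-convex}, and this proof is essentially the bookkeeping that Rosen's best-response map is, coordinate block by coordinate block, a parameterized convex program of precisely the shape the OPT-gate solves. The only point that needs genuine care is pinning down the input conventions so that the OPT-gate's hypotheses hold \emph{uniformly} in the parameter $x_{-i}$ — namely that each $X_i$ comes with an explicit bound and an explicit-Slater presentation (linearly independent equalities and a strictly feasible interior point), and that supergradients of the payoffs are available as pseudogates. These requirements are in the spirit of Rosen's standing assumptions and hold for the natural classes of payoff functions, so they are restrictions of form rather than of substance.
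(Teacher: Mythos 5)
Your proposal is correct and follows essentially the same route as the paper: hardness via normal form games being a special case, and membership by wiring each player's best-response convex program into the OPT-gate on the product of boxes $[-R_i,R_i]^{m_i}$, with supergradient pseudogates for the payoffs and the Slater condition guaranteeing correctness at fixed points. The only cosmetic difference is that the paper assumes just the general Slater condition and notes that linear independence of the equality constraints can be enforced by preprocessing the input matrices $A_i$, whereas you bake the explicit Slater condition directly into the input convention.
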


\noindent An $n$-player game $G$ consists of $n$ players, with each player $i\in [n]$ having a
compact and convex strategy space $\Sigma_i\subseteq\RR^{m_i}$ and a continuous 
payoff function $u_i\colon \Sigma\rightarrow\RR$, where $\Sigma:= \Sigma_1\times
\cdots\times\Sigma_n$. For a strategy profile $x\in \Sigma$, let $(y,x_{-i})=(x_1,
\dots, x_{i-1},y,x_{i+1},\dots, x_n)$ denote the strategy profile where player $i$ 
unilaterally changes strategy. The game $G$ is \emph{concave} if for any player $i$
and fixed $x\in\Sigma$, the function $u_i(y,x_{-i})$ is concave in $y$. 

A strategy profile $x\in\Sigma$ is an \emph{equilibrium} if $u_i(x)=\max_{y\in\Sigma_i}
u_i(y,x_{-i})$ for every player $i$, that is, no player can increase its payoff by a 
unilateral change of strategy. \citet{rosen1965concave} provides a proof that every concave game
admits an equilibrium point by constructing an upper hemicontinuous correspondence 
$F\colon\Sigma\rightarrow 2^{\Sigma}$ mapping any $x\in\Sigma$ to the set $\{y\in\Sigma\mid \forall i\colon 
y_i\mbox{ is a best response to }x_{-i}\}$. One then applies Kakutani's fixed point theorem
to show the existence of some strategy profile $x\in\Sigma$ with $x\in F(x)$. As $x_i$ is
a best response to $x_{-i}$ for all players by construction, $x$ is an equilibrium. 

\paragraph{Computational problem.}
In the computational problem, we assume that each strategy space $\Sigma_i$ is given
as the set of all $x\in [-R_i,R_i]^{m_i}$ satisfying equality constraints $A_i x=b_i$ and inequality 
constraints $g_{ij}(x)\leq 0$ for some $R_{i}>0$, matrix $A_i\in\RR^{d_i\times m_i}$, 
vector $b_i\in\RR^{d_i}$, and convex functions $g_{ij},$ $j=1,\dots, k_i$. As $A_i$
is given as input, we may apply preprocessing to eliminate linear dependence among the rows,
so we just have to assume that the constraints satisfy the general Slater condition. We also assume that we are given algebraic circuits for $g_{ij}$ and pseudogates computing
their subdifferentials $\partial g_{ij}$, as well as the superdifferentials $\partial u_i$.

In order to prove \FIXP-membership, we construct a circuit $F\colon D \rightarrow D$, where $D = \bigtimes_i [-R_i,R_i]^{m_i}$. On input $x$, the $i$th output of the circuit $F$ will be a best response of player $i$ to $x_{-i}$. Namely, the $i$-th output of $F$ is simply set as the output of the OPT-gate for the following convex program:
\begin{equation*}
\begin{aligned}
\mbox{minimize}\quad & -u_i(y,x_{-i})\\
\mbox{subject to}\quad & A_i y = b_i\\
& g_{ij}(y) \leq 0 \qquad j=1,\dots, k_i\\
& y \in [-R_i,R_i]^{m_i}
\end{aligned}
\end{equation*}
Note that the explicit Slater condition is satisfied by assumption. Thus, the OPT-gate correctly solves the convex program. As a result, if $x$ is a fixed point of $F$, then $x \in \Sigma$ and $x_i$ is a best response to $x_{-i}$ for every player
$i$, meaning that $x$ is indeed an equilibrium of the game.

\begin{remark}
\citet{rosen1965concave} actually considers a more general setting where the space of strategy
profiles $\Sigma$ is not assumed to be equal to the product space of the players' individual strategy spaces. 
Rather, he just assumes that the space of strategies is a compact and convex subset 
$\Sigma\subseteq\Sigma_1\times\cdots\times\Sigma_n$. Let us assume that $\Sigma$ is given
as the set of all $x\in [-R,R]^m$ that satisfy $Ax=b$ and $g_j(x)\leq 0$, $j=1,\dots, k$, where, as per usual, $A$ is a matrix, $b$ 
a vector and the $g_j$ are convex functions. We may write $A = (A_1\mid\cdots\mid A_n)$ as a concatenation
of block matrices. For fixed $x\in\Sigma$, player $i$ would then maximize its utility $u_i(y,x_{-i})$ subject to
the constraints $A_i y = b_i(x_{-i}) := b-\sum_{j\neq i}A_j x_j$ and $g_j(y,x_{-i})\leq 0$ for all $j$. As $A_i$ is given in the input
we can apply preprocessing to it and remember the linear combinations required to eliminate potential
linear dependence in the rows of $A_i$. Applying these same linear combinations to the $A_i$ and $b_i$, we
obtain constraints $\tilde{A}_i y = \tilde{b}_i$. It suffices to require that the constraints
$A_i y=b_i(x_{-i})$ and $g_j(y,x_{-i})\leq 0$, $j=1,\dots, k$, satisfy the general Slater condition
for all $i\in [n]$ and $x\in\Sigma$. 
\end{remark}

\subsection{Computing an \eps-proper equilibrium via systems of conditional convex constraints}\label{sec:eps-proper}

In this section we consider the Nash equilibrium refinement of proper
equilibrium due to \citet{IJGT:Myerson78}. First, we define the
notion of $\eps$-proper equilibrium and then define a proper
equilibrium as a limit point of $\eps$-proper equilibria.
\begin{definition}[\citep{IJGT:Myerson78}] \label{def:eps-proper}
  Let $\Gamma$ be a finite $n$-player game in strategic form. Given
  $\eps>0$, a mixed strategy profile $x$ is an \emph{$\eps$-proper
  equilibrium} in $\Gamma$ if it is fully mixed and satisfies
  $x_{ik} \leq \eps x_{i\ell}$ whenever
  $u_i(k,x_{-i}) < u_i(\ell,x_{-i})$ for all players~$i$ and all pairs
  of actions $k,\ell$ of player~$i$.\medskip

  \noindent A mixed strategy profile $x$ is a \emph{proper equilibrium} if and only if it is a
  limit point of a sequence of $\eps$-proper
  equilibria with $\eps \rightarrow 0^+$.
\end{definition}

\noindent It was proved recently by \citet{EC:HansenL18} that the task of approximating a proper equilibrium is complete for the class $\FIXPa$ of \citep{SICOMP:EtessamiY10}. This work follows a line of similar results~\citep{SAGT:EtessamiHMS14,GEB:Etessami2021} for approximating other notions of equilibrium refinements, e.g.\ \citeauthor{selten1975reexamination}'s trembling hand perfect equilibrium \citep{selten1975reexamination}, that are, like proper equilibria, defined as limit points of certain $\eps$-equilibria. These previous results were proved by showing that the problems of computing the $\eps$-equilibria are in \FIXP. In fact they can be computed by a reduction to a basic \FIXP-problem (\cref{def:basic-fixp}) where $\eps$ is an input variable of the algebraic circuit. This additional property is exploited to prove that approximating the equilibrium refinement notions is in \FIXPa\ by using the ability of an algebraic circuit to compute a ``virtual infinitesimal'' by means of repeated squaring that then takes the place of \eps. 

\citet{EC:HansenL18} did not prove that computing an \eps-proper equilibrium is in \FIXP, but instead proved that computing a so-called $\delta$-almost \eps-proper equilibrium is in \FIXP. These equilibria can in fact be computed by reducing to a basic \FIXP-problem where $\delta$ and $\eps$ are inputs of the algebraic circuit. It is then shown that approximating a proper equilibrium is in \FIXPa\ by substituting ``virtual infinitesimals'' for both $\delta$ and $\eps$. The question of whether computing an \eps-proper equilibrium is in \FIXP\ was left as an open problem. Using our technique we resolve this question, and thereby significantly simplify the proof of \citet{EC:HansenL18} that approximating a proper equilibrium is \FIXPa-complete.

\begin{theorem}\label{thm:eps-proper}
  The problem of computing an $\eps$-proper equilibrium of a given finite game $n$-player game in normal form is in $\FIXP$.
\end{theorem}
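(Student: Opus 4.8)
The plan is to reduce computing an $\eps$-proper equilibrium to solving a \emph{system of conditional convex constraints} in the sense of \cref{sec:cond-conv-constraints}, the crux being a reformulation of the ``fully mixed'' requirement of \cref{def:eps-proper} as an explicit positivity constraint. Recall (\cref{def:eps-proper}) that $x$ is $\eps$-proper iff it is fully mixed and, for every player $i$ and every ordered pair of actions $(k,\ell)$ of player $i$,
\[ u_i(k,x_{-i}) < u_i(\ell,x_{-i}) \ \Longrightarrow\ x_{ik} \le \eps\,x_{i\ell}. \]
Fix an explicit rational $\gamma = \gamma(\eps, M) > 0$, where $M := \max_i m_i$, to be pinned down below, and for each player $i$ and each profile $x_{-i}$ of the others define
\[ T_i(x_{-i}) = \bigl\{\, y \in \Sigma_i \;:\; y_k \ge \gamma \text{ for all } k, \ \text{and}\ u_i(k,x_{-i}) < u_i(\ell,x_{-i}) \Rightarrow y_k \le \eps\, y_\ell \,\bigr\}. \]
The role of $\gamma$ is to turn the non-closed condition ``$y$ fully mixed'' into a genuine inequality constraint while remaining compatible with it: any point of $\prod_i T_i(x_{-i})$ is fully mixed, and conversely, for $\gamma$ small enough, $\eps$-proper responses with all probabilities at least $\gamma$ will exist. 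In particular, writing $T := \prod_i T_i$, every fixed point of the correspondence $T$ (i.e.\ every $x$ with $x_i \in T_i(x_{-i})$ for all $i$) is an $\eps$-proper equilibrium, since $x_{ik} \ge \gamma > 0$ gives full mixedness and the remaining constraints are exactly those of \cref{def:eps-proper}.

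The \FIXP-circuit is then the natural one. On input $x$ it first computes, by an algebraic subcircuit, the expected payoffs $u_i(k,x_{-i})$ for all $i$ and $k$ (these are multilinear in $x$, hence require only $+$ and $*$ gates). For each player $i$ it then feeds the values $\bigl(u_i(\ell,x_{-i}) - u_i(k,x_{-i})\bigr)_{k,\ell}$ as the switching conditions, together with the simplex equality $\sum_k y_k = 1$, the inequalities $y_k \ge \gamma$, and the conditional inequalities $y_k \le \eps\, y_\ell$, into the conditional-convex-constraints pseudogate of \cref{sec:cond-conv-constraints}, which returns some $y^{(i)} \in T_i(x_{-i})$; the circuit outputs $(y^{(1)}, \dots, y^{(n)})$. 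Its domain is the polytope $\prod_i \Sigma_i$, so this is a basic \FIXP\ problem (\cref{def:basic-fixp}); by Brouwer's fixed point theorem (\cref{thm:brouwer}) the circuit has a fixed point, and by the previous paragraph any fixed point is an $\eps$-proper equilibrium, so the identity map is an \FIXP\ (SL-)reduction from the $\eps$-proper problem to this basic problem, placing it in \FIXP.

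Everything above is routine once we verify the hypothesis that makes the conditional-convex-constraints pseudogate output a genuine element of $T_i(x_{-i})$: for \emph{every} $x_{-i}$, the linear system defining $T_i(x_{-i})$ must be feasible and satisfy the appropriate explicit-Slater-type condition (its only equality is the single row $\sum_k y_k = 1$, so linear independence is automatic; what is needed is a strictly feasible point). This is the only genuinely nonroutine step, and it is where the choice of $\gamma$ matters. Given $x_{-i}$, the payoffs $u_i(\cdot\,,x_{-i})$ partition $S_i$ into tiers $S_i^{(1)}, \dots, S_i^{(t)}$ from best to worst ($t \le M$); assign every action in tier $j$ the same probability proportional to $\beta^{\,j-1}$, where $\beta := \min(\eps,1)/2$. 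After normalisation the smallest probability is at least $\beta^{\,M-1}/M$, and for an action in a worse tier $j$ versus one in a better tier $j' < j$ the ratio of probabilities is $\beta^{\,j-j'} \le \beta < \eps$, so all conditional inequalities hold \emph{strictly}; hence taking $\gamma := \beta^{M}/M$ (which is strictly below $\beta^{M-1}/M$, so $y_k > \gamma$ is strict as well, and is a rational computable from the instance in polynomial time) yields a strictly feasible point for every $x_{-i}$. One should also check that ties in the payoffs, which are exactly the discontinuity points of the switching conditions, are handled correctly --- but this is precisely what the framework of \cref{sec:cond-conv-constraints} is built to guarantee (its output is always a subset of the target set, per \cref{def:pseudogate} and \cref{rem:pseudogate-def}). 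As a byproduct, since $T$ is readily seen to be upper hemicontinuous and nonempty-, convex- and compact-valued, the same construction gives a clean, discretisation-free proof of Myerson's existence theorem via Kakutani (\cref{thm:kakutani}).

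The main obstacle is thus the last point: recognising that the non-closed ``fully mixed'' condition must be replaced by an explicit positive lower bound, and exhibiting a $\gamma = \gamma(\eps, M)$ for which the resulting per-player linear systems are \emph{uniformly} (over all $x_{-i}$) strictly feasible. Once that is in place, the $\eps$-proper result follows formally from the conditional-convex-constraints machinery, which itself rests on the OPT-gate (\cref{thm:OPT-gate-convex}).
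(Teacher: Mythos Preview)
Your proposal is correct and takes essentially the same approach as the paper: encode Myerson's correspondence as a system of conditional convex (in fact linear) constraints over an explicitly $\eta$-perturbed simplex, verify that the per-player constraint set is always nonempty via a rank-based geometrically decaying witness, and invoke \cref{thm:cond-conv-constraints}. The paper uses Myerson's original bound $\eta_i=\eps^{m_i}/m_i$ and his witness $y_{ik}\propto\eps^{\rho_i(k)}$, whereas you use $\gamma=(\min(\eps,1)/2)^{M}/M$ with a tier-based witness; these are interchangeable, and your explicit check of \emph{strict} feasibility is slightly more than what \cref{thm:cond-conv-constraints} actually requires (mere nonemptiness suffices, by the first bullet of \cref{thm:OPT-gate-convex}).
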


\noindent To establish existence of \eps-proper equilibria, \citet{IJGT:Myerson78} made use of Kakutani's fixed point theorem (\cref{thm:kakutani}). Suppose that $\Gamma$ is a given $n$-player game in strategic form and $\eps>0$. Let $S_i=[m_i]$ and $u_i$ be the set of strategies and utility function of Player~$i$. Define $\eta_i(\eps)=\eps^{m_i}/{m_i}$ and let

\[\Sigma^{\eta_i}_i = \Big\{y \in \RR^{m_i} \Big| \sum_j y_j=1 ; \forall j:
y_j \geq \eta_i\Big.\Big\}
\]
be the set of \emph{$\eta_i$-perturbed} mixed strategies
for Player~$i$. Let $\eta=(\eta_1,\dots,\eta_n)$ and define $\Sigma^\eta = \prod_{i=1}^n \Sigma^{\eta_i}_{i}$ to be the set of
all $\eta$-perturbed mixed strategy profiles for $\Gamma$. Define the correspondence $F \colon \Sigma^\eta \rightrightarrows \Sigma^\eta$ by $F(x) = \prod_{i=1}^n F_i(x)$, where
\[
    F_i(x) = \{ y \in \Sigma^\eta \mid \forall k,\ell \in S_i : u_i(k,x_{-i}) < u_i(\ell,x_{-i}) \Rightarrow y_{ik} \leq \eps y_{i\ell}\} \enspace .
\]
Clearly any fixed point of $F$ is an $\eps$-proper equilibrium of $\Gamma$.
\citeauthor{IJGT:Myerson78} concluded his proof by showing that $F$ satisfies the conditions of the Kakutani fixed point theorem.
In particular, $F_i$ is nonempty since we have $y_i \in F_i(x)$ where
\[
    y_{ik} = \eps^{\rho_i(k)}/ \sum_{\ell \in S_i} \eps^{\rho_i(\ell)}
\]
and $\rho_i(k) = \abs{\{\ell \in S_i \mid u_i(k,x_{-i}) < u_i(\ell,x_{-i})\}}$. \medskip

\noindent Computing a fixed point of $F$ is a special case of the result of the following subsection (see \cref{thm:cond-conv-constraints}).

\subsubsection{Solving Systems of Conditional Convex Constraints}\label{sec:cond-conv-constraints}

In this section we consider the task of solving systems of what we
refer to as conditional convex constraints by finding fixed points. We make
use of the main result of the section (\cref{thm:cond-conv-constraints}) to prove
\cref{thm:eps-proper}, but it could be applicable to other problems as well, and therefore
it could be of independent interest.

\begin{definition}
  A conditional convex constraint on $n$ variables is a pair $(f,g)$
  of a continuous function $f: \RR^n \rightarrow \RR$ and a convex
  function $g: \RR^n \rightarrow \RR$. A point $x \in \RR^n$ satisfies $(f,g)$ if
  $f(x) > 0 \Rightarrow g(x) \leq 0$.
\end{definition}
A system of conditional convex constraints naturally defines a search
problem, where the task is to find a point $x$ that satisfies all
constraints of the system. A system of conditional convex constraints
also defines a correspondence in a natural way. We shall further
restrict our attention to correspondences with nonempty, compact, and
convex domain.
\begin{definition}
  Let $D \subseteq \RR^n$ be a non-empty, compact and convex set. Let
  $(f_1,g_1),\dots,(f_m,g_m)$ be conditional convex constraints on $n$
  variables.  The correspondence $F \colon D \rightrightarrows D$
  defined by $D$ and $(f_1,g_1),\dots,(f_m,g_m)$ is given by
  \[
    F(x) = \{y \in D \mid \forall i: f_i(x)>0 \Rightarrow g_i(y)\leq 0\} \enspace .
  \]
\end{definition}
Note that there is a one-to-one correspondence between fixed points of
$F$ and solutions of the system of constraints contained in $D$.

Except for the property of nonempty-valued, such correspondences
satisfy the conditions of Kakutani's fixed point theorem.
\begin{proposition}
  Let $F$ be a correspondence defined by a non-empty, compact and
  convex set $D \subseteq \RR^n$ and conditional convex constraints
  $(f_1,g_1),\dots,(f_m,g_m)$.  Then $F$ is uhc as well as compact and
  convex-valued.
\end{proposition}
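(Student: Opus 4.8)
The plan is to check the three asserted properties in turn, the first two being immediate and the third (upper hemicontinuity) being the only one that needs an argument. For convex-valuedness, fix $x \in D$ and note that
\[
F(x) = D \cap \bigcap_{i \,:\, f_i(x) > 0} \{y \in \RR^n : g_i(y) \leq 0\}.
\]
Each set $\{y : g_i(y) \leq 0\}$ is a sublevel set of a convex function, hence convex, and $D$ is convex by hypothesis, so $F(x)$ is an intersection of convex sets and therefore convex. For compact-valuedness, recall that a convex function $g_i \colon \RR^n \to \RR$ that is finite everywhere is continuous; hence each sublevel set $\{y : g_i(y) \leq 0\}$ is closed, and $F(x)$ is a closed subset of the compact set $D$, so it is compact.

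For upper hemicontinuity, I would invoke the standard fact that a correspondence whose range lies in a compact set is uhc whenever it has closed graph. It therefore suffices to show that the graph $\{(x,y) \in D \times D : y \in F(x)\}$ is closed. So suppose $x^{(k)} \to x$ and $y^{(k)} \to y$ in $D$ with $y^{(k)} \in F(x^{(k)})$ for all $k$, and fix an index $i$ with $f_i(x) > 0$; the goal is to show $g_i(y) \leq 0$. Since $f_i$ is continuous and $f_i(x) > 0$, we have $f_i(x^{(k)}) > 0$ for all sufficiently large $k$, and then the condition $y^{(k)} \in F(x^{(k)})$ forces $g_i(y^{(k)}) \leq 0$ for those $k$. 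Passing to the limit and using continuity of $g_i$ yields $g_i(y) \leq 0$, so $y \in F(x)$, and the graph is closed. Since $D$ is compact, $F$ is uhc. (Equivalently one can argue directly: if uhc failed at some $x$ for some open $V \supseteq F(x)$, then one could extract from the compact set $D \setminus V$ a convergent subsequence of points $y^{(k)} \in F(x^{(k)}) \setminus V$ with $x^{(k)} \to x$, and the same limiting argument would place the limit in $F(x) \subseteq V$, a contradiction.)

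The only mildly delicate point—the ``main obstacle'' such as it is—is to remember to use continuity of the convex functions $g_i$ in the two places where it is needed: once to get closedness of the sublevel sets (for compact-valuedness), and once to pass the inequality $g_i(y^{(k)}) \leq 0$ to the limit (for the closed-graph argument). Everything else is a routine unpacking of the definitions together with the closed-graph characterization of upper hemicontinuity. Note that nonempty-valuedness is deliberately not claimed here, and indeed none of the above arguments require it.
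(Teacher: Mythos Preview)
Your proof is correct. The paper takes a slightly different (and marginally shorter) route for upper hemicontinuity: rather than invoking the closed-graph characterization, it argues directly that for $x'$ sufficiently close to $x$ one has $F(x') \subseteq F(x)$. Indeed, by continuity of the $f_i$, every index $i$ with $f_i(x) > 0$ still has $f_i(x') > 0$ on a neighborhood of $x$, so the set of active constraints at $x'$ contains the set of active constraints at $x$, and hence $F(x') \subseteq F(x) \subseteq V$. This monotonicity observation avoids any appeal to the continuity of the $g_i$ in the uhc step (though of course one still needs their sublevel sets to be closed for compact-valuedness, just as you note). Your closed-graph argument is equally valid and more in line with standard textbook machinery; the paper's version simply exploits the specific structure of the correspondence to get a one-line proof of uhc.
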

\begin{proof}
  Let $x \in D$ and let $V \subseteq D$ be an open set such that
  $F(x) \subseteq V$. By continuity of the functions $f_i$ we may find
  an open set $U$ containing $x$ such that if $x' \in U$ and
  $f_i(x')>0$ then we have $f_i(x) >0$ as well. It follows that
  $F(x') \subseteq F(x) \subseteq V$, which means that $F$ is uhc.  We
  also have that $F(x)$ is an intersection of closed and convex sets,
  and $F(x)$ is thus closed and convex as well.
\end{proof}
Thus if we had a guarantee that $F$ was nonempty-valued as well, a
fixed point would be guaranteed by Kakutani's fixed point theorem. We
can associate a total search problem with $F$ where the task is to
find $x \in D$ such that either $x \in F(x)$ or $F(x)=\emptyset$.
For the computational problem, we assume that $D$ is given as a set of
linear constraints $x\in [-R,R]^n$ and $Ax=b$, and convex constraints
$h_i(x)\leq 0$, $i=1,\dots, k$, that satisfy the explicit Slater condition. We also assume that we are given algebraic circuits computing
$f_i,g_i$ and $h_i$, and pseudogates computing the subgradients of $g_i$ and $h_i$. 

The idea is that the function $G$ in the proof below is derived from a system of convex constraints in variables $y$ that are parameterized by variables $x$. We consider the constraints given by $\max(0,f_i(x))g_i(y)\leq 0$. When $f_i(x)>0$ this is equivalent to the constraint $g_i(y)\leq 0$. When $f_i(x)\leq 0$, the constraint becomes trivial.

\begin{theorem}\label{thm:cond-conv-constraints}
The problem of solving systems of conditional convex constraints is in $\FIXP$.
\end{theorem}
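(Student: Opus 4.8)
The plan is to reduce the search problem to finding a Brouwer fixed point of a circuit built around a single invocation of the OPT-gate of \cref{thm:OPT-gate-convex}. Following the idea sketched before the theorem, I would enforce each conditional constraint $(f_i,g_i)$ through the inequality $\max(0,f_i(x))\,g_i(y)\le 0$: this is exactly $g_i(y)\le 0$ when $f_i(x)>0$ and vacuous when $f_i(x)\le 0$, and for fixed $x$ its left-hand side is a nonnegative scalar times a convex function, hence convex in $y$. A subgradient pseudogate for it is obtained by composing the Heaviside pseudogate of \cref{lem:Heaviside-gate} with the given pseudogate for $\partial g_i$, using that $\mu v$ is a subgradient of $y\mapsto\max(0,g_i(y))$ whenever $\mu\in\Heaviside(g_i(y))$ and $v\in\partial g_i(y)$.

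Rather than feed these inequalities to the OPT-gate as the constraints of a feasibility program, I would move them into the \emph{objective}, so that the program is always feasible and its optimum is forced to lie in $D$; this is precisely what lets us also handle the case $F(x)=\emptyset$. Concretely, for a parameter $x\in[-R,R]^n$ I consider the convex program
\begin{equation*}
\begin{aligned}
\text{minimize}\quad & \phi(y\,;x):=\sum_{i=1}^{m}\max(0,f_i(x))\,\max(0,g_i(y))\\
\text{subject to}\quad & Ay=b,\quad h_j(y)\le 0 \ \ (j=1,\dots,k),\quad y\in[-R,R]^n,
\end{aligned}
\end{equation*}
whose feasible region is exactly $D$ (nonempty by assumption) and whose objective is continuous in $(x,y)$ and convex in $y$. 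After the usual preprocessing making the rows of $A$ independent, I assume (as elsewhere in the paper) that the constraints defining $D$ satisfy the explicit Slater condition; crucially this condition does not involve $x$. Then \cref{thm:OPT-gate-convex}, applied with $x$ in the role of the external parameter $w$, produces in polynomial time an algebraic circuit $G_{\mathrm{opt}}$ whose induced correspondence, at its internal fixed points, always lands in $D$ and in fact in $\Opt$ of the program above.

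I then let $\tilde F\colon[-R,R]^n\times[0,1]^{\ell}\to[-R,R]^n\times[0,1]^{\ell}$ be the circuit that on input $(x,\mathrm{aux})$ runs $G_{\mathrm{opt}}$ on the parameters encoding the program together with $\mathrm{aux}$, and outputs the resulting $(z,\overline{\mathrm{aux}})$; the projection step inside the OPT-gate guarantees $z\in[-R,R]^n$, so $\tilde F$ maps its (rational) polytope domain into itself and has a fixed point by Brouwer. At a fixed point $(x^\star,\mathrm{aux}^\star)$ we have $\mathrm{aux}^\star=\overline{\mathrm{aux}^\star}$, so the internal fixed-point constraints of all pseudogates hold and $x^\star=z$; by \cref{thm:OPT-gate-convex} (feasibility and Slater), $x^\star\in\Opt$, hence $x^\star\in D$ and $\phi(x^\star\,;x^\star)=\min_{y\in D}\phi(y\,;x^\star)$. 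If $F(x^\star)\neq\emptyset$, any $\hat y\in F(x^\star)$ gives $\phi(\hat y\,;x^\star)=0$, so the minimum is $0$, every summand of $\phi(x^\star\,;x^\star)$ vanishes, and therefore $g_i(x^\star)\le 0$ for every $i$ with $f_i(x^\star)>0$; with $x^\star\in D$ this means $x^\star\in F(x^\star)$. Otherwise $F(x^\star)=\emptyset$ with $x^\star\in D$. In both cases $x^\star$ solves the search problem, and reading off its first $n$ coordinates is a separable-linear (SL) solution map; since $\tilde F$ always has a fixed point, the reduction is meaningful, and the problem is in $\FIXP$.

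The main point requiring care is exactly this last case analysis: guaranteeing that every fixed point of the constructed circuit is a genuine solution \emph{even when $F$ is empty-valued there}, in particular that it still lies in $D$. A plain feasibility formulation fails precisely here, because the OPT-gate gives no guarantee on its output when the program is infeasible, so the fixed point could drift outside $D$; routing the conditional constraints through the always-feasible penalty objective above is what closes this gap. The remaining steps (convexity of $\phi(\cdot\,;x)$, building the subgradient pseudogate for $\phi$ from the Heaviside pseudogate and the given pseudogates for $\partial g_i$, and checking the polynomial-time bound) are routine.
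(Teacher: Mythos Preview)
Your argument is correct and takes a genuinely different route from the paper. The paper keeps the conditional constraints $\max(0,f_i(x))\,g_i(z)\le 0$ as \emph{hard constraints} in a feasibility program and works on the doubled domain $D\times D$, mapping $(x,y)\mapsto(y,H(x,y))$; when the program is infeasible it concludes $F(x)=\emptyset$ and declares $x$ a solution. You instead push the conditional part into a nonnegative penalty objective $\phi(y;x)=\sum_i\max(0,f_i(x))\max(0,g_i(y))$ over the fixed feasible set $D$, so the program is \emph{always} feasible and always satisfies Slater; hence the OPT-gate always lands in $D$, and the case split $F(x^\star)\neq\emptyset$ versus $F(x^\star)=\emptyset$ goes through cleanly with $x^\star\in D$ guaranteed. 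This is exactly the delicate point you flag: in the straight feasibility formulation the OPT-gate carries no guarantee on its output when the program is infeasible, so ensuring $x^\star\in D$ there (and even that the circuit is a self-map on $D\times D$) is not immediate when $D$ carries nonlinear convex constraints. Your penalty-in-the-objective trick buys a uniform Slater condition and a simpler single-copy domain $[-R,R]^n\times[0,1]^\ell$, at the cost of building a subgradient pseudogate for $\phi$; the paper's route is shorter to state and leans on the feasibility clause of \cref{thm:OPT-gate-convex}.

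One small expository slip: in your first paragraph you describe the subgradient for the constraint function $\max(0,f_i(x))\,g_i(y)$ but justify it with a fact about $y\mapsto\max(0,g_i(y))$. The latter is what you actually need for your objective $\phi$, and your claim that $\mu v\in\partial\big(\max(0,g_i)\big)(y)$ for $\mu\in\Heaviside(g_i(y))$, $v\in\partial g_i(y)$ is correct; for the bare constraint function $\max(0,f_i(x))\,g_i(y)$ no Heaviside is needed, since its subgradient in $y$ is simply $\max(0,f_i(x))\cdot\partial g_i(y)$. This does not affect your main argument, which uses the objective form throughout.
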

\begin{proof}
Define a circuit $G\colon [-R,R]^n \times [-R,R]^n \to [-R,R]^n \times [-R,R]^n$, by letting $G(x,y) := (\overline{x},\overline{y})$. Here $\overline{y}$ is computed as the output of our convex OPT-gate for the following feasibility
problem (parameterized by $x$)
\begin{equation*}
\begin{aligned}
\mbox{maximize}\quad & 0\\
\mbox{subject to}\quad & Az=b\\
 & h_i(z)\leq 0\qquad i = 1,\dots, k\\
 & z \in [-R,R]^n\\
 & \max(0,f_i(x))g_i(z)\leq 0\qquad i=1,\dots, m
\end{aligned}
\end{equation*}
and $\overline{x}$ is the projection of $y$ onto $D$, which can be obtained by using the OPT-gate to solve
\begin{equation*}
\begin{aligned}
\mbox{minimize}\quad & \|y-z\|_2^2\\
\mbox{subject to}\quad & Az=b\\
 & h_i(z)\leq 0\qquad i = 1,\dots, k\\
 & z \in [-R,R]^n
\end{aligned}
\end{equation*}
Suppose that $(x,y)$ is a fixed point of $G$. We argue that $x$ is a solution to
the search problem described above. First of all, note that $x \in D$, because $x = \overline{x}$ is the projection of some point (namely, $y$) onto $D$. Now there are two cases. If the set of feasible solutions to the first convex program is
empty, then it follows that $F(x)=\emptyset$, and so $x \in D$ is indeed a solution to our search problem. 
If, on the other hand, the set of feasible solutions
is non-empty, then it follows from the first part of \cref{thm:OPT-gate-convex} that
$\overline{y} = y$ is a solution to the first convex program. The first three constraints show that $y\in D$, and
from the inequalities $\max(0,f_i(x))g_i(y)\leq 0$ it follows that $f_i(x)>0\Rightarrow g_i(y)\leq 0$, i.e., $y \in F(x)$. But since $x$ is the projection of $y$ onto $D$, and $y \in D$, it must be that $x = y$, and thus $x \in F(x)$.
\end{proof}

\subsection{\texorpdfstring{$\boldsymbol{n}$}{n}-player Stochastic Games}\label{sec:stochastic-games}

Stochastic games, as first introduced by \citeauthor{PNAS:Shapley53} in his seminal
work~\citep{PNAS:Shapley53}, model dynamic interaction between players
in an environment whose state is changing according to a stochastic
process influenced by the actions of the players. We shall here
consider discrete-time finite games, where players receive immediate
payoffs in each round of play and discount future payoffs. \citeauthor{PNAS:Shapley53}'s
model then corresponds to the special case of two-player zero-sum games.\medskip

\noindent The main result of this section is the following.

\begin{theorem}\label{thm:stochastic-games}
Computing a \emph{stationary $\lambda$-discounted equilibrium} of an $n$-player stochastic game is \FIXP-complete.
\end{theorem}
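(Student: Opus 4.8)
The plan is to get \FIXP-hardness almost for free and to obtain \FIXP-membership by turning the existence proof of \citet{JSHUA:Takahashi1964} and \citet{JSHUA:Fink1964} into a basic \FIXP\ problem, using an enlarged domain in the spirit of the OPT-gate technique. For hardness, observe that a normal form game is exactly a stochastic game with a single state (for any discount factor $\lambda$), and a stationary $\lambda$-discounted equilibrium of it is simply a Nash equilibrium; since computing a Nash equilibrium of a normal form game is \FIXP-hard \citep{SICOMP:EtessamiY10} and the solution mapping is the identity on strategies, this is an SL-reduction.

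For membership, recall the structure of the Takahashi--Fink argument: one fixes a profile of continuation values $v=(v_i^s)$ (one real number per player $i$ and state $s$), forms at each state $s$ a \emph{local} normal form game in which player $i$'s payoff for playing action $a_i$ is its immediate reward plus $\lambda$ times the expected continuation value under the induced transition, simultaneously picks (via Nash/Kakutani) an equilibrium of every local game, and then replaces $v$ by one application of the Bellman operator for the chosen stationary profile; a fixed point of this composite correspondence is a stationary $\lambda$-discounted equilibrium. I would realise this as a Brouwer function on the domain $D = V \times \Sigma$, where $\Sigma = \prod_{i,s}\Delta(A_i^s)$ is a rational polytope of stationary strategy profiles and $V=[-M,M]^{nN}$ is the box of value profiles with $M=R_{\max}/(1-\lambda)$ ($N$ the number of states, $R_{\max}$ a bound on the rewards); $D$ is a nonempty compact convex polytope with rational description since $\lambda$ and the rewards are rational. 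Define $F\colon D\to D$ by: (i) on the value coordinates, output the one-step Bellman update
\[
\tilde v_i^s(v,x)=\sum_{a\in A^s}\Big(\prod_j x_j^s(a_j)\Big)\Big(r_i^s(a)+\lambda\sum_{s'}p^s(s'\mid a)\,v_i^{s'}\Big),
\]
which is a polynomial in $(v,x)$, hence computable by an algebraic circuit, and satisfies $|\tilde v_i^s|\le R_{\max}+\lambda M=M$, so $\tilde v$ maps $V$ into $V$; (ii) on the strategy coordinates of player $i$ at state $s$, output either (the restriction to the simplex of) Nash's continuous improvement map applied to the payoff vector $q_i^s(a_i)=\sum_{a_{-i}}\big(\prod_{j\ne i}x_j^s(a_j)\big)\big(r_i^s(a_i,a_{-i})+\lambda\sum_{s'}p^s(s'\mid a)\,v_i^{s'}\big)$, or equivalently the output of the OPT-gate (\cref{thm:OPT-gate-LP}) for the LP of maximising $\sum_{a_i}y_{a_i}q_i^s(a_i)$ over $y\in\Delta(A_i^s)$, whose explicit Slater condition (\cref{def:explicit-slater-LP}) holds trivially; in the OPT-gate version the associated auxiliary $[0,1]^\ell$ coordinates are appended to $D$ in the usual way.

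It then remains to check that the fixed points of $F$ are exactly the stationary $\lambda$-discounted equilibria. If $(v,x)$ is a fixed point then $v=\tilde v(v,x)$; for fixed $x$ the map $v\mapsto\tilde v(v,x)$ is affine with linear part $\lambda$ times a row-stochastic matrix, hence a sup-norm contraction, so its unique fixed point is the true discounted value profile $v(x)$ of $x$, i.e.\ $v=v(x)$. The condition $x=\hat x(v,x)$ on the strategy coordinates then says that at every state $s$ the mixed action $x_i^s$ is optimal in the local game with the correct continuation values $v(x)$; by the one-shot-deviation principle for $\lambda$-discounted stochastic games this is equivalent to $x$ being a stationary $\lambda$-discounted equilibrium, and the solution mapping of the reduction is the (separable-linear) projection of a fixed point onto its $x$-coordinates. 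Conversely $F$ is continuous and well-defined (no division by zero), so Brouwer's theorem (\cref{thm:brouwer}) yields a fixed point, reproving the Takahashi--Fink existence theorem; thus the search problem SL-reduces to a basic \FIXP\ problem, and combined with the hardness above it is \FIXP-complete.

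I expect the main obstacle to be the one already flagged in the motivating Nash example: the best-response (equivalently, local-equilibrium) map is inherently discontinuous in $(v,x_{-i})$, so it cannot be dropped verbatim into a Brouwer function; the resolution is precisely to use the pseudogate/OPT-gate or Nash's improvement function in step (ii). A subtler point is that the correct continuation value $v(x)$ is obtained only by iterating the Bellman operator to convergence, which is impossible in a single circuit pass; as suggested by the OPT-gate philosophy, I promote $v$ to a domain variable and let the fixed-point equation $v=\tilde v(v,x)$ carry out the iteration. (The published proof apparently keeps an extra copy of $\Sigma$ in the domain, i.e.\ works with triples $(v,x,x')$; I would expect this to be a bookkeeping convenience that does not change the argument.) Everything else is routine verification: that $V$ is a rational polytope self-mapped by $\tilde v$, that all the payoff coefficients $q_i^s$ are genuinely algebraic circuits in the inputs, and that the rational discount factor $\lambda$ enters only through multiplications.
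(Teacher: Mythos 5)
Your proposal is correct, and it follows the same overall strategy as the paper: hardness by viewing a one-state stochastic game as a normal form game, and membership by building a Brouwer function on a product of a value-profile box and the stationary-strategy polytope, with statewise best-response enforcement. The difference is in the bookkeeping and in how the final step is closed. The paper works on triples $(v,x,y)$ with \emph{two} copies of the strategy space: the value coordinate is updated to the payoff of $y_i$ against $x_{-i}$, the second coordinate copies $y$, and the third applies a continuous map $\widetilde{H}$ whose fixed points (in $y$, for fixed $(v,x)$) are best replies; at a fixed point one lands exactly on a fixed point of Takahashi's correspondence, so his characterization of such fixed points as $\lambda$-discounted equilibria is invoked as a black box. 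You instead collapse to pairs $(v,x)$, update $v$ by the plain Bellman evaluation operator of $x$, enforce best responses on $x$ itself (via Nash's improvement map or the OPT-gate, whose explicit Slater condition indeed holds trivially for the simplex LP), and then argue directly that $v=v(x)$ by the sup-norm contraction of the evaluation operator and conclude via the one-shot-deviation principle for discounted games. Both routes are sound: yours is more self-contained and avoids the extra copy of $\Sigma$ (your parenthetical guess that the paper's triple is a presentational device mirroring Takahashi's correspondence, rather than a necessity, is accurate), at the price of supplying the standard MDP facts (uniqueness of the evaluation fixed point and greedy-implies-optimal against arbitrary behavioral deviations) that the paper delegates to Takahashi; your discounting convention ($r+\lambda\cdot$continuation with $M=R_{\max}/(1-\lambda)$) differs from the paper's normalization ($\lambda u + (1-\lambda)\cdot$continuation with $M=\max|u|$) only cosmetically. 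As the paper also notes, the OPT-gate is optional here, and your Nash-improvement-map variant matches that remark.
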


\noindent Next, we first define $n$-player stochastic games formally, as well as the equilibrium notion which appears in the statement of \cref{thm:stochastic-games} above. Then we present the proof of \FIXP-membership for the problem; the \FIXP-hardness follows from \citep{SICOMP:EtessamiY10}, by considering a single-state stochastic game, i.e., a normal form game. \\

\noindent An $n$-player finite stochastic game $\Gamma$ is given as
follows. The game is played on a finite set of states $S$. Every
player~$i$ has a finite set of actions $A_i$. Let
$A = A_1\times \dots \times A_n$ denote the set of action profiles and
$P = \{(s,a) : s \in [n], a \in A\}$ the pairs of states and action
profiles. The immediate payoffs to player~$i$ are then given by a
function $u_i \colon P \rightarrow \RR$ and the state transitions are given
by a function $q \colon P \rightarrow \Simplex(S)$. Let $M = \max_{i,(s,a)\in P} \abs{u_i(s,a)}$.

A play of $\Gamma$ is an infinite sequence $h \in P^\infty$. A finite
play up to stage~$t$ is a sequence $h_t \in P^{t-1} \times S$. Let
$\calH = \mathbin{\dot{\cup}}_{t=1}^\infty \left(P^{t-1} \times
  S\right)$ denote the set of all finite plays. A behavioral strategy
for player~$i$ is a function
$\sigma_i \colon \calH \rightarrow \Simplex(A_i)$. A stationary
strategy is a behavioral strategy that depends only on the last state
of a finite play. A stationary strategy $x_i$ may thus be viewed as a
function $x_i \colon S \rightarrow \Simplex(A_i)$. Behavioral
strategies $\sigma_i$ for each player~$i$ form a behavioral strategy
profile $\sigma=(\sigma_1,\dots,\sigma_n)$. In the same way,
stationary strategies for each player form a stationary strategy
profile. A behavioral strategy profile $\sigma$ and an initial state
$s^1 \in S$ define by Kolmogorov's extension theorem a unique
probability distribution $\Pr_{s^1,\sigma}$ on plays
$(s^1,a^1,s^2,a^2,\dots)$, where the conditional probability of
$a^t=a$ given the play up to stage~$t$, $h_t=(s^1,a^1,\dots,s^t)$, is
equal to $\prod_{i=1}^n \Pr[\sigma_i(h_t)=a_i]$, and the conditional
probability of $s^{t+1}$ given $s^t$ and $a^t$ is equal to
$q(s^t,a^t)$. We denote by $\Exp_{s^1,\sigma}$ the expectation with
respect to $\Pr_{s_1,\sigma}$.\medskip

\noindent For every \emph{discount factor} $0<\lambda\leq 1$, the
$\lambda$-discounted payoff to player~$i$ is defined to be
\begin{equation}
  \gamma_i^\lambda(s^1,\sigma) = \Exp_{s^1,\sigma} \left[ \lambda \sum_{t=1}^\infty (1-\lambda)^{t-1}u_i(s^t,a^t)\right] \enspace .
\end{equation}
A behavioral strategy profile $\sigma$ is a $\lambda$-discounted equilibrium if
\begin{equation}
  \gamma_i^\lambda(s^1,\sigma) \geq \gamma_i^\lambda(s^1,(\sigma'_i,\sigma_{-i})) \enspace ,
\end{equation}
for all states $s_1 \in S$, all players $i \in [n]$, and all
behavioral strategies $\sigma'_i$ for player~$i$. It was proved by
\citet{JSHUA:Fink1964} and \citet{JSHUA:Takahashi1964}
that any finite discounted stochastic game has a $\lambda$-discounted
equilibrium in stationary strategies for any discount
factor~$\lambda$. In the case of two-player zero-sum games, \citeauthor{PNAS:Shapley53}
proved existence of the $\lambda$-discounted value
$v_\lambda \in \RR^S$, as well as optimal stationary strategies, for the
$\lambda$-discounted payoff. \medskip

\noindent The results of \citeauthor{PNAS:Shapley53}, \citeauthor{JSHUA:Fink1964}, and \citeauthor{JSHUA:Takahashi1964} lead to a natural real-valued total search problem. \citet{SICOMP:EtessamiY10} proved the $\FIXP$-membership of the
problem of finding the $\lambda$-discounted values and optimal
stationary strategies in two-player zero-sum stochastic games with
discounted payoffs.

Now let $\Gamma$ be an $n$-player stochastic game. The proofs by \citeauthor{JSHUA:Fink1964}
and \citeauthor{JSHUA:Takahashi1964} of existence of $\lambda$-discounted equilibrium in
stationary strategies both make use of Kakutani's fixed point
theorem (\cref{thm:kakutani}).\footnote{More accurately, the proof by \citet{JSHUA:Takahashi1964} applies to stochastic games with
infinite action spaces, and as a consequence uses a generalization of
Kakutani's fixed point to locally convex spaces due to \citet{PNAS:Fan52}
and \citet{PAMS:Glicksberg1952}.} Let us now consider the
approach of \citeauthor{JSHUA:Takahashi1964}, specialized to finite stochastic games. \medskip

\noindent Valuations of states $v_i \colon S \rightarrow \RR^n$ by every
player~$i$ now form a \emph{valuation profile} $v=(v_1,\dots,v_n)$.
Given a discount factor $\lambda$ and a valuation profile $v$, we can
form associated $n$-player normal form games $\Gamma_{s,\lambda}(v)$,
generalizing the case of two players. For every state $s \in S$, the
utility function $u_i^{s,\lambda,v} \colon A \rightarrow \RR$ of
player~$i$ in $\Gamma_{s,\lambda}(v)$ is given by
\[
  u_{i}^{s,\lambda,v} (a) = \lambda u_i(s,a) + (1-\lambda)\sum_{s' \in S} q(s' \mid s,a) v_i(s')
\]
A stationary strategy profile $x=(x_1,\dots,x_n)$ in $\Gamma_\lambda$
induces strategy profiles $x(s)=(x_1(s),\dots,x_n(s))$ in the games
$\Gamma_{s,\lambda}(v)$, and corresponding valuations of states
$u_i^{s,\lambda,v}(x(s))$. Let
\[D = ([-M,M]^S)^n \times (\Delta(A_1)^S \times \dots \times
\Delta(A_n)^S)\]
be the set of valuation profiles $v$ and stationary strategy
profiles $x$. \citeauthor{JSHUA:Takahashi1964} defines a correspondence
$F : D \rightrightarrows D$ whose fixed points are pairs of valuation
profiles and stationary strategy profiles such that the stationary
strategy profiles are $\lambda$-discounted equilibria in
$\Gamma_\lambda$. Letting $F(v,x)=(G(v,x),H(v,x))$, \citeauthor{JSHUA:Takahashi1964} defines
\[
  G(v,x)_{i,s} = \max_{y(s)_i \in \Delta(A_i)} u_i^{s,\lambda,v}(y(s)_i; x(s)_{-i})
\]
and
\[
  H(v,x)_{i,s} = \argmax_{y(s)_i \in \Delta(A_i)} u_i^{s,\lambda,v}(y(s)_i; x(s)_{-i})
\]
By Berge's maximum theorem (\cref{thm:berge}), the correspondence $F$ satisfies the
requirements of Kakutani's fixed point theorem (\cref{thm:kakutani}) which in turn yields
existence of a fixed point $(v,x)$. \citeauthor{JSHUA:Takahashi1964} then proves that the
stationary strategy profile $x$ is a $\lambda$-discounted equilibrium
in $\Gamma_\lambda$.\medskip

\noindent We obtain \FIXP-membership by replacing these correspondences by OPT gates, obtaining a circuit computing a function $F': D' \to D'$, where $D' := ([-M,M]^S)^n \times ([0,1]^{|A_1|})^S \times \dots \times
([0,1]^{|A_n|})^S$. On input $(v,x)$, simply consider for every player~$i$ and every state~$s$ the linear program computing best replies for player~$i$ in the game $\Gamma_{s,\lambda}$:
\begin{equation*}
\begin{aligned}
\mbox{maximize}\quad & u_i^{s,\lambda,v}(z,x(s)_{-i})\\
\mbox{subject to}\quad & \sum_{j=1}^{|A_i|} z_j = 1\\
& z \in [0,1]^{|A_i|}
\end{aligned}
\end{equation*}
Denote by $\overline{x}(s)_i$ the output of the OPT gate. Then $\overline{v}(s)_i=u_i^{s,\lambda,v}(\overline{x}(s)_i,x(s)_{-i})$ is computed and we let $F'(v,x)=(\overline{v},\overline{x})$. The set of fixed points of $F$ and $F'$ clearly coincide and the result follows.

\section{Applications to Cake Cutting}\label{sec:cakes}

In this section, we discuss the applications of our main technique to the area of fair division, and in particular to the complexity of the well-known \emph{envy-free cake cutting problem} \citep{gamow1958puzzle} (see also \citep{robertson1998cake,brams1996fair,procaccia2013cake}). In this problem, the cake serves as a metaphor for a divisible resource---represented without loss of generality by the interval $[0,1]$---which needs to be divided among a set of $n$ agents, such that every agent receives a piece of cake she prefers compared to any piece assigned to any other agent. In the general formulation of the problem, a ``piece'' can be a collection of possibly disconnected subintervals. In the \emph{contiguous} version, each piece is a single interval (and hence the cake is divided using $n-1$ cuts), and in that case any division of the cake can be represented as a point $x$ in the simplex $\Delta^{n-1}$, with $x_j$ denoting the $j$-th coordinate.

Formally, the valuation of agent $i$ for the $j$-th piece is given by a map $u_{ij}\colon\Delta^{n-1}\rightarrow \RRnn$. Given a division $x$, we say that agent $i$ prefers the $j$-th piece
if $u_{ij}(x)\geq u_{ik}(x)$ for every $k$. We say that a 
division $x$ is \textit{envy-free} if there exists a permutation $\pi$ of $\{1,\dots, n\}$
such that for every $i$, agent $i$ prefers piece $\pi(i)$. For the computational version of the problem, we will assume that the valuations are given by means of an algebraic circuit as defined in \cref{def:algebraic-circuit}.

\citet{AMM:Stromquist1980} proved that an envy-free division of the cake is guaranteed to exist, even for the contiguous version, as long as the valuations $u_{ij}$ are continuous functions and the agents are \textit{hungry}, that is no agent prefers an empty piece of cake.

\begin{theorem}[\citet{AMM:Stromquist1980}]
When the valuations are continuous and the agents are hungry, an envy-free division of the cake always exists.
\end{theorem}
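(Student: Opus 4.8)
The plan is to give a \emph{direct} proof via Brouwer's fixed point theorem (\cref{thm:brouwer}) on the simplex $\Delta^{n-1}$ of contiguous divisions, rather than via the discretization-plus-limit argument of Stromquist and Simmons (triangulate $\Delta^{n-1}$, apply Sperner's lemma or the K-K-M lemma, refine, and take a convergent subsequence). The only non-elementary ingredient will be a maximum-flow computation, which is exactly what makes the proof convertible into a \FIXP-membership result via the OPT-gate. Setup: for a division $x\in\Delta^{n-1}$ and agent $i$, let $P_i(x)=\{j : u_{ij}(x)\ge u_{ik}(x)\text{ for all }k\}$ be the set of pieces $i$ likes best; hungriness gives $x_j=0\Rightarrow j\notin P_i(x)$ for all $i$. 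Form the bipartite ``demand graph'' $\calG(x)$ between the $n$ agents and the $n$ pieces, with an edge $(i,j)$ precisely when $j\in P_i(x)$. A division is envy-free exactly when $\calG(x)$ admits a perfect matching (the matching is the witnessing permutation $\pi$), so it suffices to produce an $x$ with that property.

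Next I would encode the search for a perfect matching as a flow problem whose value varies nicely with $x$. Build a network with source $s$, sink $t$, an arc $s\to i$ of capacity $1$ for each agent, an arc $i\to j$ of large capacity for each edge of $\calG(x)$, and an arc $j\to t$ of capacity $1$ for each piece. By max-flow/min-cut, the maximum flow has value $n$ if and only if every set $A$ of agents satisfies $\abs{N_{\calG(x)}(A)}\ge\abs{A}$, i.e.\ Hall's condition holds, i.e.\ $\calG(x)$ has a perfect matching. Given a maximum flow, call a piece \emph{tight} if its arc to $t$ is saturated (over-demanded) and \emph{slack} otherwise; define an update $T(x)$ that shifts mass from slack pieces toward tight pieces and renormalizes, so $T\colon\Delta^{n-1}\to\Delta^{n-1}$.

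Applying \cref{thm:brouwer} yields a fixed point $x^\star=T(x^\star)$, and one then argues: by hungriness together with the update rule, $x^\star$ lies in the interior of $\Delta^{n-1}$, so no piece is squeezed to size $0$; the fixed-point equation precludes the existence of a set of strictly over-demanded pieces, which forces the maximum flow at $x^\star$ to have value $n$; hence Hall's condition holds for $\calG(x^\star)$ and a perfect matching $\pi$ — the desired envy-free assignment — exists.

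The main obstacle will be continuity of $T$. The maximum-flow \emph{value} is continuous in the arc capacities, but $\calG(x)$ changes discontinuously as preference ties are crossed, and a maximizing flow is not unique, so ``flow into piece $j$'' is not a priori a well-defined function of $x$. I expect to handle both issues at once by (a) replacing the capacity of arc $i\to j$ by a continuous function of the preference gap $u_{ij}(x)-\max_k u_{ik}(x)$ that is positive exactly on $P_i(x)$ and decays to $0$ off it, so the whole network depends continuously on $x$, and (b) selecting a unique maximizing flow via a strictly convex secondary objective (a regularized min-cost-flow program), whose optimizer is continuous by Berge's maximum theorem (\cref{thm:berge}). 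Checking that these regularizations do not spoil the \emph{exact} perfect-matching conclusion at the fixed point — this is where hungriness and the precise form of the update rule do the real work — is the delicate part. Since the max-flow value, the regularized flow, and the update are all solutions of linear or convex programs, the resulting $T$ is realizable by a \FIXP-circuit using the OPT-gate (\cref{thm:OPT-gate-LP,thm:OPT-gate-convex}), which is precisely why this proof, unlike the classical one, also yields \FIXP-membership of the envy-free cake cutting problem.
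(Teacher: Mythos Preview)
Your high-level plan---bipartite demand graph, max flow, Brouwer fixed point, and an update that reallocates cake based on how saturated each piece is---is precisely the paper's approach. Two concrete points, however, differ from the paper and would cause your argument to fail as stated.

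First, the update direction is reversed. The paper's update is $\bar{x}_j=(x_j+r_j)/(1+\sum_k r_k)$ with $r_j=1-\sum_i y_{ij}$ the \emph{deficit} at piece $j$, so \emph{under-demanded} (slack) pieces grow and tight pieces shrink in relative terms. With your rule (shrink slack, grow tight), any division with a zero-size piece is a fixed point of that coordinate: by hungriness the piece receives no flow, hence is slack, hence your update keeps it at $0$. This undercuts your claim that hungriness forces the fixed point into the interior. With the paper's direction, one shows instead (Lemma~2 of the paper) that some piece has $r_j=0$ and $x_j>0$, and then the fixed-point equation $x_j\sum_k r_k=r_j$ forces all $r_k=0$.

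Second, your continuity device (a) is impossible as stated: $P_i(x)$ is a closed (but not generally open) set of indices for each $x$, and no continuous function of the preference gap $u_{ij}(x)-\max_k u_{ik}(x)$ can be positive exactly on $P_i(x)$ and zero off it. The paper does \emph{not} try to make the map single-valued and continuous in $x$ alone. Instead, the capacities $c_i$ are themselves outputs of the OPT-gate for the LP $\max\sum_j z_{ij}u_{ij}(x)$ over the simplex; at a fixed point this forces $c_{ij}>0\Rightarrow j\in P_i(x)$. The flow LP then uses capacities $c_{ij}+1/n^3$, which guarantees the explicit Slater condition while the additive slack is small enough that a Hall-condition violation still forces total flow strictly below $n$ (Lemma~1). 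Thus no Berge-type continuous selection or strictly convex regularization is needed: the pseudogate mechanism absorbs the multivaluedness by enlarging the domain to $\Delta^{n-1}\times(\Delta^{n-1})^n\times([0,1]^n)^n$, rather than by making a single-valued continuous choice.
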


\noindent Stromquist's proof considers the simplex of divisions as described above, and applies a variant of the K-K-M lemma of \citet{FM:KnasterKM1929} (see \cref{lem:kkm} in \cref{sec:kkm}), which regards the covering of the simplex by sets. While this lemma is defined on a continuous domain (the unit-simplex), \citet{AMM:Stromquist1980} actually applied it after constructing a subdivision of the simplex into ``cells'', first obtaining the existence of an \emph{approximately} envy-free division, and then invoked a limit argument to prove the existence for \emph{exact} envy-freeness. Another proof of existence was developed independently by Simmons in 1980 (published for the first time in \citep{AMM:Su1999} and attributed to Simmons as ``private communication to Michael Starbird''). This proof made use of the well-known Sperner's lemma from topology \citep{sperner1928neuer}, and therefore also works on a subdivision (a ``triangulation'') of the unit-simplex. Similarly to the proof of \citet{AMM:Stromquist1980}, the proof first establishes the existence of an approximately envy-free division, and then applies a limit argument to obtain the existence for the exact version.

Given their nature as described above, these existence proofs cannot be turned into a membership in FIXP. Indeed, the related literature has only gone as far as proving the membership of the approximate version of the problem in the class PPAD of \citet{JCSS:Papadimitriou1994}, a result due to \citet{OR:DengQS2012}. PPAD is fundamentally related to an approximate computational version of Brouwer's fixed point theorem, and in that sense can be seen as a discrete, approximate analogue of FIXP. Before our paper, the complexity of the \emph{exact} envy-free cake-cutting problem was not known. To this end, we provide the following theorem.

\begin{theorem}\label{thm:cake-cutting}
The envy-free cake cutting problem with very general valuations is $\FIXP$-complete.
\end{theorem}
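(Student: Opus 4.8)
The plan is to establish the two directions separately. For \FIXP-membership we will give a new, constructive existence proof for (contiguous) envy-free divisions that replaces the discretisation-plus-limit arguments of \citet{AMM:Stromquist1980} and Simmons \citep{AMM:Su1999} by a Brouwer map whose only non-elementary ingredient is a maximum-flow (equivalently, a bipartite-matching) computation; that computation is then implemented by the OPT-gate (\cref{thm:OPT-gate-LP}), turning the existence proof into a \FIXP-circuit essentially verbatim. For \FIXP-hardness we will reduce from the permutation-flavoured generalisation of Brouwer's fixed point problem of \citet{MP:Bapat1989}, which is \FIXP-hard because it contains ordinary simplex-Brouwer as the special case in which all $n$ input maps coincide; composing this reduction with the membership result then also places Bapat's problem in \FIXP, which is what \cref{sec:kkm} will exploit for the rainbow K-K-M problem \citep{IJGT:Gale1984}.

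\noindent\textbf{Membership.}
We work on the simplex $\Delta^{n-1}$ of contiguous divisions, where $x_j$ is the length of the $j$-th piece. Given $x$, set $s_{ij}(x) = \max_k u_{ik}(x) - u_{ij}(x) \ge 0$, the ``envy slack'' of agent $i$ at piece $j$; these are continuous in $x$ and $j$ is a preferred piece of agent $i$ exactly when $s_{ij}(x)=0$. We form the bipartite network with a source sending one unit to each agent, arcs from agent $i$ to every piece $j$, and unit-capacity arcs from each piece to the sink, and we route flow that (first) avoids high-slack arcs and (then) maximises the total routed flow $\phi(x)\le n$; both stages are linear programs over a polytope of the ``unit agent-supplies / loose piece-capacities'' type, which clearly has non-empty interior and linearly independent equality constraints, so the explicit Slater condition (\cref{def:explicit-slater-LP}) holds and the OPT-gate returns an optimal flow. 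Writing $z_j$ for the resulting load on piece $j$, one natural choice of the next division is
\[
F(x)_j \;=\; \frac1n\Big(z_j + \frac{n-\phi(x)}{n}\Big),
\]
i.e.\ rescale the loads and redistribute the unmatched mass $n-\phi(x)$ uniformly over the pieces; this keeps $F$ inside $\Delta^{n-1}$. The only discontinuous step is the selection of the optimal flow, which is precisely what the pseudogate mechanism absorbs: the parameters handed to the OPT-gate ($s_{ij}(x)$, the capacities, the box bound) depend continuously on $x$, so composing the OPT-gate with the remaining standard gates yields a genuine \FIXP-circuit $\widetilde F$ on $\Delta^{n-1}\times[0,1]^\ell$ whose fixed points project onto fixed points of $F$.

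\noindent\textbf{Why fixed points are envy-free, and the main obstacle.}
At a fixed point we have $z_j + (n-\phi(x))/n = n x_j$ for all $j$; since $z_j\ge 0$ and $\sum_j x_j=1$, the fact that the unmatched mass is spread over \emph{all} pieces forces $\phi(x)=n$, so every source arc is saturated and, by the slack-minimising choice, the flow is a fractional perfect matching supported on preferred-piece arcs. Integrality of the bipartite matching polytope (Birkhoff--von Neumann / Hall's marriage theorem) then yields an integral perfect matching, i.e.\ a permutation $\pi$ with $u_{i\pi(i)}(x)\ge u_{ik}(x)$ for all $i,k$, so $x$ is envy-free. The delicate point --- and the main obstacle --- is exactly this last analysis: one must rule out spurious fixed points lying on the boundary of the simplex, where some piece is empty and the preference structure degenerates; this is where the hungriness hypothesis is used, to guarantee that preferred-piece arcs to empty pieces carry no flow, so that such boundary configurations cannot be forced. (This is also the feature that makes our proof ``reminiscent'' of \citet{woodall1980dividing} while avoiding his discontinuous reassignment steps.) The remaining work is bookkeeping: checking the box bound ($R=1$) and Slater conditions for the LP precisely, and verifying that the solution map of the reduction to a basic \FIXP-problem is separable linear.

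\noindent\textbf{Hardness.}
Bapat's theorem \citep{MP:Bapat1989} asserts that for any continuous maps $\lambda^{(1)},\dots,\lambda^{(n)}\colon\Delta^{n-1}\to\Delta^{n-1}$ there exist $x\in\Delta^{n-1}$ and a permutation $\pi$ with $\lambda^{(i)}_{\pi(i)}(x)\le x_{\pi(i)}$ for all $i$; when $\lambda^{(1)}=\dots=\lambda^{(n)}$ this collapses to $\lambda(x)=x$, so Bapat's problem is \FIXP-hard. Given such an instance we build a cake-cutting instance on $[0,1]$ with $n$ agents by defining each valuation $u_{ij}$ as a simple expression in $x_j$ and $\lambda^{(i)}_j(x)$ (constructible from the circuits for the $\lambda^{(i)}$ using $\max$/$\min$ gates) --- essentially $u_{ij}(x)=1+x_j-\lambda^{(i)}_j(x)$, up to a minor perturbation ensuring the agents are strictly hungry --- arranged so that every preferred piece $j$ of agent $i$ at $x$ satisfies $\lambda^{(i)}_j(x)\le x_j$ (using $\sum_j(x_j-\lambda^{(i)}_j(x))=0$, so the maximum of $x_j-\lambda^{(i)}_j(x)$ is always $\ge 0$). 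Then any envy-free division $x$ with permutation $\pi$ gives $\lambda^{(i)}_{\pi(i)}(x)\le x_{\pi(i)}$ for every $i$, i.e.\ a Bapat solution, and the solution map just reads off $(x,\pi)$; since the instance map is polynomial-time and the constructed instance is a valid (hungry, continuous) cake instance, this is an SL-reduction proving \FIXP-hardness. Finally, composing this same reduction with the membership direction shows Bapat's generalised Brouwer problem lies in \FIXP.
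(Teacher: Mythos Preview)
Your hardness argument is essentially correct and close to the paper's: the paper uses $u_{ij}(x)=\max(0,x_j-f_{ij}(x))$ after preprocessing the $f_i$ so that $f_{ij}(x)\ge\tfrac{1}{2n}$ (which is exactly the ``minor perturbation ensuring hungriness'' you allude to), and the verification that an envy-free permutation yields a Bapat solution is the same $\sum_j(x_j-\lambda^{(i)}_j(x))=0$ averaging argument you give.

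The membership argument, however, has a genuine gap in the fixed-point analysis. Your map
\[
F(x)_j=\frac{1}{n}\Bigl(z_j+\frac{n-\phi(x)}{n}\Bigr)
\]
does \emph{not} force $\phi(x)=n$ at a fixed point. Summing the fixed-point equations $n x_j=z_j+(n-\phi(x))/n$ over $j$ gives the tautology $n=n$, and the only further consequence of $z_j\ge 0$ is $x_j>0$ for all $j$, which hungriness does not contradict. Concretely, take $n=2$ with $u_{ij}(x)=x_j$ for both agents (continuous and hungry). At $x=(3/4,1/4)$ both agents strictly prefer piece $1$, so the maximum flow on preferred arcs is $z=(1,0)$, $\phi=1$, and your formula returns $F(x)=(3/4,1/4)=x$: a fixed point that is not envy-free. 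The paper's map avoids this by redistributing the \emph{per-piece} residual $r_j=1-\sum_i y_{ij}$ rather than the aggregate deficit, via $\overline{x}_j=(x_j+r_j)/(1+\sum_k r_k)$; the fixed-point equation then reads $x_j\sum_k r_k=r_j$, and a separate lemma (using hungriness plus the flow LP's optimality) exhibits some $j$ with $r_j=0$ and $x_j>0$, collapsing all $r_k$ to $0$.

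There is a second issue with your two-stage LP. Stage~1 (minimise $\sum s_{ij}(x)y_{ij}$ over the flow polytope) has optimal value $0$ since $y=0$ is feasible; hence stage~2 maximises total flow subject to the added constraint $\sum s_{ij}(x)y_{ij}\le 0$. With $s_{ij}\ge 0$ and $y\ge 0$ this inequality can never be strict unless every $s_{ij}(x)=0$, so the explicit Slater condition fails and the OPT-gate guarantees only feasibility, not optimality. The paper handles this differently: it uses a first LP over the simplex to compute capacities $c_{ij}$ (which, at a fixed point, vanish on non-preferred arcs because that LP \emph{does} satisfy Slater), and then a single max-flow LP with capacities $c_{ij}+\tfrac{1}{n^3}$; the additive slack restores Slater for the flow LP, and a Hall-type counting argument shows that if the total flow equals $n$ then the matching is supported on preferred arcs despite the slack.
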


\noindent In order to obtain the FIXP-membership result above, in the process we (implicitly) develop a new existence proof for the envy-free cake cutting problem, one which is not based on discrete subdivisions of the unit-simplex and limit arguments. Our proof is based on maximum flow computation on a network given by a bipartite graph with the agents on one side and the pieces of cake on the other side, as given by a division $x$. Using our OPT-gate from \cref{sec:OPT-gate}, we can turn this existence proof into a FIXP-membership result rather easily. Our approach is somewhat reminiscent of the only markedly different proof for envy-free cake cutting that we know of, that of \citet{woodall1980dividing}. This proof constructs a similar bipartite graph and uses Brouwer's fixed point theorem to prove the result, but crucially, it uses a discontinuous step (see Stage 2 of the construction in \citep{woodall1980dividing}) which impedes its applicability as a potential argument for a FIXP-membership proof. 

\begin{remark}[Very general valuations]\label{rem:general-valuations}
The statement of \cref{thm:cake-cutting} regards the case of \emph{very general} agents' valuations. Note that according to the definition of the problem in the beginning of the section, an agent has a possibly different valuation for each possible division of the cake. In this generality, the setting captures scenarios in which an agents' value for a piece is not necessarily the sum of her values for the subpieces that comprise the piece (i.e., the valuations are not necessarily \emph{additive}), or even where an agent's value for a piece could be smaller for her value for a subset of that piece (i.e., the valuations are not necessarily \emph{monotone}), and also scenarios that exhibit \emph{externalities}, as an agent might value differently two allocations that assign her the same piece. In several fair division textbooks (e.g., see \citep{robertson1998cake,brams1996fair}), the problem is typically presented for the case where the valuations are simply additive measures. On the other hand, the aforementioned existence proofs apply to the case of very general valuations.

Ideally of course, we would like to obtain a FIXP-hardness result for the version of the problem with additive valuations, and this is in fact a major open problem in the literature of computational fair division. That being said, our FIXP-hardness result is in fact in line with the PPAD-hardness result of \citet{OR:DengQS2012} for the approximate version of the problem, which only holds for very general valuations, in the same generality as we have defined them here. Besides, our focus in this paper is primarily establishing the FIXP-membership of interesting problems via our newly introduced technique in \cref{sec:OPT-gate}, and clearly the FIXP-membership result is stronger for the case that we consider, i.e., that of very general valuations.
\end{remark}

\noindent Before we proceed with the proof of \cref{thm:cake-cutting}, we remark that in \cref{sec:kkm}, we prove that the computational versions of several well-known topological lemmas and theorems, very much related to the cake cutting problem and its generalizations are also FIXP-complete. In particular, we show that the computational versions of the K-K-M lemma of \citet{FM:KnasterKM1929} that we mentioned above, its ``rainbow'' generalization due to \citet{IJGT:Gale1984}, as well as a ``rainbow'' generalization of Brouwer's fixed point theorem due to \citet{MP:Bapat1989} are also FIXP-complete. These results could be useful for showing FIXP-completeness for more general versions of the envy-free cake cutting problem, or other ``multi-label'' problems in fair division (e.g., see \citep{aharoni2020fractionally}).

\subsection{Envy-free cake cutting is FIXP-complete -- The proof of \cref{thm:cake-cutting}}

\subsubsection{FIXP-membership}\label{sec:efck-membership}
We start with showing the membership of the problem in FIXP. 
Given a division $x\in\Delta^{n-1}$ of the cake, consider 
the bipartite graph with agents on the left and pieces of cake on the right
and an edge between agent $i$ and piece $j$ if and only if agent $i$ 
prefers piece $j$. It is clear that $x$ is an envy-free division if and only if
this bipartite graph admits a perfect matching. 

\paragraph{Construction of the circuit.}

Using the above idea we construct a circuit whose fixed points are envy-free
divisions. The domain of the circuit is $\Delta^{n-1}$, namely the set of all possible divisions of the cake into $n$ pieces.

\medskip

\noindent We now construct $F\colon \Delta^{n-1}\rightarrow \Delta^{n-1}$. On input $x \in \Delta^{n-1}$, the circuit $F$ outputs $\overline{x} \in \Delta^{n-1}$, which is computed as follows. First, for each agent $i$, we compute capacities $c_i \in \Delta^{n-1}$ for the edges incident on $i$ in the bipartite graph by solving the following LP using the OPT-gate:
\begin{equation*}
\begin{aligned}
\mbox{maximize}\quad & \sum_{j = 1}^{n}z_{j}\cdot u_{ij}(x)\\
\mbox{subject to}\quad & \sum_{j=1}^n z_{j}=1\\
& z \geq 0
\end{aligned}
\end{equation*}
Next, we compute a maximum flow $y \in ([0,1]^n)^n$ in the bipartite graph with edge-capacities given by the previously computed $c_i$'s by solving the following LP using the OPT-gate:
\begin{equation*}
\begin{aligned}
\mbox{maximize}\quad & \sum_{1\leq i,j\leq n}z_{ij}'\\
\mbox{subject to}\quad & 0\leq z_{ij}'\leq c_{ij}+\frac{1}{n^3}, \, \forall i,j\\
 & \sum_{i=1}^{n}z_{ij}'\leq 1,\,\forall j\\
 & \sum_{j=1}^{n}z_{ij}'\leq 1,\,\forall i
\end{aligned}
\end{equation*}
Finally, the circuit computes $r_k = \max(0,1-\sum_{i=1}^{n}y_{ik})$ for every $k$. Then the output $\overline{x} \in \Delta^{n-1}$ of the circuit is computed as follows
\begin{align*}
\overline{x} := \Big(\frac{x_j+r_j}{1+\sum_{k=1}^{n}r_k}\Big)_{1\leq j\leq n}
\end{align*}
We note that both linear programs satisfy the explicit Slater condition (\cref{def:explicit-slater-LP}) that is required
for the OPT-gate to output an optimal solution by \cref{thm:OPT-gate-LP}. In particular, for the second LP this is ensured by the nonzero term ``$1/n^3$'' in the first constraint. Furthermore, the term $1/n^3$ is picked to be sufficiently small so that the arguments in the proof go through (namely, \cref{lem:lemma1}). Any sufficiently small nonzero quantity would also work.

\paragraph{Fixed points.} 

Suppose that $x$ is a fixed point of $F$, i.e., $x = \overline{x}$, where $\overline{x}$ is computed as described above. 
In the following we argue that $x$ is an envy-free division. First, we
argue that it is sufficient to show that the total flow $y$ is $n$.

\begin{lemma}\label{lem:lemma1}
If the total flow $(y_{ij})_{i,j}$ is $n$, then $x$ is an envy-free division. 
\end{lemma}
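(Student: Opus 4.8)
The plan is to extract everything we need from the fixed-point condition and the two OPT-gates, and then run a Hall-type counting argument. First I would note that at the fixed point we have $c_i = \overline{c}_i$ and $y_i = \overline{y}_i$ for all $i$; since both linear programs satisfy the explicit Slater condition, \cref{thm:OPT-gate-LP} guarantees that each $c_i$ is an \emph{optimal} solution of the LP defining $\overline{c}_i$, and that $y$ is an optimal (hence feasible) solution of the LP defining $\overline{y}$. The LP for $\overline{c}_i$ maximises the linear functional $z_i \mapsto \sum_j z_{ij}\, u_{ij}(x)$ over the simplex, so its optimal solutions are exactly the distributions supported on $\argmax_j u_{ij}(x)$. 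This yields the crucial support property: $c_{ij} > 0$ only if agent $i$ prefers piece $j$. Contrapositively, if agent $i$ does not prefer piece $j$, then $c_{ij} = 0$ and hence $y_{ij} \le c_{ij} + 1/n^3 = 1/n^3$.

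Next I would use the hypothesis that the total flow is $n$. Feasibility of $y$ gives $\sum_j y_{ij} \le 1$ for every agent $i$; since $\sum_{i,j} y_{ij} = n$ is a sum of $n$ numbers, each at most $1$, every one of them equals $1$, so $\sum_j y_{ij} = 1$ for all $i$. It now suffices to show that the preference bipartite graph $H$ (agents on one side, pieces on the other, with an edge $(i,j)$ iff agent $i$ prefers piece $j$) has a perfect matching, since such a matching is precisely a permutation $\pi$ witnessing that $x$ is envy-free.

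To prove $H$ has a perfect matching I would argue by contradiction via Hall's theorem: if $H$ has no perfect matching, there is a set $S$ of agents with $|N_H(S)| \le |S| - 1$, where $N_H(S)$ is the set of pieces preferred by at least one agent in $S$. I would then estimate the total flow leaving $S$, which equals $\sum_{i \in S} \sum_j y_{ij} = |S|$ by the balance just established. Splitting the inner sum according to whether $j \in N_H(S)$: the part with $j \in N_H(S)$ is at most $\sum_{j \in N_H(S)} \sum_i y_{ij} \le |N_H(S)| \le |S| - 1$; the part with $j \notin N_H(S)$ runs only over non-edges of $H$ (such a $j$ is not preferred by any $i \in S$), so $y_{ij} \le 1/n^3$ there, and with at most $n^2$ such pairs this part is at most $n^2 \cdot n^{-3} = 1/n$. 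Hence $|S| \le (|S|-1) + 1/n < |S|$ for $n \ge 2$, a contradiction; the case $n = 1$ is trivial. Therefore $H$ admits a perfect matching and $x$ is envy-free.

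I expect the only genuinely delicate point to be this last estimate, and specifically the bookkeeping around the $1/n^3$ slack in the edge capacities. That slack is exactly what makes the LP for $\overline{y}$ satisfy the explicit Slater condition --- without it, capacities $c_{ij} = 0$ would force $z_{ij} = 0$ and destroy the strictly feasible interior point --- so it cannot be removed; the content of the lemma is that the \emph{aggregate} amount of flow it allows to leak onto non-preferred edges is at most $n^2 \cdot n^{-3} = 1/n < 1$, which is just small enough to be absorbed into the Hall counting argument. The remaining ingredients --- optimality of the OPT-gate outputs at a fixed point, the support property of the capacities $c_i$, and the translation of envy-freeness into a perfect matching --- are routine.
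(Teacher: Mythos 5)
Your proof is correct and follows essentially the same route as the paper's: at a fixed point the first LP forces $c_{ij}=0$ on non-preferred edges, so $y_{ij}\leq 1/n^3$ there, and a Hall's-theorem contradiction with the $n^2/n^3$ leakage bound does the rest. The only cosmetic difference is that you argue forward from total flow $=n$ (deducing per-agent balance) whereas the paper argues contrapositively that a Hall violator forces total flow $<n$; the counting is identical.
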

\begin{proof}
We argue by contradiction, so suppose that $x$ is not an envy-free division. 
This means that there is no perfect matching in the bipartite graph described at
the beginning of this subsection. For a set of agents $A$, let $N(A)$ denote the
set of pieces that is preferred by some agent $i\in A$. By Hall's theorem \citep{hall1935representatives}, there
must exist some set of agents $A$ with $|N(A)|<|A|$. 
In particular the $y$-flow from $A$ to $N(A)$ is at most $|A|-1$.

Now note that the first linear program implies that $c_{ij}=0$ for any $i\in A$ and $j\notin N(A)$, because piece $j$ is not preferred by agent $i$ (i.e., there exists piece $j'$ such that $u_{ij'}(x) > u_{ij}(x)$), and hence we always have $z_{j} = 0$ at any optimal solution of this LP.
As a result, by the first constraint of the second linear program, $y_{ij}\leq 1/n^3$ for any $i\in A$ and $j\notin N(A)$.
We conclude that the total flow from $A$ to $\{1,\dots, n\}\setminus N(A)$ is 
bounded by $\tfrac{1}{n^3}\cdot |A|\cdot |\{1,\dots, n\}\setminus N(A)|
\leq \tfrac{n^2}{n^3}<1$. Since, as shown above, the flow from $A$ to $N(A)$ is at most $|A|-1$, it follows that the total flow out of $A$ is less than $|A|$. 
We conclude that the total flow is strictly less than $n$.
\end{proof}

\noindent By \cref{lem:lemma1}, it now suffices to show that the total flow is $n$. This
is the same as saying that $r_j = 0$ for all $j$. Since $x = \overline{x}$,
we have that $x_j = (x_j+r_j)/(1+\sum_{k = 1}^{n}r_k)$ for all $j$. This
implies that
\begin{align}{\label{ligning}}
x_j \cdot\sum_{k = 1}^{n}r_k = r_j\mbox{ for all }j. 
\end{align}

\noindent We have the following lemma.

\begin{lemma}\label{lem:lemma2}
There exists some $j$ with $r_j = 0$ and $x_j >0$.
\end{lemma}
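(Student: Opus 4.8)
The plan is to exploit the fixed-point equation \eqref{ligning} together with the fact that the flow $y$ returned by the second OPT-gate is not merely feasible but \emph{optimal} for that LP, i.e.\ it is a maximum flow in the bipartite network of agents and pieces with capacities $c_{ij}+\tfrac1{n^3}$ on edge $(i,j)$ and unit capacities on both sides. First I would record two cheap facts: the LP constraint $\sum_i z_{ij}\le 1$ gives $r_j=1-\sum_i y_{ij}\ge 0$ for every $j$, and hence $\sum_j r_j=n-\lvert y\rvert$ where $\lvert y\rvert:=\sum_{i,j}y_{ij}$; and \eqref{ligning}, which reads $x_j\sum_k r_k=r_j$, forces $r_j=0$ as soon as $x_j=0$. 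Therefore \cref{lem:lemma2} can only fail if $r_j>0$ for \emph{every} index in the set $J^+:=\{j:x_j>0\}$, which is non-empty because $x\in\Delta^{n-1}$. So I would assume, towards a contradiction, that every piece $j\in J^+$ is unsaturated, i.e.\ $\sum_i y_{ij}<1$.

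The crux is then to show that under this assumption every agent is saturated, $\sum_j y_{ij}=1$ for all $i$. Suppose instead $\sum_j y_{i_0 j}<1$ for some $i_0$. For each $j\in J^+$ the piece is unsaturated and the agent is unsaturated, so if the edge were also below capacity, i.e.\ $y_{i_0 j}<c_{i_0 j}+\tfrac1{n^3}$, then increasing $y_{i_0 j}$ by a tiny amount (equivalently, pushing flow along $s\to a_{i_0}\to p_j\to t$) would keep the LP feasible and strictly increase its objective, contradicting optimality of $y$. Hence $y_{i_0 j}=c_{i_0 j}+\tfrac1{n^3}$ for all $j\in J^+$. Now hungriness enters: the capacities $c_{i_0}$ are an optimal solution of the first OPT-gate LP, whose optimal solutions are exactly the distributions supported on the pieces agent $i_0$ most prefers, and a hungry agent never most-prefers an empty piece; so $c_{i_0 j}=0$ whenever $x_j=0$, which gives $\sum_{j\in J^+}c_{i_0 j}=\sum_j c_{i_0 j}=1$. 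Summing over $J^+$ then yields $\sum_j y_{i_0 j}\ge\sum_{j\in J^+}\bigl(c_{i_0 j}+\tfrac1{n^3}\bigr)=1+\tfrac{\lvert J^+\rvert}{n^3}>1$, contradicting $\sum_j y_{i_0 j}<1$.

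With every agent saturated we get $\lvert y\rvert=n$, so $\sum_j r_j=n-\lvert y\rvert=0$, and since every $r_j\ge 0$ this forces $r_j=0$ for all $j$ — contradicting $r_j>0$ on the non-empty set $J^+$. Hence the contradiction hypothesis fails and some $j$ indeed has $r_j=0$ and $x_j>0$. I expect the delicate points to be exactly the two ingredients that drive the middle paragraph: that $y$ is a genuine \emph{maximum} flow (this is what the OPT-gate delivers, using that the flow LP satisfies the explicit Slater condition — the $\tfrac1{n^3}$ slack is precisely what keeps all edge capacities strictly positive so that a strictly feasible point exists), and that hungriness forces $c_{i_0 j}=0$ on empty pieces; without hungriness an unsaturated agent could legitimately route its missing mass onto empty pieces and the contradiction would collapse.
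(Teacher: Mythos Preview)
Your proof is correct and follows essentially the same strategy as the paper: assume the negation, locate an unsaturated agent, and use optimality of the flow LP together with hungriness to derive a contradiction. The only difference is organizational. The paper works with $N(i)$, the set of pieces preferred by the unsaturated agent $i$: it argues that if every $j\in N(i)$ had $r_j>0$ then, since $\sum_{j\in N(i)}c_{ij}=1$ and $\sum_j y_{ij}<1$, some edge $(i,j)$ with $j\in N(i)$ would admit extra flow, contradicting optimality; hence some $j\in N(i)$ has $r_j=0$, and hungriness gives $x_j>0$ directly. You instead work with $J^+=\{j:x_j>0\}$, push the argument one step further to conclude that \emph{every} agent is saturated, and then obtain $\sum_j r_j=0$ as the contradiction. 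Both routes hinge on the same two ingredients you correctly identify: that the OPT-gate returns a genuine maximum flow (the $\tfrac{1}{n^3}$ slack guaranteeing the explicit Slater condition), and that hungriness forces $c_{ij}=0$ on empty pieces. The paper's version is marginally more direct in that it exhibits the witness $j$ explicitly rather than deriving a global contradiction, but the content is the same.
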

\begin{proof}
Suppose towards a contradiction that for all $j$ we have that $r_j >0$ or $x_j = 0$. Because
$x_j >0$ for some $j$, this means that the total flow is less than $n$. Therefore, there
exists some agent $i$ whose out-flow is less than $1$. We claim that there exists some $j\in N(i)$ with $r_j = 0$. Suppose towards a contradiction that $r_j >0$ for all $j\in N(i)$. Then we can send
more flow along the edge $(i,j)$ for some $j\in N(i)$, because $\sum_{k = 1}^{n}y_{ik}<1$
and $\sum_{k\in N(i)}c_{ik}=1$. This contradicts the fact that $y$ is an optimal solution 
to the second second linear program. Hence, there exists some $j\in N(i)$ with $r_j = 0$. This concludes the proof, since $x_j >0$ holds by the hungriness-condition.
\end{proof}

\noindent Combining \cref{lem:lemma2} and \cref{ligning}, we conclude that $r_k = 0$
for every $k$. Hence, the total flow is $n$, and it follows from \cref{lem:lemma1}
that $x$ is an envy-free division. Hence, we have shown the following.

\begin{proposition}\label{prop:efmembership}
The envy-free cake cutting problem with very general valuations is in $\FIXP$.
\end{proposition}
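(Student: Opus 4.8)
The plan is to assemble the construction and two lemmas developed above into a genuine FIXP-membership proof (this construction also amounts, implicitly, to a new discretization-free existence proof), and then to record that the whole thing is an SL-reduction to a basic FIXP problem. First I would check that $F\colon D\to D$ is a legitimate FIXP circuit: $D=\Delta^{n-1}\times(\Delta^{n-1})^n\times([0,1]^n)^n$ is a nonempty, compact, convex polytope cut out by rational linear inequalities; the renormalization defining $\overline{x}$ is well-defined because the denominator $1+\sum_k r_k$ stays bounded below by $1$ once the residuals are taken in $[0,1]$ (taking $r_k:=\max\{0,\,1-\sum_i y_{ik}\}$ if one worries about inputs off the fixed-point set), whence $\sum_j\overline{x}_j=1$ with each $\overline{x}_j\ge 0$, so $\overline{x}\in\Delta^{n-1}$; and both linear programs feeding the OPT-gates---the per-agent capacity LP over the simplex, and the flow LP---satisfy the explicit Slater condition of \cref{def:explicit-slater-LP}, so by \cref{thm:OPT-gate-LP} the OPT-gates return true optimal solutions at a fixed point. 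Since an OPT-gate is a pseudogate, plugging it in enlarges the domain by auxiliary $[0,1]$-coordinates, but the enlarged domain is still a rational polytope, Brouwer still applies, and a fixed point of the enlarged circuit projects to a fixed point $(x,c,y)$ of $F$ at which every OPT-gate output is correct.

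Next I would analyze such a fixed point block by block. The $\overline{x}=x$ block gives the renormalization identity $x_j\sum_k r_k=r_j$ of \eqref{ligning}. The capacity block gives $c_{ij}=0$ whenever agent $i$ does not prefer piece $j$, since any optimal solution of that LP is supported on $i$'s argmax pieces. The flow block gives that $y$ is optimal for the flow LP, so in particular $\sum_i y_{ij}\le 1$, $\sum_j y_{ij}\le 1$, and $y_{ij}\le c_{ij}+1/n^3$. The crux is then \cref{lem:lemma1} and \cref{lem:lemma2}: the former uses Hall's theorem to deduce that a total flow of $n$ forces a perfect matching in the agents/preferred-pieces bipartite graph, hence envy-freeness; the latter uses hungriness together with optimality of $y$---if some agent still has unmet demand and every piece it prefers has residual capacity, one can route strictly more flow along a preferred edge with slack, contradicting optimality---to produce a piece $j$ with $r_j=0$ and $x_j>0$. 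Substituting that $j$ into \eqref{ligning} yields $\sum_k r_k=0$; since every $r_k\ge 0$ at the fixed point, all $r_k=0$, so the total flow is exactly $n$, and \cref{lem:lemma1} then gives that $x$ is envy-free.

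To close the reduction: given the instance (the algebraic circuits for the $u_{ij}$), the circuit $F$ is built in time polynomial in $n$ and the circuit sizes---each OPT-gate is constructed in polynomial time by \cref{thm:OPT-gate-LP}, using the $u_{ij}$-circuits only to form the capacity-LP objective---and the solution mapping just reads off the first $\Delta^{n-1}$-block of the fixed point, which is separable linear; together with the Brouwer-guaranteed nonemptiness of the circuit's fixed-point set, this is an SL-reduction to a basic FIXP problem, so the problem lies in $\FIXP$.

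The step I expect to be the main obstacle is the one isolated above: showing that the total flow must equal $n$ at a fixed point. This is where the pieces have to click together---the $\overline{x}$-update is engineered precisely so that residual capacity $r_j>0$ at a ``wanted'' piece gets fed by the update and so cannot persist at a fixed point, optimality of the flow LP is what guarantees that any agent with unmet demand can push more flow along some preferred edge, and hungriness is the bridge that turns ``$j$ is a preferred piece'' into ``$x_j>0$'' so that \eqref{ligning} can actually be invoked. Everything else---well-definedness of $F$ and the $F(D)\subseteq D$ promise, verifying the Slater conditions, and the SL-reduction bookkeeping---is routine.
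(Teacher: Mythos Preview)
Your proposal is correct and follows essentially the same approach as the paper: both assemble the construction, \cref{lem:lemma1}, \cref{lem:lemma2}, and \eqref{ligning} in the same way to conclude that every fixed point yields an envy-free division, and the SL-reduction bookkeeping you add is exactly what is implicit in the paper. Your extra care about well-definedness---clipping $r_k$ at zero so that $1+\sum_k r_k\ge 1$ on all of $D$, not merely at fixed points---is a sensible robustness check that the paper leaves implicit.
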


\subsubsection{FIXP-hardness}\label{sec:efck-hardness}
\noindent We now turn our attention to showing FIXP-hardness. Since we are considering very general
valuations, the FIXP-hardness result is rather straightforward. For the proof, we will use the following generalization of Brouwer's fixed point theorem due to \citet{MP:Bapat1989}, which we refer
to as \emph{Bapat's Brouwer fixed point theorem}.

\begin{theorem}[Bapat's Brouwer fixed point theorem \citep{MP:Bapat1989}] \label{thm:bapat}
Let $f_{i}\colon\Delta^{n-1}\rightarrow \Delta^{n-1}$ be 
continuous functions for $i=1,\dots, n$. There exists 
some $z\in\Delta^{n-1}$ and a permutation $\pi$ of 
$\{1,\dots, n\}$ such that $z_{\pi(i)}\geq f_{i,\pi(i)}(z)$ for all $i$.
\end{theorem}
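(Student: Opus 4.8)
The plan is to deduce \cref{thm:bapat} from a \emph{colorful} (``rainbow'') version of the K--K--M lemma \citep{FM:KnasterKM1929}, applied to a covering of the simplex tailored to the maps $f_i$. First I would define, for each $i \in \{1,\dots,n\}$ and each coordinate $k$, the closed set
\[
C^i_k \;=\; \{\, z \in \Delta^{n-1} : z_k \ge f_{i,k}(z) \,\},
\]
closedness being immediate from continuity of $f_i$. The key observation is that for every fixed $i$ the family $\{C^i_1,\dots,C^i_n\}$ is a K--K--M cover of $\Delta^{n-1}$: given a nonempty $I \subseteq \{1,\dots,n\}$ and a point $z$ in the face spanned by $\{e_k : k \in I\}$, we have $\sum_{k\in I} z_k = 1 \ge \sum_{k\in I} f_{i,k}(z)$ since $f_i(z) \in \Delta^{n-1}$, so $\sum_{k\in I}(z_k - f_{i,k}(z)) \ge 0$ and hence some $k \in I$ satisfies $z_k \ge f_{i,k}(z)$, i.e.\ $z \in C^i_k$. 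Thus we obtain $n$ K--K--M covers of $\Delta^{n-1}$, one per index $i$.

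Second, I would invoke the colorful K--K--M theorem of \citet{IJGT:Gale1984} (the rainbow generalization discussed in \cref{sec:kkm}): given $n$ K--K--M covers of $\Delta^{n-1}$ indexed by $\{1,\dots,n\}$, there exist a permutation $\pi$ of $\{1,\dots,n\}$ and a point $z \in \bigcap_{i=1}^n C^i_{\pi(i)}$. Unwinding the definition of the $C^i_k$, this $z$ and $\pi$ satisfy $z_{\pi(i)} \ge f_{i,\pi(i)}(z)$ for all $i$, which is exactly the conclusion of \cref{thm:bapat}. (As a sanity check, taking all $f_i$ equal to a single $f$ forces $z_k \ge f_k(z)$ for every $k$, hence $z = f(z)$ by summing, so the statement indeed subsumes \cref{thm:brouwer}.)

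The hard part is that the colorful K--K--M theorem is itself the real content. If it is not taken as a black box, I would prove it (or prove \cref{thm:bapat} directly) in the classical way: fix a sequence of triangulations of $\Delta^{n-1}$ with mesh tending to $0$; use the covers to define $n$ Sperner-type labelings of the vertices, where the carrier condition holds because a short strengthening of the averaging argument above guarantees on each face a \emph{positive} coordinate $k$ with $z \in C^i_k$; apply Bapat's permutation-based generalization of Sperner's lemma \citep{MP:Bapat1989} to obtain, for each triangulation, a rainbow cell together with a permutation; and finally pass to a subsequence along which the permutations are constant and the shrinking cells converge to a single point $z$, using continuity of the $f_i$ to get the inequalities in the limit. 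The genuinely difficult step is the multicolored Sperner lemma, proved by a parity / path-following argument generalizing the standard proof of Sperner's lemma.

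Alternatively, and more in the spirit of this paper, one can sidestep triangulations and limits entirely by a direct Brouwer-based construction mirroring the envy-free cake-cutting argument of \cref{sec:efck-membership}: work on the domain of ``divisions'' $z \in \Delta^{n-1}$ together with edge-capacities and flows for the bipartite graph between the $n$ indices and the $n$ coordinates, and build a continuous self-map that sets capacities (via an optimization program) so as to concentrate on coordinates $k$ with $z_k \ge f_{i,k}(z)$, routes a maximum flow, and nudges $z$ toward under-saturated coordinates. At a fixed point, a Hall's-theorem argument analogous to \cref{lem:lemma1} forces the flow to saturate, and the saturating perfect matching yields the desired permutation $\pi$. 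The obstacle in this route is that the analogue of the ``hungriness'' condition can fail — a coordinate may have $z_k = f_{i,k}(z) = 0$ — so the capacity objective must be designed carefully to steer mass only onto coordinates that are both accepted and strictly positive, which the strengthened averaging argument shows is always possible.
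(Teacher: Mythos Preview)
Your primary approach is correct and is precisely the route the paper attributes to Bapat: the paper does not give a self-contained proof of \cref{thm:bapat} but remarks (in \cref{sec:kkm}) that ``Bapat's proof uses the rainbow K--K--M lemma in the same way as \citet{FM:KnasterKM1929} does to prove Brouwer's fixed point theorem from the K--K--M lemma.'' Your covering $C^i_k=\{z:z_k\ge f_{i,k}(z)\}$ and the averaging argument for the K--K--M property are exactly the rainbow analogue of the construction the paper uses in the proof of \cref{thm:kkm-fixp-complete} (there with $F_i(x)=\max(0,G(x)_i-x_i)$), so this part lines up with the paper.

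Where the comparison is interesting is your third alternative. The paper's genuinely new contribution toward \cref{thm:bapat} is not a direct proof but an indirect one: the reduction in \cref{sec:efck-hardness} sends a Bapat instance to an envy-free cake-cutting instance via $u_{ij}(x)=\max(0,x_j-f_{ij}(x))$, and \cref{sec:efck-membership} places cake cutting in \FIXP\ by the flow/matching Brouwer map you sketch. So your ``Brouwer-based construction mirroring the envy-free cake-cutting argument'' is exactly the paper's implicit proof, and you correctly flag the only subtlety --- the failure of hungriness when $z_k=f_{i,k}(z)=0$. The paper resolves this not by redesigning the capacity objective but by a preprocessing step: it first replaces each $f_i$ by $g_i(x)=\tfrac12 f_i(2(\pi(x)-\tfrac{1}{2n}))+\tfrac{1}{2n}$ so that $g_{ij}\ge\tfrac{1}{2n}$ everywhere, which forces hungriness, and then checks that a Bapat solution for the $g_i$ pulls back to one for the $f_i$. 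That is the missing ingredient in your sketch of the alternative route.
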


\noindent \cref{thm:bapat} implies Brouwer's fixed point theorem
by choosing all the $f_i$'s to be equal. The computational version of 
the problem is defined similarly to the computational version of Brouwer's
fixed point theorem (see \cref{def:FIXP}). Since Bapat's version 
is a generalization of Brouwer's fixed point problem, its FIXP-hardness is immediate.

Let continuous functions $f_1,\dots, f_n\colon \Delta^{n-1}\rightarrow\Delta^{n-1}$
be given, represented by algebraic circuits. To show $\FIXP$-hardness
of the cake cutting problem, we define an instance with $n$ players such that an envy-free division corresponds to a solution for the Bapat problem given by $f_1,\dots, f_n$.
For our definition of the valuations of the agents to satisfy the hungriness-condition, we need to make an assumption about the $f_i,$ namely that $f_{ij}(x)\geq\tfrac{1}{2n}$ for all $i,j\in [n]$ and $x\in\Delta^{n-1}$. We first show that we may assume this without loss of generality.
\paragraph{Preprocessing.}

We now show how to make new functions $g_1,\dots, g_n\colon\Delta^{n-1}\rightarrow\Delta^{n-1}$ such that 1) a solution to the Bapat problem for $g_1,\dots,g_n$ easily translates into a solution for $f_1,\dots, f_n$, and 2) the functions satisfy $g_{ij}(x)\geq\tfrac{1}{2n}$ for all $i,j\in [n]$ and $x\in\Delta^{n-1}$. Let $x\in\Delta^{n-1}$ be given. As in \citep{SICOMP:EtessamiY10}, we may use a sorting network to compute a value $t\geq 0$ such that $\sum_{i=1}^{n}\max(x_i-t,\tfrac{1}{2n})=1$. We now define $\pi(x)$ by $\pi_j(x)=\max(x_j-t,\tfrac{1}{2n})$ and $g_{ij}(x)=\tfrac{1}{2}f_{ij}(2(\pi(x)-\tfrac{1}{2n}))+\tfrac{1}{2n}$. Note the following:
\begin{enumerate}[label=(\roman*)]
    \item For all $i,j\in [n]$, $g_{ij}(x)\geq \tfrac{1}{2n}$. 
    \item If $x$ is a Bapat solution for $g_1,\dots, g_n$, then $2(x-\tfrac{1}{2n})$ is a solution for $f_1,\dots, f_n.$ Suppose that $x$ is a solution for $g_1,\dots, g_n$. Then there is a permutation $\pi$ such that $x_{\pi(i)}\geq g_{i,\pi(i)}(x)$. In particular, we get from (i) that $x_{\pi(i)}\geq\tfrac{1}{2n}$ for all $i$. This implies that the value $t$ computed by the sorting network is $t=0$, which implies that $\pi(x)=x$. Now the inequalities
    \begin{align*}
        x_{\pi(i)}\geq g_{i,\pi(i)}=\frac{1}{2}f_{i,\pi(i)}\left(2\left(\pi(x)-\frac{1}{2n}\right)\right)+\frac{1}{2n} = \frac{1}{2}f_{i,\pi(i)}\left(2\left(x-\frac{1}{2n}\right)\right)+\frac{1}{2n}
    \end{align*}
    for all $i$, imply that $2(x_{\pi(i)}-\tfrac{1}{2n})\geq f_{i,\pi(i)}(2(x-\tfrac{1}{2n}))$ for all $i$. This says that $2(x-\tfrac{1}{2n})$ is a solution to the Bapat problem for $f_1,\dots, f_n$.
\end{enumerate}
Now (i) says that the $g_i$ satisfy the required inequalities and (ii) says that given any solution for $g_1,\dots, g_n$, we may (very easily) compute a solution for $f_1,\dots, f_n$.

\paragraph{Reduction to cake cutting.}
Due to the preprocessing step, we may assume that the functions $f_1,\dots, f_n$ satisfy $f_{ij}(x)\geq \tfrac{1}{2n}$ for all $i,j\in [n]$ and $x\in\Delta^{n-1}$. The valuation
of player $i$ for the $j$-th piece is defined to be $u_{ij}(x)=\max(0,
x_j-f_{ij}(x))$ for any point $x\in\Delta^{n-1}$. We now show that the agents are hungry. Suppose that $x\in\Delta^{n-1}$ with $x_j=0$ for some $j$. This implies that $u_{ij}(x)=0$ for all $j$. As $x_j=0<\tfrac{1}{2n}\leq f_{ij}(x)$ and $x,f_i(x)\in\Delta^{n-1}$, we get that there exists some $k$ with $f_{ik}(x)<x_k$. This implies that $u_{ik}(x)>0=u_{ij}(x)$. In particular, no agent prefers an empty piece.

If $x$ is an envy-free division and $\pi$ is the corresponding permutation, 
we obtain for any $i$ that
\begin{align}\label[ineq]{ineq:efFIXPhard}
\max\left(0,x_{\pi(i)}-f_{i,\pi(i)}(x)\right)\geq \max\left(0,x_j-f_{ij}(x)\right)\mbox{ for all }j,
\end{align}
because of the inequalities $u_{i,\pi(i)}(x)\geq u_{ij}(x)$ for all $i,j$. 
We claim that $x_{\pi(i)}\geq f_{i,\pi(i)}(x)$ for all $i$. Suppose that there exists some $i$ for which this is not the case, i.e., $x_{\pi(i)}< f_{i,\pi(i)}(x)$.
This implies that the left-hand side of \cref{ineq:efFIXPhard} is equal to $0$. However, we also have that 
\[x_{j},f_{ij}(x)\geq 0, \ \ \ \  x_{\pi(i)}<f_{i,\pi(i)}(x), \ \ \ \  \text{and}  \ \ \ \ 
\sum_{j=1}^{n}x_{j} = \sum_{j=1}^{n}f_{ij}(x) = 1,\]
which implies that
$x_{j}>f_{ij}(x)$ for some $j$. This implies that the right-hand side of \cref{ineq:efFIXPhard}
is strictly positive, leading to a contradiction. From this, we obtain the following.

\begin{proposition}\label{prop:efFIXPhard}
The envy-free cake cutting problem with very general valuations is $\FIXP$-hard.
\end{proposition}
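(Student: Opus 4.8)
The plan is to give an SL-reduction from the computational version of \emph{Bapat's Brouwer fixed point theorem} (\cref{thm:bapat}) to envy-free cake cutting. Bapat's problem is \FIXP-hard for free: taking all the functions $f_i$ equal recovers the ordinary Brouwer fixed point problem, which is \FIXP-complete. So it suffices, given circuits for continuous maps $f_1,\dots,f_n\colon\Delta^{n-1}\to\Delta^{n-1}$, to build in polynomial time a cake-cutting instance on $n$ agents whose envy-free divisions translate back, via a separable-linear map, to points $z\in\Delta^{n-1}$ admitting a permutation $\pi$ with $z_{\pi(i)}\geq f_{i,\pi(i)}(z)$ for all $i$.

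The natural choice of valuations is $u_{ij}(x)=\max\bigl(0,\,x_j-f_{ij}(x)\bigr)$: agent $i$ prefers piece $j$ precisely when the ``surplus'' $x_j-f_{ij}(x)$ is maximal, and a Bapat solution is exactly an assignment of agents to coordinates with nonnegative surplus. The core correctness claim is that every envy-free division is a Bapat solution. Indeed, if $x$ is envy-free with witnessing permutation $\pi$, then $u_{i,\pi(i)}(x)\geq u_{ij}(x)$ for all $j$; were $x_{\pi(i)}<f_{i,\pi(i)}(x)$, the left-hand side would be $0$, but since $\sum_j x_j=\sum_j f_{ij}(x)=1$ with all entries nonnegative, some coordinate has $x_j>f_{ij}(x)$ and the right-hand side is strictly positive, a contradiction. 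Thus $x_{\pi(i)}\geq f_{i,\pi(i)}(x)$ for every $i$, so $x$ is a Bapat solution (and the solution map is the identity, hence separable linear, after the preprocessing discussed next).

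The one point that needs care is \emph{hungriness}: the existence proof (and hence totality of the cake-cutting problem) requires that no agent prefer an empty piece, which can fail for the valuations above. I would fix this by preprocessing the $f_i$ so that $f_{ij}(x)\geq\tfrac{1}{2n}$ everywhere, in the style of \citep{SICOMP:EtessamiY10}: use a sorting-network subcircuit to compute the threshold $t\geq0$ with $\sum_i\max(x_i-t,\tfrac{1}{2n})=1$, set $\pi_j(x)=\max(x_j-t,\tfrac{1}{2n})$, and replace $f_{ij}$ by $g_{ij}(x)=\tfrac{1}{2} f_{ij}\bigl(2(\pi(x)-\tfrac{1}{2n})\bigr)+\tfrac{1}{2n}$. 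Then $g_{ij}(x)\geq\tfrac{1}{2n}$; and if $x$ is a Bapat solution for the $g_i$ then all $x_{\pi(i)}\geq\tfrac{1}{2n}$, which forces $t=0$, hence $\pi(x)=x$, hence $2(x-\tfrac{1}{2n})$ is a Bapat solution for the $f_i$ --- and $x\mapsto2(x-\tfrac{1}{2n})$ is coordinate-wise affine, so the composed solution mapping is separable linear. With the $g_i$ in place, hungriness is immediate: if $x_j=0$ then $u_{ij}(x)=0$, but $0=x_j<\tfrac{1}{2n}\leq g_{ij}(x)$ together with $x,g_i(x)\in\Delta^{n-1}$ forces $g_{ik}(x)<x_k$ for some $k$, so $u_{ik}(x)>0$.

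I expect the main obstacle to be bookkeeping rather than a genuine difficulty: one must simultaneously keep the valuations expressible by an algebraic circuit, make hungriness hold \emph{exactly} (not just generically), and keep the back-translation separable linear. The preprocessing step is what reconciles all three; once it is in place, the surplus-counting argument of the second paragraph finishes the proof.
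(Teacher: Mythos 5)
Your proposal is correct and follows essentially the same route as the paper's proof: a reduction from Bapat's Brouwer fixed point problem with valuations $u_{ij}(x)=\max(0,x_j-f_{ij}(x))$, the identical sorting-network preprocessing to force $f_{ij}(x)\geq\tfrac{1}{2n}$ and hence hungriness, and the same surplus argument (using $\sum_j x_j=\sum_j f_{ij}(x)=1$) showing every envy-free division is a Bapat solution. The affine back-translation $x\mapsto 2(x-\tfrac{1}{2n})$ keeping the reduction separable linear matches the paper as well.
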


\subsection{The {K-K-M} Lemma}\label{sec:kkm}
As we mentioned in the beginning of the section, our technique also has implications to the FIXP-membership of some known topological problems which are related to the cake cutting problem, in the sense that they have been, or can be used to prove the existence of an envy-free division. Here, we present those results. \medskip

\noindent The \emph{Knaster–Kuratowski–Mazurkiewicz ({K-K-M}) lemma}~\citep{FM:KnasterKM1929} is a basic result
concerning certain covers of the unit simplex by closed sets, related to Brouwer's fixed point theorem. First we present the definition of a \emph{K-K-M covering} and then the statement of the lemma.

\begin{definition}[K-K-M covering]
  Let $T_1,\dots,T_n \subseteq \RR^n$ be a collection of closed
  sets. We say that $T_1,\dots,T_n$ form a \emph{{K-K-M} covering} of
  $\Simplex_{n-1}$ if
  $conv(\{e_i \colon i \in S \}) \subseteq \cup_{i\in S} T_i $, for
  any set $S \subseteq [n]$.
\end{definition}

\begin{lemma}[K-K-M lemma \citep{FM:KnasterKM1929}]\label{lem:kkm}
Let $T_1,\dots,T_n$ be a {K-K-M} covering of $\Simplex_{n-1}$. Then $\cap_{i=1}^n T_i \neq \emptyset$.
\end{lemma}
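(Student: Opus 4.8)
The plan is to derive the {K-K-M} lemma from Brouwer's fixed point theorem (\cref{thm:brouwer}), using a construction essentially identical to the one in the envy-free cake cutting membership proof (\cref{sec:efck-membership}). I would argue by contradiction: assume $\bigcap_{i=1}^n T_i = \emptyset$. Taking $S = \{i\}$ in the definition of a {K-K-M} covering gives $e_i \in T_i$, so each $T_i$ is nonempty and the distance functions $d_i(x) := \operatorname{dist}(x, T_i)$ are well-defined, continuous, nonnegative and bounded on $\Simplex_{n-1}$. Since each $T_i$ is closed, $d_i(x) = 0$ if and only if $x \in T_i$; hence the contradiction hypothesis forces $\sum_{j=1}^n d_j(x) > 0$ for every $x \in \Simplex_{n-1}$, as otherwise $x$ would lie in all of the $T_j$.

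Next I would define $f \colon \Simplex_{n-1} \rightarrow \Simplex_{n-1}$ by
\[
f(x)_i = \frac{x_i + d_i(x)}{1 + \sum_{j=1}^n d_j(x)}.
\]
The denominator is always at least $1$, so $f$ is continuous; each coordinate is nonnegative, and $\sum_{i=1}^n f(x)_i = 1$ because $\sum_i x_i = 1$. Thus $f$ is a continuous self-map of the nonempty compact convex set $\Simplex_{n-1}$, and Brouwer's fixed point theorem yields $x^\star$ with $f(x^\star) = x^\star$. Rearranging this equation gives $x^\star_i \sum_{j=1}^n d_j(x^\star) = d_i(x^\star)$ for every $i$.

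To finish, I would run the support argument. Let $S = \{i : x^\star_i > 0\}$. Then $x^\star$ is a convex combination of the vertices $\{e_i : i \in S\}$, so $x^\star \in \operatorname{conv}(\{e_i : i \in S\}) \subseteq \bigcup_{i \in S} T_i$ by the {K-K-M} covering property; pick $i_0 \in S$ with $x^\star \in T_{i_0}$, i.e.\ $d_{i_0}(x^\star) = 0$. Substituting into the identity above and using $x^\star_{i_0} > 0$ forces $\sum_{j=1}^n d_j(x^\star) = 0$, whence $d_j(x^\star) = 0$ and $x^\star \in T_j$ for every $j$, contradicting $\bigcap_{j=1}^n T_j = \emptyset$.

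I expect the only genuine subtlety to be this last step, where the combinatorial content of the hypothesis enters: one must observe that a point of the simplex lies in the face spanned by its own support and apply the covering property on that face. (The classical route via Sperner's lemma is also available, but the Brouwer-based argument above is shorter and fits the FIXP viewpoint of this paper, since $f$ is a plain algebraic self-map and the construction can be read directly as a reduction to a basic \FIXP\ problem.)
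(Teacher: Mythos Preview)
Your proof is correct. The paper does not itself prove \cref{lem:kkm} (it is stated with a citation as a classical result), but the Brouwer-based construction you use---due to \citet{MS:Gale1955}---is exactly the one the paper deploys in the proof of \cref{thm:kkm-fixp-complete}, with the abstract nonnegative functions $F_i$ there playing the role of your distance functions $d_i$, and the same support argument extracting the conclusion. One cosmetic remark: the contradiction framing is unnecessary, since the map $f$ is well-defined regardless of whether $\bigcap_i T_i$ is empty (the denominator is always at least $1$); the fixed-point identity together with the support argument directly shows $d_j(x^\star)=0$ for all $j$, so $x^\star\in\bigcap_j T_j$, which is how the paper phrases the corresponding step.
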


\noindent We can derive a corresponding total search problem that applies to any
collection $T_1,\dots,T_n$ of closed sets as follows.
\begin{corollary}
  Let $T_1,\dots,T_n \subseteq \RR^n$ be a collection of closed
  sets. Then there exists $x \in \Simplex_{n-1}$ such that exactly one
  of the following conditions holds.
  \begin{enumerate}
  \item $x \in \cap_{i=1}^m T_i$
  \item $x \notin \cup_{i \in \support(x)} T_i$
  \end{enumerate}
  \label{COR:KKM-search-problem}
  In case the second condition never holds we say that the sets
  $T_1,\dots,T_n$ satisfy the K-K-M condition.
\end{corollary}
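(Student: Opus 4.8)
The plan is to derive the search problem of \cref{COR:KKM-search-problem} directly from the K-K-M lemma (\cref{lem:kkm}) by a single case distinction: either the given closed sets $T_1,\dots,T_n$ happen to form a K-K-M covering of $\Simplex_{n-1}$, or they do not, and in each case one of the two conditions is witnessed.

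First I would check that the two conditions are mutually exclusive for every $x \in \Simplex_{n-1}$, so that ``exactly one'' reduces to ``at least one''. Indeed, since $\sum_i x_i = 1$ we have $\support(x) \neq \emptyset$; if $x \in \bigcap_{i=1}^n T_i$ then $x \in T_i$ for some $i \in \support(x)$, so $x \in \bigcup_{i \in \support(x)} T_i$ and the second condition fails, while if $x \notin \bigcup_{i \in \support(x)} T_i$ then $x \notin T_i$ for some $i$, hence $x \notin \bigcap_{i=1}^n T_i$ and the first condition fails.

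Next comes the case analysis. If $T_1,\dots,T_n$ form a K-K-M covering of $\Simplex_{n-1}$, then \cref{lem:kkm} guarantees $\bigcap_{i=1}^n T_i \neq \emptyset$, and any point of this intersection is an $x$ satisfying the first condition. Otherwise, by the definition of a K-K-M covering there is a set $S \subseteq [n]$ with $\mathrm{conv}(\{e_i : i \in S\}) \not\subseteq \bigcup_{i \in S} T_i$; pick any $x \in \mathrm{conv}(\{e_i : i \in S\})$ with $x \notin \bigcup_{i \in S} T_i$. Then $x \in \Simplex_{n-1}$ and, crucially, $\support(x) \subseteq S$, so $\bigcup_{i \in \support(x)} T_i \subseteq \bigcup_{i \in S} T_i$ and therefore $x \notin \bigcup_{i \in \support(x)} T_i$, which is the second condition.

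Since the statement is essentially a reformulation of \cref{lem:kkm}, I do not expect a genuine obstacle; the only point deserving a moment's attention is the containment $\support(x) \subseteq S$ for $x \in \mathrm{conv}(\{e_i : i \in S\})$, which is what lets a non-covering witness for $S$ double as a witness for its own support.
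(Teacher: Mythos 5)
Your proposal is correct and is essentially the paper's own argument: both rest on the same support/convex-hull observation (the paper phrases it as $x \in \conv(\{e_i : i \in \support(x)\})$, you as $\support(x) \subseteq S$ for $x \in \conv(\{e_i : i \in S\})$), with the paper stating it contrapositively (if condition 2 never holds, the sets form a K-K-M covering, so the lemma gives a point for condition 1) and you as an explicit case split. Your additional check of mutual exclusivity is a fine, if minor, addition that the paper leaves implicit.
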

\begin{proof}
  Let us note that for any $x \in \Simplex_{n-1}$ we have that
  $x = \sum_{i \in \support(x)} x_i e_i \in \conv(\{e_i \colon i \in
  \support(x)\})$. Thus, if the second condition does not hold for any
  $x \in \Simplex_{n-1}$ it follows that $T_1,\dots,T_n$ form a
  {K-K-M} covering. By the {K-K-M} lemma we have
  $\cap_{i=1}^n T_i \neq \emptyset$, which means that there exist
  $x \in \Simplex_{n-1}$ satisfying the first condition.
\end{proof}  

\noindent \citet{FM:KnasterKM1929} gave a proof
of Brouwer's fixed point theorem from the K-K-M lemma, whereas
\citet{MS:Gale1955} conversely gave a proof of the K-K-M lemma
from Brouwer's fixed point theorem. To adapt these proofs to a
computational setting, we need to settle on the generality of closed
sets we would like to consider and how to represent those.
It is natural to restrict attention to \emph{closed semi-algebraic sets}
definable in the first order theory of the reals using polynomials
with integer coefficients.

\begin{definition}[Basic closed semi-algebraic set]
  A set $S \in \RR^n$ is a \emph{basic closed semi-algebraic set} if there
  exists polynomials $P_1,\dots,P_k$ such that
  $S=\{x \in \RR^n \mid \wedge_{i=1}^k P_i(x) \geq 0\}$.
\end{definition}

\noindent Any closed semialgebraic set $S$ is a finite union of basic closed
semialgebraic sets, see e.g.~\citep{BochnakCosteRoy1998}, and if $S$ is
definable using polynomials with integer coefficients, this holds also
for the basic closed semialgebraic sets whose union
is~$S$. Disregarding complexity concerns, all such sets can be
represented as an inverse image $F^{-1}(0)$ of a nonnegative function
$F$ computed by an algebraic circuit as shown in the following lemma.

\begin{lemma}
  Let $S \subseteq \RR^n$ be a closed semialgebraic set definable using
  polynomials with integer coefficients. Then there exists an
  algebraic circuit $C$ computing a nonnegative function $F$ such that
  $S = \{x \in \RR^n \mid F(x)=0\}$.
  \label{LEM:ClosedSemialgebraicSet-Circuit}
\end{lemma}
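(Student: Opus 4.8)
The plan is to build $F$ by following the Boolean structure of a defining description of $S$, turning each logical connective into a nonnegative arithmetic surrogate that is computable with the gates $\{+,-,\ast,\max\}$. First, recall (see \citep{BochnakCosteRoy1998}, and as already noted in the paragraph preceding the statement) that any closed semialgebraic set $S$ is a finite union $S = \bigcup_{j=1}^N S_j$ of \emph{basic} closed semialgebraic sets $S_j = \{x \in \RR^n : P_{j,1}(x) \geq 0, \dots, P_{j,k_j}(x) \geq 0\}$, and that if $S$ is definable over the integers then the polynomials $P_{j,i}$ may be taken to have integer coefficients as well. Since the lemma only asserts existence, we make no attempt to bound $N$, the $k_j$, or the degrees of the $P_{j,i}$.

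Second, observe the following surrogates. For a single inequality $P(x) \geq 0$, the function $x \mapsto \max(0,-P(x))$ is nonnegative, is computed by an algebraic circuit (the polynomial $P$ via $+,-,\ast$ and integer constants, followed by one $\max$ gate), and vanishes exactly on $\{x : P(x) \geq 0\}$. For the conjunction defining a basic set $S_j$, put $F_j(x) := \sum_{i=1}^{k_j} \max(0,-P_{j,i}(x))$; being a sum of nonnegative circuit-computable functions it is again nonnegative and circuit-computable, and $F_j(x) = 0$ if and only if $P_{j,i}(x) \geq 0$ for all $i$, i.e.\ if and only if $x \in S_j$. For the union, put
\[ F(x) := \prod_{j=1}^N F_j(x) = \prod_{j=1}^N \sum_{i=1}^{k_j} \max\bigl(0,-P_{j,i}(x)\bigr). \]
Each factor is nonnegative, so $F$ is nonnegative; and a product of nonnegative reals equals $0$ precisely when some factor equals $0$, so $F(x) = 0$ if and only if $x \in S_j$ for some $j$, i.e.\ if and only if $x \in S$. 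Thus $F$ is a nonnegative function computed by an algebraic circuit over $\{+,-,\ast,\max\}$ with integer constants, it is total on $\RR^n$ (no division occurs, so well-definedness in the sense of \cref{def:basic-fixp} is automatic), and $S = F^{-1}(0)$, as required. (One could equally use $\min$ in place of $\prod$ for the union and $\max$ in place of $\sum$ for the conjunction; the choice is immaterial.)

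There is no genuine obstacle here: the construction is routine once the right nonnegative surrogates for ``$\geq$'', ``$\wedge$'' and ``$\vee$'' are identified. The only two points warranting care are (a) checking that the integer-coefficient hypothesis on $S$ is inherited by the basic pieces $S_j$ --- which is exactly what the cited semialgebraic decomposition guarantees --- and (b) accepting that this circuit may be of enormous size, since the decomposition into basic sets can blow up. The latter is harmless for the present existence statement, but it is precisely why the complexity-sensitive versions of the K-K-M-type problems treated later in \cref{sec:kkm} require a separate, more careful argument.
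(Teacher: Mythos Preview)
Your proof is correct and follows essentially the same approach as the paper: decompose $S$ into basic closed pieces, encode each atomic inequality $P(x)\ge 0$ by $\max(0,-P(x))$, and then combine. The only cosmetic difference is that the paper uses $\max$ for intersection and $\min$ for union where you use $\sum$ and $\prod$; you already note this choice is immaterial.
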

\begin{proof}
  We may write $S$ as a union $S = \cup_{i=1}^s S_i$ of basic closed
  semialgebraic sets
  $S_i=\{x \in \RR^n \mid \wedge_{j=1}^{k_i} P_{ij}(x) \geq 0\}$,
  where each polynomial $P_{ij}$ have integer coefficients.

  Note that
  $S_{ij}=\{x \in \RR^n \mid P_{ij}(x) \geq 0\} = \{x \in \RR^n \mid
  \max(0,-P_{ij}(x))=0 \}$, which means in particular that $S_{ij}$ may
  be represented by as the inverse image $F_{ij}^{-1}(0)$ of a
  nonnegative function $F_{ij}$ computed by an algebraic circuit. We
  complete the proof by showing that the collection of sets
  representable in this way is closed under intersection and union.

  Suppose $F_1$ and $F_2$ are nonnegative functions computed by
  algebraic circuits and let $S_1 = F_1^{-1}(0)$ and
  $S_2 = F_1^{-1}(0)$.  Then the two functions $\max(F_1(x),F_2(x))$ and
  $\min(F_1(x),F_2(x))$ are nonnegative functions computed by
  algebraic circuits that represents $S_1 \cap S_2$ and
  $S_1 \cup S_2$, respectively.
\end{proof}

\noindent We note that if $S$ is any closed set, then the distance $d(x,S)$ is a
nonnegative continuous function that assumes the value~$0$ precisely
in the set $S$. We may thus think of $F$ as a stand-in for the distance
$d(x,S)$ from $x$ to $S$. \\

\noindent We can now define a search problem based on
\cref{COR:KKM-search-problem} and the representation of
closed semialgebraic sets used in
\cref{LEM:ClosedSemialgebraicSet-Circuit}.
\begin{definition}[K-K-M problem]
  Given algebraic circuits $C_1,\dots,C_n$ computing non-negative
  functions $F_1,\dots,F_n : \RR^n \rightarrow \RR$, a point
  $x \in \Simplex_{n-1}$ is a solution of the associated K-K-M problem if one of the following conditions holds.
  \begin{enumerate}
  \item $\forall i: F_i(x)=0$.
  \item $\forall i: x_i>0 \Rightarrow F_i(x)>0$.
  \end{enumerate}
  In case the second condition never holds we say that the functions 
  $F_1,\dots,F_n$ satisfy the K-K-M condition.
  \label{DEF:KKM-Search-Problem}
\end{definition}

\noindent Let us note that while the condition that the functions $F_i$ are
non-negative is a semantic condition, it can be enforced syntactically
when computed by algebraic circuits by considering either $(F_i(x))^2$
or $\max(F_i(x),0)$ depending on whether we wish to represent the set
$F_i^{-1}(0)$ or the set $\{x \in \RR^n \mid F_i(x)\leq 0\}$.

We can now adapt the proof by \citet{MS:Gale1955} to show that the
K-K-M problem is in $\FIXP$, and the proof of \citet{FM:KnasterKM1929} to show 
that the K-K-M problem is also $\FIXP$-hard. We have the following theorem.

\begin{theorem}\label{thm:kkm-fixp-complete}
The K-K-M problem is $\FIXP$-complete.
\label{PROP:KKM-FIXP-complete}
\end{theorem}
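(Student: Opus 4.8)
The plan is to prove the two directions separately, each by turning a classical argument into a reduction. For \FIXP-membership I would adapt the derivation of the K-K-M lemma from Brouwer's fixed point theorem in the spirit of \citet{MS:Gale1955}; this direction yields a Brouwer function directly and does not even require the OPT-gate. Given algebraic circuits $C_1,\dots,C_n$ computing nonnegative functions $F_1,\dots,F_n\colon\RR^n\to\RR$, define $f\colon\Simplex_{n-1}\to\Simplex_{n-1}$ by
\[
f(x)_i \;=\; \frac{x_i + F_i(x)}{\,1 + \sum_{k=1}^n F_k(x)\,}\,, \qquad i=1,\dots,n\,.
\]
This is well-defined everywhere (the denominator is always at least $1$), continuous, computed by an algebraic circuit assembled from the $C_i$ together with $+$ and $\div$, and it maps $\Simplex_{n-1}$ into itself, since each coordinate is nonnegative and the coordinates sum to $1$. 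Hence finding a fixed point of $f$ is a basic \FIXP\ problem over the rational polytope $\Simplex_{n-1}$, and the instance map $(C_1,\dots,C_n)\mapsto f$ together with the identity solution map is an SL-reduction (both problems are total: the K-K-M problem by \cref{COR:KKM-search-problem}, the fixed-point problem by Brouwer's theorem). It remains to check that any fixed point $x^\star$ of $f$ solves the K-K-M problem: from $f(x^\star)=x^\star$ one gets $F_i(x^\star) = x^\star_i\cdot D$ for all $i$, where $D := \sum_{k=1}^n F_k(x^\star) \geq 0$. If $D=0$ then $F_i(x^\star)=0$ for all $i$ and the first condition of \cref{DEF:KKM-Search-Problem} holds; if $D>0$ then for every $i$ with $x^\star_i>0$ we have $F_i(x^\star)=x^\star_i D>0$, so the second condition holds. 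Either way $x^\star$ is a valid solution.

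For \FIXP-hardness I would adapt \citeauthor{FM:KnasterKM1929}'s derivation of Brouwer's theorem from the K-K-M lemma, reducing from the \FIXP-complete problem of computing a Brouwer fixed point of a function $g\colon\Simplex_{n-1}\to\Simplex_{n-1}$ given by an algebraic circuit \citep{SICOMP:EtessamiY10}. Given such a $g$, set $F_i(x) := \max\!\big(0,\; g_i(x)-x_i\big)$; each $F_i$ is nonnegative, is computed by an algebraic circuit obtained from that of $g$ by one subtraction and one $\max$ gate, and satisfies $F_i(x)=0 \iff g_i(x)\leq x_i$. The crucial point---essentially \citeauthor{FM:KnasterKM1929}'s combinatorial observation---is that $F_1,\dots,F_n$ satisfy the K-K-M condition: for any $x\in\Simplex_{n-1}$, writing $S=\support(x)$, we have $\sum_{i\in S}x_i = 1 = \sum_{i=1}^n g_i(x) \geq \sum_{i\in S}g_i(x)$, so $x_i\geq g_i(x)$, i.e.\ $F_i(x)=0$, for some $i\in S$; hence the second condition of \cref{DEF:KKM-Search-Problem} never holds. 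Consequently any solution $x^\star$ of the K-K-M instance must satisfy the first condition, $g_i(x^\star)\leq x^\star_i$ for every $i$; summing over $i$ and using $\sum_i g_i(x^\star)=\sum_i x^\star_i=1$ forces $g_i(x^\star)=x^\star_i$ for all $i$, so $x^\star$ is a fixed point of $g$. The instance map $g\mapsto(F_1,\dots,F_n)$ is clearly polynomial-time and the solution map is again the identity, giving an SL-reduction.

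I do not expect a serious obstacle in either direction; both arguments are short once set up correctly. The one point that genuinely requires care is the hardness direction, where one must verify that the constructed functions satisfy the K-K-M condition---this is exactly \citeauthor{FM:KnasterKM1929}'s original lemma---so that the solution returned by the K-K-M instance is guaranteed to be a genuine fixed point of $g$ rather than a vacuous ``non-covering'' certificate of the second type. A secondary point on the membership side is to observe that the denominator $1+\sum_k F_k(x)$ is bounded away from $0$ on all of $\RR^n$, so that the algebraic circuit for $f$ is well-defined everywhere and not merely at fixed points, as \cref{def:basic-fixp} requires.
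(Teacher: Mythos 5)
Your proposal is correct and follows essentially the same route as the paper's proof: the same map $x \mapsto \bigl(x_i + F_i(x)\bigr)/\bigl(1+\sum_k F_k(x)\bigr)$ for membership, and the same construction $F_i(x)=\max(0,G(x)_i-x_i)$ with the same support-summation argument for hardness. No gaps; the points you flag (well-definedness of the circuit everywhere, and verifying the K-K-M condition so the returned solution is of the first type) are exactly the ones the paper's proof also handles.
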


\begin{proof}
  We first prove $\FIXP$-membership of the K-K-M problem. Define the
  function $G: \Simplex_{n-1} \rightarrow \Simplex_{n-1}$ by
  \[
    G(x)_i = \frac{x_i + F_i(x)}{1+\sum_{j=1}^n F_j(x)}
  \]
  Given algebraic circuits for all functions $F_i$, an algebraic
  circuit computing $G$ can clearly be constructed in polynomial
  time. Suppose now that $x \in \Simplex_{n-1}$ such that $G(x)=x$. It
  follows that $x_i \sum_{j=1}^n F_j(x) = F_i(x)$ for all $i$. If
  $F_i(x)=0$ for all~$i$, $x$ is a solution satisfying the first
  condition of the K-K-M problem. Otherwise, suppose there is~$i$ such
  that $F_i(x)>0$. Then $\sum_{j=1}^n F_j(x)>0$ as well and it follows
  that
    \[
      x_i = \frac{F_i(x)}{\sum_{j=1}^n F_j(x)}
  \]
  for all~$i$. Hence $x_i>0$ implies $F_i(x)>0$, for all~$i$,
  and $x$ is a solution satisfying the second condition of the K-K-M
  problem.

  To prove $\FIXP$-hardness we reduce from the basic $\FIXP$ problem
  with domain $\Simplex_{n-1}$. Suppose that
  $G: \Simplex_{n-1} \rightarrow \Simplex_{n-1}$ is a continuous
  function computed by an algebraic circuit.
  We may then in polynomial time construct algebraic circuits
  computing the functions $F_i(x) = \max(0,G(x)_i-x_i)$, for
  $i=1,\dots,n$. We claim that these functions satisfy the K-K-M
  property. Suppose for the contrary that $x \in \Simplex_{n-1}$ such
  that $F_i(x)>0$ whenever $x_i>0$. Letting $S = \support{x}$ we then have $G(x)_i>x_i$ for all $i \in S$,  leading to the contradiction
  \[
    1 = \sum_{i \in S}  x_i < \sum_{i \in S} G(x)_i
    \leq \sum_{i=1}^n G(x)_i = 1 \enspace .
  \]
  Thus if $x \in \Simplex_{n-1}$ is a solution of the K-K-M problem it
  satisfies that $F_i(x)=0$, for all $i$, which implies that
  $x_i \geq G(x)_i$, for all $i$. Since also $G(x) \in \Simplex_{n-1}$
  it follows that in fact $x_i = G(x)_i$, for all $i$, which means
  that $x$ is a fixed point of $G$.
\end{proof}

\subsubsection{The rainbow K-K-M lemma and Bapat's Brouwer fixed point generalization}

\noindent Several years after his proof for the K-K-M lemma in \citep{MS:Gale1955}, \citet{IJGT:Gale1984} also proved a generalization of the {K-K-M} lemma, commonly referred to as the \emph{rainbow {K-K-M} lemma}.\footnote{Sometimes in the literature the term ``colorful'' is used instead of ``rainbow''.} 

\begin{lemma}[Rainbow K-K-M lemma \citep{IJGT:Gale1984}]
  Let $T_{i,j=1}^n \subseteq \RR^n$ be a collection of closed sets
  such that for every $i$, the collection $T_{i,1},\dots,T_{i,n}$ form
  a {K-K-M} covering of $\Simplex_{n-1}$. Then there exists a permutation
  $\pi$ of $[n]$ such that $\cap_{i=1}^n T_{i,\pi(i)} \neq \emptyset$.
\end{lemma}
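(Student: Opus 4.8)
The plan is to derive the rainbow K-K-M lemma from Bapat's Brouwer fixed point theorem (\cref{thm:bapat}), in exact analogy with the way the ordinary K-K-M lemma is derived from Brouwer's theorem by \citet{MS:Gale1955} --- the construction underlying the membership half of \cref{thm:kkm-fixp-complete}. First I would replace each closed set $T_{ij}$ by its distance function $F_{ij}\colon\Simplex_{n-1}\to\RRnn$, $F_{ij}(x)=d(x,T_{ij})$, which is continuous, nonnegative, and vanishes exactly on $T_{ij}\cap\Simplex_{n-1}$; then, for each $i$, I would form the continuous self-map $f_i\colon\Simplex_{n-1}\to\Simplex_{n-1}$ given by
\[
  f_i(x)_j=\frac{x_j+F_{ij}(x)}{1+\sum_{k=1}^n F_{ik}(x)},
\]
which lands in $\Simplex_{n-1}$ since its numerators are nonnegative and sum over $j$ to the denominator.

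Applying \cref{thm:bapat} to $f_1,\dots,f_n$ produces a point $z\in\Simplex_{n-1}$ and a permutation $\pi$ of $[n]$ with $z_{\pi(i)}\geq f_i(z)_{\pi(i)}$ for all $i$. Clearing the positive denominator, this is equivalent to
\begin{equation}\label{eq:rainbow-kkm-star}
  z_{\pi(i)}\sum_{k=1}^n F_{ik}(z)\;\geq\;F_{i,\pi(i)}(z)\qquad\text{for all }i.
\end{equation}
If $z_{\pi(i)}=0$, then \eqref{eq:rainbow-kkm-star} immediately forces $F_{i,\pi(i)}(z)=0$, i.e.\ $z\in T_{i,\pi(i)}$; and if $z_{\pi(i)}=1$, then $z=e_{\pi(i)}\in T_{i,\pi(i)}$ directly by the K-K-M covering property of row $i$ (applied to $S=\{\pi(i)\}$). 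So the remaining rows are those with $0<z_{\pi(i)}<1$.

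The final and most delicate step will be to show, using \eqref{eq:rainbow-kkm-star} and the covering hypothesis, that $F_{i,\pi(i)}(z)=0$ for \emph{every} $i$, which gives $z\in\bigcap_{i=1}^n T_{i,\pi(i)}$ and proves the lemma. For each row $i$, the covering property applied to $S=\support(z)$ yields a coordinate $k_i\in\support(z)$ with $z\in T_{i,k_i}$, i.e.\ $F_{i,k_i}(z)=0$; the goal is to argue that \eqref{eq:rainbow-kkm-star} propagates this vanishing to the Bapat-selected coordinate $\pi(i)$. I expect this propagation to be the main obstacle. In the non-rainbow case it is trivial: Brouwer's theorem supplies the exact equalities $x_k\sum_j F_j(x)=F_k(x)$ at \emph{all} coordinates $k$, so one evaluates at the covering coordinate $k\in\support(x)$ and reads off $\sum_j F_j(x)=0$. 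Here \cref{thm:bapat} only gives the one-sided inequality \eqref{eq:rainbow-kkm-star}, and only at the single coordinate $\pi(i)$ of each $f_i$, so the argument has to be run simultaneously over all $n$ maps, exploiting that $\pi$ is a bijection (so $\{\pi(i):i\in[n]\}=[n]$) and the a priori bound $F_{ij}(z)\le\Norm{z-e_j}$ that follows from $e_j\in T_{ij}$. As a sanity check, the converse direction is a one-line argument --- \cref{thm:bapat} follows from the rainbow K-K-M lemma by taking $T_{ij}=\{x\in\Simplex_{n-1}:x_j\ge f_{ij}(x)\}$, whose rows are K-K-M coverings because $1=\sum_{j\in\support(x)}x_j<\sum_{j\in\support(x)}f_{ij}(x)\le1$ is impossible --- so the two statements are equivalent, which is the form in which they will be used for the $\FIXP$-completeness results that follow.
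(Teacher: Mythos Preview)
The paper does not prove this lemma; it is stated with attribution to \citet{IJGT:Gale1984}, so there is no in-paper argument to compare against. Your converse paragraph (Rainbow K-K-M $\Rightarrow$ Bapat via $T_{ij}=\{x:x_j\ge f_{ij}(x)\}$) is correct and is indeed how \citet{MP:Bapat1989} obtains his Brouwer-type theorem from the rainbow lemma.

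Your forward direction, however, has a genuine gap, and it cannot be closed with the tools you suggest. You correctly isolate the difficulty: Brouwer hands you the equalities $z_j\sum_k F_k(z)=F_j(z)$ at \emph{every} coordinate, so one can evaluate at the covering coordinate and kill $\sum_k F_k$; Bapat gives only the single inequality \eqref{eq:rainbow-kkm-star} at one coordinate per map. This loss is fatal for the construction. Concretely, for $n=3$ take $T_{12}=\Simplex_2$, $T_{11}=\conv(e_1,e_3)$ (the edge $\{x_2=0\}$), $T_{13}=\{e_3\}$, and make rows $2,3$ trivial ($T_{2j}=T_{3j}=\Simplex_2$). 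One checks that row~$1$ is a K-K-M covering. At $z=(\tfrac13,\tfrac13,\tfrac13)$ we have $F_{12}(z)=0$, $F_{11}(z)=d(z,\conv(e_1,e_3))=\sqrt{6}/6$, and $F_{13}(z)=d(z,e_3)=\sqrt{6}/3=2F_{11}(z)$, so
\[
  z_1\sum_{k}F_{1k}(z)=\tfrac13\cdot 3F_{11}(z)=F_{11}(z),
\]
and hence \eqref{eq:rainbow-kkm-star} holds for $i=1$ with $\pi(1)=1$ (with equality), while rows~$2,3$ satisfy it trivially for $\pi(2)=2,\pi(3)=3$. Thus $(z,\mathrm{id})$ is a legitimate Bapat witness for your maps $f_1,f_2,f_3$, yet $z\notin T_{1,\pi(1)}$. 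Neither the bijectivity of $\pi$ nor the bound $F_{ij}(z)\le\Norm{z-e_j}$ rules this out; indeed the latter is what makes $F_{13}(z)=2F_{11}(z)$ achievable. One can even replicate row~$1$'s structure in all three rows (placing the unique zero in column~$2$ each time), so that at this same $z$ no permutation gives $F_{i,\pi(i)}(z)=0$ for all $i$. So an arbitrary Bapat solution for these $f_i$ need not yield a rainbow K-K-M point, and the ``propagation'' step you flag cannot be carried out from \eqref{eq:rainbow-kkm-star} and the row-wise covering alone.
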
  

\noindent \citet{MP:Bapat1989} provided a proof of the rainbow K-K-M lemma via 
the introduction of a generalization of Sperner's lemma. In the same paper, he also
provided his generalization of Brouwer's fixed point theorem (\cref{thm:bapat}), which we used in the proof of \cref{prop:efFIXPhard}. Our reduction in \cref{sec:efck-hardness} also proves the membership of the computational version of Bapat's Brouwer fixed point theorem in FIXP (henceforth \emph{Bapat's Brouwer fixed point problem}), since envy-free cake cutting is in FIXP by \cref{prop:efmembership}.

The reduction of \cref{PROP:KKM-FIXP-complete} from the K-K-M problem to the computational version of Brouwer's fixed point theorem generalizes immediately to the case of reducing from the rainbow K-K-M version (henceforth the \emph{rainbow K-K-M problem}) to Bapat's Brouwer fixed point problem, thus establishing the FIXP-membership of the rainbow K-K-M problem as well. In fact Bapat's proof uses the rainbow K-K-M lemma in the same way as \citet{FM:KnasterKM1929} do to prove Brouwer's
fixed point theorem from the K-K-M lemma. Therefore, we have the following theorem.

\begin{theorem}\label{thm:KKM-and-Bapat}
The rainbow K-K-M problem and Bapat's Brouwer fixed point problem are $\FIXP$-complete.
\end{theorem}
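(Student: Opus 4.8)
The plan is to dispatch the two problems separately, recycling reductions already built: for Bapat's Brouwer fixed point problem we reuse the envy-free cake cutting results, and for the rainbow K-K-M problem we lift the (non-rainbow) K-K-M reduction of \cref{thm:kkm-fixp-complete} one level up, with Brouwer's theorem replaced by \cref{thm:bapat}. For \emph{Bapat's Brouwer fixed point problem}, membership is essentially already in hand: the \FIXP-hardness reduction of \cref{sec:efck-hardness}, read in the opposite direction, is an SL-reduction \emph{from} Bapat's problem \emph{to} envy-free cake cutting. Concretely, given algebraic circuits for $f_1,\dots,f_n\colon\Delta^{n-1}\to\Delta^{n-1}$, apply the preprocessing of \cref{sec:efck-hardness} so that $f_{ij}\geq\tfrac1{2n}$, build in polynomial time the cake instance with valuations $u_{ij}(x)=\max(0,x_j-f_{ij}(x))$, and recover from any envy-free division $x$ the Bapat solution obtained via the affine (hence separable linear) map $x\mapsto 2\bigl(x-\tfrac1{2n}\bigr)$; since both problems are total, this is a legitimate SL-reduction, and \cref{prop:efmembership} then gives membership in \FIXP. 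For \FIXP-hardness, reduce from the simplex-domain Brouwer fixed point problem (which is \FIXP-complete, and is the source problem used in \cref{thm:kkm-fixp-complete}) by sending an instance $f$ to the Bapat instance $(f,f,\dots,f)$: any solution $(z,\pi)$ of the image satisfies $z_{\pi(i)}\geq f_{\pi(i)}(z)$ for all $i$, and since $\pi$ is a permutation and $z,f(z)\in\Delta^{n-1}$ this forces $z_j\geq f_j(z)$ for all $j$, hence $z=f(z)$; the solution map simply returns $z$. Thus Bapat's Brouwer fixed point problem is \FIXP-complete.

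For the \emph{rainbow K-K-M problem}, membership is obtained by generalizing the reduction in the proof of \cref{thm:kkm-fixp-complete}. Given algebraic circuits for non-negative $F_{i,j}\colon\RR^n\to\RR$, $i,j\in[n]$, define for each $i$ the algebraic map $f_i\colon\Simplex_{n-1}\to\Simplex_{n-1}$ by
\[
  f_i(x)_j \;=\; \frac{x_j+F_{i,j}(x)}{1+\sum_{k=1}^n F_{i,k}(x)},
\]
the row-$i$ analogue of the function $G$ in \cref{thm:kkm-fixp-complete}; it indeed lands in $\Simplex_{n-1}$ and is computable by an algebraic circuit. Applying \cref{thm:bapat} to $(f_1,\dots,f_n)$ yields $z\in\Simplex_{n-1}$ and a permutation $\pi$ with $z_{\pi(i)}\geq f_i(z)_{\pi(i)}$, equivalently $z_{\pi(i)}\sum_k F_{i,k}(z)\geq F_{i,\pi(i)}(z)$, for all $i$; from these relations one reads off, exactly as in \cref{thm:kkm-fixp-complete}, a solution $(z,\pi)$ of the rainbow K-K-M problem — either $F_{i,\pi(i)}(z)=0$ for every $i$ (a rainbow intersection point together with its witnessing permutation), or else the inequalities witness that some row $F_{i,1},\dots,F_{i,n}$ fails the K-K-M covering property at $z$ (i.e.\ $z_j>0\Rightarrow F_{i,j}(z)>0$ for all $j$). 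Since Bapat's Brouwer fixed point problem is in \FIXP, so is the rainbow K-K-M problem. For \FIXP-hardness — which mirrors how Bapat derives the rainbow K-K-M lemma from his theorem, just as \cref{thm:kkm-fixp-complete} derives hardness from Brouwer's theorem — reduce from Bapat's problem: given $(f_1,\dots,f_n)$ set $F_{i,j}(x)=\max(0,f_i(x)_j-x_j)$. For each $i$ the row $(F_{i,1},\dots,F_{i,n})$ satisfies the K-K-M condition, since $F_{i,j}(x)>0$ for all $j\in\support(x)$ would give $\sum_{j\in\support(x)}f_i(x)_j>\sum_{j\in\support(x)}x_j=1$, contradicting $f_i(x)\in\Simplex_{n-1}$; hence the ``non-covering'' clause is vacuous, and any rainbow K-K-M solution $(x,\pi)$ must have $F_{i,\pi(i)}(x)=0$, i.e.\ $f_i(x)_{\pi(i)}\leq x_{\pi(i)}$, for all $i$ — precisely a Bapat solution, returned after discarding the auxiliary coordinates.

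The crux I expect is the membership reduction for the rainbow K-K-M problem: in the non-rainbow case a genuine Brouwer fixed point $x=G(x)$ delivers the coordinate equalities $x_i\sum_j F_j(x)=F_i(x)$ for \emph{all} $i$, whereas \cref{thm:bapat} only supplies the inequalities $z_{\pi(i)}\sum_k F_{i,k}(z)\geq F_{i,\pi(i)}(z)$ along the permutation diagonal. Consequently the ``non-covering witness'' clause of the rainbow K-K-M search problem must be phrased carefully enough that (i) the problem remains total and (ii) a Bapat point always realizes one of its two clauses; getting this formulation and verification right is the main work. A secondary, routine point is checking that both solution-recovery maps are separable linear — the permutation $\pi$ is not part of the real-valued search space and can anyway be recomputed in polynomial time from $x$ and the circuits, so it poses no obstruction to the SL-reduction requirement.
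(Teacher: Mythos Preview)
Your treatment of Bapat's Brouwer fixed point problem (membership via cake cutting, hardness by taking $f_1=\cdots=f_n$) and of the hardness direction for rainbow K-K-M matches the paper exactly. The paper also argues rainbow K-K-M membership by asserting that the Gale-style reduction of \cref{thm:kkm-fixp-complete} ``generalizes immediately'' to a reduction from rainbow K-K-M to Bapat's problem, which is precisely the route you attempt.

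However, there is a genuine gap in your membership argument for rainbow K-K-M, exactly at the point you flag as the crux. You claim that from a Bapat solution $(z,\pi)$ one reads off either $F_{i,\pi(i)}(z)=0$ for all $i$, or some row $i$ with $F_{i,j}(z)>0$ for every $j\in\support(z)$. This dichotomy is false. Take $n=3$, $z=(\tfrac13,\tfrac13,\tfrac13)$, and constant functions $F_{1,\cdot}=(0,1,1)$, $F_{2,\cdot}=(0,1,2)$, $F_{3,\cdot}=(0,1,\tfrac12)$. One checks directly that $(z,\mathrm{id})$ satisfies the Bapat inequalities $z_i\geq f_{i,i}(z)$ for your maps $f_i$: the three right-hand sides are $\tfrac19$, $\tfrac13$, $\tfrac13$ respectively. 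Yet the zero-pattern of the $F_{i,j}(z)$ admits no system of distinct representatives (all three rows vanish only in column~$1$), so no permutation $\sigma$ gives $F_{i,\sigma(i)}(z)=0$ for all $i$; and no single row has all entries positive on $\support(z)=\{1,2,3\}$, since $F_{i,1}=0$ for every $i$. Thus neither clause of your proposed search problem is realized at this Bapat solution, and the reduction does not go through.

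The root cause is that a Brouwer fixed point of $G$ in the non-rainbow proof delivers the \emph{equalities} $x_j\sum_k F_k(x)=F_j(x)$ for \emph{every} coordinate $j$, whereas a Bapat point only supplies the \emph{inequalities} $z_{\pi(i)}\sum_k F_{i,k}(z)\geq F_{i,\pi(i)}(z)$ along the permutation diagonal --- strictly less information, and not enough to pin down either a rainbow intersection or a K-K-M violation at $z$ under the formulation you write down. You are right that phrasing the violation clause is the delicate step, but the specific phrasing you propose fails, and a correct formulation (if one exists along these lines) remains to be found. The paper itself supplies no details here beyond the word ``immediately,'' so this gap is not resolved there either.
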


\section{Applications to Markets} \label{sec:markets}

In this section, we show how our main technique can be used to prove the FIXP-membership of computing equilibria in competitive markets. Our first result here is a rather general one, namely that the problem of computing a market equilibrium in Arrow-Debreu markets with concave utilities is in FIXP. This generalizes previously known results, on markets with specific utility functions \citep{garg2016dichotomies} but at the same time is conceptually easier, as long as our OPT-gate is used as a black box. We state the main theorem below.

\begin{theorem}\label{thm:Arrow-Debreu}
The problem of computing a market equilibrium in an Arrow-Debreu market with concave utilities is in FIXP.
\end{theorem}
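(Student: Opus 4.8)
The plan is to turn the classical existence proof of \citet{arrow1954existence} into a \FIXP-membership proof by constructing a Brouwer function directly (rather than a Kakutani correspondence), delegating every utility- and profit-maximization subproblem to the OPT-gate of \cref{thm:OPT-gate-convex}. First I would fix notation for the market: consumers $i \in [n]$ with compact convex consumption sets $X_i \ni 0$, concave continuous utilities $u_i \colon X_i \to \RR$, endowments $\omega_i$, and ownership shares $\theta_{if} \ge 0$ with $\sum_i \theta_{if} = 1$; producers $f$ with compact convex production sets $Y_f \ni 0$. As required by \cref{thm:OPT-gate-convex}, the sets $X_i, Y_f$ are given by convex-inequality constraints with circuits for the constraint functions and pseudogates for their subgradients, and we are given circuits for the $u_i$ and pseudogates for the supergradients $\partial u_i$. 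As is standard, I would replace $X_i$ and $Y_f$ by their intersections with a large box $B = [-R,R]^d$ chosen so that the (compact) set of attainable allocations lies in $\operatorname{int} B$; this supplies the explicit bound the OPT-gate needs, and it does not change the set of equilibria, since at an equilibrium of the truncated market the allocation is attainable, the box constraint is non-binding, and the truncated optima are therefore optima of the original programs.

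The circuit $F$ acts on the compact convex domain $D = \Simplex_{\ell-1} \times \prod_i (X_i \cap B) \times \prod_f (Y_f \cap B)$, where $\Simplex_{\ell-1}$ is the price simplex over the $\ell$ goods. On input $(p,x,y)$ it performs: (1) for each producer $f$, compute a profit-maximizing $\overline{y}_f$ by feeding ``maximize $p^\transpose y_f$ over $y_f \in Y_f \cap B$'' to the OPT-gate, and set $\pi_f := p^\transpose \overline{y}_f$; (2) compute wealths $w_i := p^\transpose \omega_i + \sum_f \theta_{if}\pi_f$; (3) for each consumer $i$, compute a utility-maximizing bundle $\overline{x}_i$ by feeding ``maximize $u_i(x_i)$ subject to $p^\transpose x_i \le w_i$ and $x_i \in X_i \cap B$'' to the OPT-gate; (4) form the aggregate excess demand $z := \sum_i \overline{x}_i - \sum_i \omega_i - \sum_f \overline{y}_f$; (5) update prices by the Gale--Nikaido/t\^atonnement map $\overline{p}_j := (p_j + \max(0,z_j))/(1 + \sum_k \max(0,z_k))$. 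The output $(\overline{p},\overline{x},\overline{y})$ lies in $D$, and after expanding the OPT-gates according to \cref{def:pseudogate} this is a legitimate \FIXP-circuit on $D$ enlarged by auxiliary $[0,1]$-coordinates.

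It remains to show that every fixed point yields a market equilibrium. At a fixed point: (a) provided the programs of steps (1) and (3) satisfy the explicit Slater condition (see below), \cref{thm:OPT-gate-convex} guarantees the OPT-gates return genuine optima, so each consumer maximizes utility in her budget set and each producer maximizes profit at $p$; (b) the OPT-gate output of step (3) is in any case feasible, so $p^\transpose\overline{x}_i \le w_i$, and summing these with the profit identities and $\sum_i\theta_{if}=1$ gives the Walras inequality $p^\transpose z \le 0$; (c) the price map being at a fixed point forces $p_j\bigl(\sum_k\max(0,z_k)\bigr) = \max(0,z_j)$ for all $j$, and combined with $p \in \Simplex_{\ell-1}$ and $p^\transpose z \le 0$ this forces $z \le 0$ componentwise (if $s := \sum_k\max(0,z_k) > 0$, then $p_j = 0$ whenever $z_j \le 0$ and $p_j z_j = z_j^2/s$ otherwise, whence $p^\transpose z > 0$, a contradiction). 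Thus the allocation is feasible, and together with (a) this is an Arrow--Debreu equilibrium of the truncated and hence of the original market; projecting away the auxiliary coordinates gives the required SL-reduction to a basic \FIXP\ problem.

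The step I expect to be the real obstacle is establishing the explicit Slater condition \emph{uniformly in $p$} for the consumer programs of step (3). Its linear-independence clause is vacuous (these programs have no equality constraints), so the issue is the non-empty-interior clause: for every $p \in \Simplex_{\ell-1}$ we need a bundle in $\operatorname{int}(X_i)\cap\operatorname{int}(B)$ that is strictly within budget, which can fail precisely when a consumer's wealth is (near) zero --- the classical ``zero-income'' difficulty. I would resolve it by invoking the standard sufficient assumptions for equilibrium existence, e.g.\ $\omega_i$ strictly positive and in $\operatorname{int}(X_i)$, $0 \in X_i$, and $0 \in Y_f$ (so $\pi_f(p) \ge 0$ and hence $w_i \ge p^\transpose\omega_i > 0$): then for a small fixed $t > 0$ the bundle $(1-t)\omega_i$ is interior, inside the large box, and satisfies $p^\transpose((1-t)\omega_i) = (1-t)\,p^\transpose\omega_i < p^\transpose\omega_i \le w_i$, so Slater holds for all $p$ simultaneously (a more permissive cheaper-point/survival hypothesis can be handled the same way); the producer programs' Slater condition follows analogously from $Y_f$ having non-empty interior. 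The remaining bookkeeping --- compactness of the attainable set to fix $R$ and checking that truncation preserves equilibria --- is routine and mirrors the classical argument.
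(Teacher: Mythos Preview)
Your approach is essentially the paper's: use the OPT-gate for each producer's profit-maximization and each consumer's utility-maximization, update prices from excess demand, and verify the explicit Slater condition at a fixed point via the ``cheaper point'' assumption. The only structural difference is cosmetic---you update prices with the explicit Gale--Nikaido formula, while the paper uses a third OPT-gate solving the LP $\max_{v \in \Simplex_{\ell-1}} v^\transpose z$; both yield $z \le 0$ from Walras' inequality in the same way.

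Two points to tighten. First, your domain $\prod_i (X_i \cap B) \times \prod_f (Y_f \cap B)$ need not be a polytope when the $X_i$ or $Y_f$ are cut out by nonlinear convex inequalities, and a basic \FIXP\ problem requires a polytope domain; the paper simply takes the box $[-K,K]^\ell$ in each coordinate block, which works since the OPT-gates always land in the feasible sets anyway. Second, and more substantively, you stop at ``$z \le 0$, hence an equilibrium,'' but Condition~\ref{market_eq-item4} also requires $p \cdot z = 0$. Once $z \le 0$ the t\^atonnement fixed-point equation is vacuous, so this does not come for free; the paper closes the gap by invoking non-satiation (\cref{util2}) together with convexity of preferences (\cref{util3}) to show that every consumer's budget constraint must bind at the optimum---otherwise one could move slightly toward a strictly preferred bundle while staying feasible and inside the box, contradicting optimality of $x_i$. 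Summing the tight budget constraints then gives $p \cdot z = 0$. This is a short standard step, but it is not ``bookkeeping'' and should be stated.
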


\noindent Our second result regards the problem of computing an equilibrium in the pseudomarket mechanism of \citet{hylland1979efficient}, which was shown to be in FIXP quite recently by \citet{vazirani_et_al:LIPIcs.ITCS.2021.59}. We obtain the same result, via a conceptually simpler proof based on our general technique. 

\begin{theorem}\label{thm:HZ}
The problem of computing an equilibrium of the Hylland-Zeckhauser mechanism is in $\FIXP$. 
\end{theorem}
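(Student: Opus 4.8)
The plan is to follow the existence proof of \citet{hylland1979efficient}, which constructs an equilibrium via a fixed-point argument, but to replace the Kakutani-style best-response correspondence with a Brouwer function obtained by plugging the agents' utility-maximization programs into OPT-gates. First I would set up the model: there are $n$ agents and $n$ items (one unit of each), each agent $i$ has a von Neumann–Morgenstern utility vector over items, and is endowed with one unit of artificial currency. Given a price vector $p \in \RR^n_{\geq 0}$ for probability shares, agent $i$ demands an allocation $x_i \in [0,1]^n$ maximizing expected utility $u_i \cdot x_i$ subject to the budget constraint $p \cdot x_i \leq 1$ and the feasibility constraint $\sum_j x_{ij} = 1$ (each agent consumes exactly one unit of probability mass in total). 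An equilibrium is a price vector together with demands such that markets clear, i.e., $\sum_i x_{ij} = 1$ for every item $j$. The key point — and the reason this is not a special case of Arrow–Debreu — is the extra constraint $\sum_j x_{ij}=1$, but this is an explicit linear equality, so it fits the format \eqref{eq:standard-LP} required by the OPT-gate (and is trivially linearly independent, being a single row).

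The main construction: I would define a Brouwer function $F$ on the domain $D = \Delta^{n-1} \times ([0,1]^n)^n$ (or a suitably bounded price simplex times the space of allocation profiles), sending $(p, x) \mapsto (\bar p, \bar x)$. Each $\bar x_i$ is the output of an OPT-gate for agent $i$'s utility-maximization LP at prices $p$; one must check the explicit Slater condition holds, which requires an interior feasible point — here one must be slightly careful, since the budget constraint $p\cdot x_i \le 1$ together with $\sum_j x_{ij}=1$ could in principle be tight, but by bounding prices away from a region where the program degenerates (or, as Hylland–Zeckhauser do, working with strictly positive prices and noting $x_{ij} = 1/n$ is strictly budget-feasible when $\sum_j p_j \le n$, which can be enforced by normalizing on the simplex scaled appropriately), Slater is satisfied. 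The price-update component $\bar p$ should implement a tâtonnement step that increases the price of over-demanded items, projected back onto the price domain; a standard choice is $\bar p_j \propto p_j + \max(0, \sum_i x_{ij} - 1)$ (a Gale-style update analogous to the one used in \cref{thm:kkm-fixp-complete} and in the cake-cutting membership proof). Then I would argue that any fixed point $(p,x)$ of $F$ is an HZ equilibrium: the fixed-point condition on $\bar x$ gives that each $x_i$ is optimal for agent $i$ at prices $p$ (by \cref{thm:OPT-gate-convex}/\cref{thm:OPT-gate-LP}), and the fixed-point condition on $\bar p$, combined with Walras-type aggregate-budget bookkeeping ($\sum_i p \cdot x_i = \sum_i 1 = n = \sum_j p_j \cdot 1$ when prices are normalized appropriately, hence $\sum_j p_j(\sum_i x_{ij} - 1) = 0$), forces $\sum_i x_{ij} = 1$ for all $j$, exactly as in the cake-cutting argument where $x_j \sum_k r_k = r_j$ together with a feasibility fact forced all $r_j = 0$.

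The hardest part will be handling the Slater condition and the price normalization together: unlike the Nash case where Slater is trivial, here the budget constraint interacts with the simplex constraint, and one must ensure the OPT-gate's LP is never infeasible and always satisfies explicit Slater for every price vector in the chosen domain, including boundary prices where some $p_j = 0$. The resolution is to choose the price domain so that $\sum_j p_j$ is bounded above by a constant strictly less than $n$ (e.g., the simplex scaled by $n/2$), guaranteeing the uniform allocation $x_{ij} = 1/n$ is a strict interior point of every agent's feasible region; one then checks that the Walras identity still yields market clearing after the tâtonnement fixed-point condition, possibly using that prices of items in excess supply are driven down and a complementary-slackness-style argument shows no item is in strict excess supply at a fixed point. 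A secondary technical point is verifying that the final circuit maps $D$ into $D$ and is well-defined (no division by zero in the price renormalization), which follows because the denominator $1 + \sum_j \max(0, \sum_i x_{ij}-1)$ is always at least $1$. Once these are in place, the membership follows immediately since $F$ is built entirely from standard algebraic gates and OPT-gate pseudogates, and the solution mapping is a simple projection onto the $(p,x)$ coordinates, hence separable linear.
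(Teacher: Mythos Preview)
Your proposal misses a defining feature of the HZ-equilibrium: the tie-breaking requirement that each agent's allocation be the \emph{cheapest} among all utility-maximizing allocations (this is what guarantees ex-ante Pareto optimality). A single OPT-gate call maximizing $u_i\cdot x_i$ subject to the budget and simplex constraints returns \emph{some} optimal solution, but not necessarily the minimum-cost one, so a fixed point of your $F$ need not be an HZ-equilibrium. The paper handles this with a two-stage construction: first an LP computes a utility-maximizing allocation $x_i$, then a second LP computes a cost-minimizing allocation $y_i$ subject to achieving utility at least $\sum_j u_{ij}x_{ij}$. To make this second LP satisfy the explicit Slater condition (the utility lower bound is typically tight), the paper introduces a dummy good with utility strictly above $u_{i\max}$ and price $3n$; this creates strict interior feasibility, and a separate argument shows the dummy good is never bought at a fixed point.

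Your market-clearing argument also breaks down. In HZ the budgets are artificial and need not be exhausted (indeed, some good has price zero at equilibrium), so the identity $\sum_i p\cdot x_i = n = \sum_j p_j$ you invoke does not hold in general; there is no Walras' law here. The paper instead uses an LP for the price update on the box $[0,n]^n$ (followed by subtracting the minimum, which guarantees some $p_j=0$ and hence Slater for the agents' LPs), and argues clearing by contradiction: if some good were under-allocated, another would be over-allocated with price $n$, forcing the agents to collectively overspend their budgets. Your proposed domain restriction $\sum_j p_j < n$ is not obviously compatible with this argument and risks excluding equilibria whose maximum price equals $n$.
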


\noindent The crucial ingredient that allows us to obtain proofs which are conceptually simpler and often more general is in the utility-maximizing optimization program of the agents, which appears in the corresponding proofs of existence. Generally speaking, given a set of prices, an agent computes a utility-maximizing allocation given a set of constraints on consumption and endowment. Then, market clearing is ensured by the fixed point condition on the prices. Using our OPT-gate, we can directly ``simulate'' the utility-maximization program in these proofs, which makes our membership results look very similar to the existence proofs themselves, which rely on some fixed point correspondence theorem, most often Kakutani's fixed point theorem (\cref{thm:kakutani}). Note however that proving \FIXP-membership using the OPT-gate implicitly also yields a proof of existence based on Brouwer's fixed point theorem (\cref{thm:brouwer}).

\subsection{Arrow-Debreu Markets} \label{sec:markets-arrow-debreu}

The fundamental model of competitive markets was established in the pioneering work of \citet{arrow1954existence}, formalizing some ideas of \citet{walras1874elements}. Arrow and Debreu showed that under relatively mild assumptions, every market has an equilibrium, i.e., a set of prices and allocations such that supply equals demand, and every market participant is maximally satisfied with their assigned commodities at the given prices. We present the formal setting of the Arrow-Debreu market below, following closely their original paper. When $<$ and $\leq$ are used for vectors, they are applied componentwise.

\paragraph{Arrow-Debreu Market.} An Arrow-Debreu market consists of
\begin{itemize}[leftmargin=*]
    \item[-] A set $L$ of \emph{commodities} (i.e., divisible resources or items); let $\ell = |L|$.
    \item[-] A set $N$ of \emph{production units} or \emph{firms}; let $n = |N|$.\medskip
    
    For each firm $j \in N$, there is a set $Y_j$ of possible \emph{production plans}. An element $y_j \in Y_j$ is a vector in $\RR^\ell$, i.e., $y_j = (y_{j,1}, y_{j,2}, \ldots, y_{j,\ell})$, where $y_{j,h}$ is the output (i.e., the produced amount) of commodity $h$ according to plan $y_j$. The production can also be negative, where $y_{j,h}<0$ is interpreted as the commodity $h$ being an input to the production plan, rather than an output. Let $Y=\sum_{j=1}^n Y_j$.
    \medskip
    
    The production sets $Y_j$ satisfy the following assumptions:
    \begin{assumptions}[label={I.\alph*.},leftmargin=30pt]
        \item For all $j \in N$, $Y_j$ is a closed convex subset of $\RR^\ell$ containing $0$, \hfill \emph{(non-increasing returns to scale)}\label{prod:1}
        \item $Y \cap \RRnn^\ell  = \{0\}$, \hfill \emph{(no output without input)}\label{prod:2}
        \item $Y \cap (-Y) = \{0\}$. \hfill \emph{(irreversible production)}\label{prod:3}
    \end{assumptions}
    \item[-] A set $M$ of \emph{consumption units} or \emph{consumers}; let $|M| = m$. \medskip
    
    For each consumer $i \in M$, there is a set $X_i$ of possible \emph{consumption vectors}, indicating the set of resources that a consumer could possibly consume, if there were no budgetary constraints. We use $x_i = (x_{i,1}, x_{i,2}, \ldots, x_{i,\ell})$ to denote the consumption of agent $i$, where $x_{i,h}$ denotes the consumer's consumption of commodity $h$. The consumption can also be negative, where $x_{i,h} < 0$ is interpreted as a labor service provided by the consumer to the market. The consumption sets $X_i$ satisfy:
    \begin{assumptions}[label={II.},leftmargin=30pt]
        \item For all $i \in M$, $X_i$ is a closed, convex subset of $\RR^\ell$ which is bounded from below, i.e., there is a vector $\xi_i$ such that $\xi_i \leq x_i$ for all $x_i \in X_i$.\label{cons-lower}
    \end{assumptions}

    Each consumer has a \emph{utility function} $u_i \colon X_i \to \RR$, that satisfies the following properties:
    \begin{assumptions}[label={III.\alph*.},leftmargin=30pt]
        \item $u_i(x_i)$ is a continuous function on $X_i$, \hfill \emph{(continuity)}\label{util1}
        \item For any $x_i \in X_i$, there is an $x_i' \in X_i$ such that $u_i(x_i') > u_i (x_i)$, \hfill \emph{(non-satiation)}\label{util2}
        \item If $u_i(x_i) > u_i(x_i')$ and $0 < t < 1$, then $u_i\left(tx_i + (1-t)x_i'\right) > u_i(x_i')$.  \hfill \emph{(convexity of preferences)}\label{util3}
    \end{assumptions}
    
    Each consumer $i \in M$ is also endowed with a vector $\zeta_i = (\zeta_{1,i}, \zeta_{2,i}, \ldots, \zeta_{\ell,i}) \in \RR^\ell$ of initial holdings of the different commodities, which we will refer to as the \emph{endowment} of consumer $i$. The following assumption is made:
    \begin{assumptions}[label={IV.a.},leftmargin=30pt]
        \item For all $i \in M$, there exists some $x_i \in X_i$ with $x_i < \zeta_i$.\label{endowment}
    \end{assumptions}    
     
    Additionally, consumer $i$ has a \emph{share} $\alpha_{ij}$ of the profit of the $j$-th production unit, for each $j \in N$. Shares are non-negative and the profits are entirely shared among the consumers, i.e.,
    \begin{assumptions}[label={IV.b.},leftmargin=30pt]
        \item For all $i \in M$ and $j \in N$, $\alpha_{ij} \geq 0$; for all $j \in N$, $\sum_{i=1}^m \alpha_{ij}=1$.\label{shares}
    \end{assumptions}  
\end{itemize}

\noindent Before we proceed, we provide a brief discussion on some of the assumptions of the model above.\footnote{The reader is referred to the original paper by \citet{arrow1954existence} for an extensive discussion on the assumptions of the model.} For the production plans, \cref{prod:1} corresponds to non-increasing returns to scale, i.e., when all production variables are increased by an amount, the output is increased by an ``at-most-proportional'' amount. \cref{prod:2} states that it is not possible for the production to have output without having some input. Finally, \cref{prod:3} asserts that it is not possible to have two production vectors that exactly ``cancel'' each other, since some labor is necessary for production, and labor cannot be produced. For the consumers' utilities, \cref{util1} is a standard assumption in consumer and market theory. \cref{util2} asserts that there is no ``saturation point'', i.e., a consumption vector that the consumer prefers to all others. A related sufficient condition would be that the utility function is \emph{strictly monotone}, although non-satiation is a weaker condition which is still sufficient for the existence of a market equilibrium by \cref{thm:arrow-debreu-original}. \cref{util3} is a standard assumption on the convexity of the indifference curves. \cref{endowment} asserts that each consumer could feasibly consume from their endowment and still have leftover stock to trade in the market. Finally, \cref{shares} simply states that the profits are non-negative and are shared entirely among the consumers, i.e., there are no profits supplied to outside entities. \\

\noindent At the center of competitive markets is the concept of \emph{competitive} or \emph{market equilibrium}, i.e., a set of prices, production plans and consumption vectors such that supply equals demand and all agents maximize their individual utility at the given set of prices. We provide the formal definition below. 

\begin{definition}[Arrow-Debreu Market Equilibrium]
A tuple of vectors $(x_1^{*}, x_2^{*}, \ldots, x_m^{*}, y_1^{*}, y_2^{*}, \ldots y_n^{*}, p^{*})$ is a (Arrow-Debreu) market equilibrium if the following conditions are satisfied:
\begin{conditions}[label={\arabic*.},ref={\arabic*}]
    \item $y_j^{*} \in \arg\max_{y_j \in Y_j} p^{*} \cdot y_j$, for all $j \in N$, \hspace{3cm} \hfill \emph{(firm profit maximization)}\label{market_eq-item1}
    \item $x_i^{*} \in \arg\max_{x_i \in S_i} u_i(x_i)$, where \hfill \emph{(consumer utility maximization)}\\
    $S_i = \{x_i | x_i \in X_i, p^{*} \cdot x_i \leq p^{*} \cdot \zeta_i + \sum_{j=1}^n \alpha_{ij}\, p^{*} \cdot y_j^{*}\}$, for all $i \in M$,\label{market_eq-item2}
    \item $p^{*} \in P=\{p | p \in \RR^\ell, p \geq 0, \sum_{h=1}^\ell p_h =1\}$,  \hspace{3cm} \hfill \emph{(non-negative, normalized prices)}\label{market_eq-item3}
    \item $z^{*} \leq 0$ and $p^{*} \cdot z^{*} = 0$, where $z^{*}  = \sum_{i=1}^m x_i^{*} - \sum_{i=1}^n y_j^{*} - \sum_{i=1}^m \zeta_i$. \hfill \emph{(market clearance)}\label{market_eq-item4}
\end{conditions}
\end{definition}

\noindent Again, we briefly discuss the four conditions of the Arrow-Debreu market equilibrium definition above. \cref{market_eq-item1} requires that at the given set of prices, the firms maximize their profits within their production plans. \cref{market_eq-item2} requires that at the given set of prices, the consumers maximize their utility within their consumption vectors that satisfy their budget constraint. \cref{market_eq-item3} stipulates that the prices are non-negative, and can be normalized to sum to $1$ without loss of generality. Finally, \cref{market_eq-item4} is the market clearing, the ``supply equals demand'' condition. The condition states that (a) the total consumption of each commodity minus the sum of the total production and the consumers' endowment of that commodity has to be non-positive (i.e., consumers cannot consume more than what is available) and (b) this difference is actually zero (i.e., consumers consume exactly what is available), for commodities for which the price is non-zero. Commodities which are priced at zero are allowed to not be entirely consumed. We remark that in the related literature (e.g., see \citep[Chapter 10]{mas1995microeconomic}), \cref{market_eq-item4} is often replaced by the strong condition of ``supply equals demand'' for all commodities, regardless of the price, namely
\[
\sum_{i=1}^m x_{i,h}^{*} = \sum_{i=1}^n y_{j,h}^{*} + \sum_{i=1}^m \zeta_{i,h}, \ \ \ \ \text{for } h=1,\ldots,\ell.
\]
Then, in order to guarantee the existence of a market equilibrium as in \cref{thm:arrow-debreu-original}, the extra assumption of \emph{free disposability} needs to be added to the list of \cref{prod:1,prod:2,prod:3} for the production sets (e.g., see \citep[Chapter 5]{mas1995microeconomic}), namely:
\begin{assumptions}[label={I.d.},leftmargin=30pt]
        \item $Y - \RRnn^\ell \subseteq Y$. \hfill \emph{(free disposability)}\label{free-dispo}
\end{assumptions}  
The property asserts that it is always possible to absorb additional inputs without producing any extra output (i.e., the inputs can be ``freely disposed''). Our approach still works even if one requires this stronger clearing condition, see \cref{rem:market-clearing}. \\

\noindent \citet{arrow1954existence} proved that for any market as defined above, a market equilibrium always exists. 

\begin{theorem}[\citep{arrow1954existence}]\label{thm:arrow-debreu-original}
Every instance of the market above admits a market equilibrium. 
\end{theorem}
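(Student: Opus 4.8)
The plan is to prove existence by constructing a single continuous self-map whose fixed points are exactly the price--allocation parts of market equilibria, and then applying Brouwer's fixed point theorem (\cref{thm:brouwer}); this is the same construction that drives the \FIXP-membership result (\cref{thm:Arrow-Debreu}), with the consumers' utility-maximization and the firms' profit-maximization subprograms delegated to the OPT-gate of \cref{thm:OPT-gate-convex} when the utilities are concave (for the merely quasi-concave preferences of \cref{util3} one substitutes Berge's maximum theorem, \cref{thm:berge}, for the OPT-gate and keeps the rest of the scheme unchanged).

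First I would truncate the market. The sets $Y_j$ and $X_i$ need not be bounded, so I intersect each with a large cube $[-R,R]^\ell$. Assumptions \cref{prod:1,prod:2,prod:3} together with \cref{cons-lower} imply that the set of feasible allocations --- tuples $(x_1,\dots,x_m,y_1,\dots,y_n)$ with $x_i \in X_i$, $y_j \in Y_j$, and $\sum_i x_i \le \sum_j y_j + \sum_i \zeta_i$ --- is compact, so I fix $R$ strictly larger than the sup-norm of every such tuple and of every $\zeta_i$. Write $\bar X_i = X_i \cap [-R,R]^\ell$ and $\bar Y_j = Y_j \cap [-R,R]^\ell$: these are nonempty, compact, convex, with $0 \in \bar Y_j$ by \cref{prod:1}, and \cref{endowment} still furnishes $x_i \in \bar X_i$ with $x_i < \zeta_i$. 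On the domain $D = P \times \prod_i \bar X_i \times \prod_j \bar Y_j$, where $P = \{p \in \RR^\ell : p \geq 0,\ \sum_{h=1}^\ell p_h = 1\}$, I define $F$ as follows. Given $(p,x,y)$: (a) for each firm $j$, use the OPT-gate to get $\bar y_j \in \argmax_{y_j \in \bar Y_j} p \cdot y_j$ (a linear objective over a convex body inside a box, with no equality constraints, so the explicit Slater condition is immediate); (b) set $w_i = p \cdot \zeta_i + \sum_j \alpha_{ij}\,(p \cdot \bar y_j)$; (c) for each consumer $i$, use the OPT-gate to get $\bar x_i \in \argmax\{u_i(x_i') : x_i' \in \bar X_i,\ p \cdot x_i' \le w_i\}$; (d) form $z = \sum_i \bar x_i - \sum_j \bar y_j - \sum_i \zeta_i$ and output the renormalization $\bar p$ of $p_h + \max(0,z_h)$. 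All steps are continuous --- the OPT-gate outputs are pseudogates, which is just as good for \FIXP-membership, and for the bare existence statement one works directly with the continuous functions underlying the pseudogates on the relevant fixed-point locus --- so $F \colon D \to D$ is continuous and Brouwer gives a fixed point $(p^*,x^*,y^*)$.

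It remains to check that $(p^*,x^*,y^*)$ is a market equilibrium. Conditions~1 and~3 hold by construction. For condition~2 the OPT-gate must have returned a genuine maximizer of the budget-constrained program, which needs its explicit Slater condition; this is exactly where \cref{endowment} enters, since it supplies $x_i \in \bar X_i$ with $x_i < \zeta_i$, whence $p^* \cdot x_i < p^* \cdot \zeta_i \le w_i$ as $p^* \in P$ is nonzero, a strictly budget-feasible interior point. Walras' law $p^* \cdot z^* = 0$ follows by summing the consumers' budget constraints --- each binding, by non-satiation \cref{util2} and convexity of preferences \cref{util3} --- and using $\sum_i \alpha_{ij} = 1$ from \cref{shares}. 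The price-update fixed-point equation $\bar p^* = p^*$ then forces $z^* \le 0$: if $z_h^* > 0$ with $p_h^* > 0$ the update would strictly raise the weight of commodity $h$, a contradiction; and $z_h^* > 0$ with $p_h^* = 0$ would make $p^* \cdot z^* > 0$, contradicting Walras' law. This yields condition~4, and since $z^* \le 0$ the allocation $(x^*,y^*)$ is feasible in the above sense and so lies strictly inside $[-R,R]^\ell$, meaning the truncated maximizations agree with the untruncated ones.

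The step I expect to be the main obstacle is the interaction of the truncation with zero-priced commodities: one must choose $R$ so that budget-optimal bundles sit in the interior of the box (so that non-satiation, \cref{util2}, still forces each budget constraint to bind and the truncated consumer program is equivalent to the original one), and one must keep the consumer's program both feasible and explicit-Slater on the commodities priced at zero so that the OPT-gate returns a true optimum there while Walras' law still pins down market clearing on those commodities. The compactness-of-feasible-allocations argument behind the truncation and the verification of Walras' law are otherwise routine, but they must invoke \cref{prod:1,prod:2,prod:3,cons-lower,util2,util3,shares,endowment} at the appropriate points.
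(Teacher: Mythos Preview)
The paper does not give its own proof of \cref{thm:arrow-debreu-original}; it is stated as a cited result from \citet{arrow1954existence}. The closest thing in the paper is the \FIXP-membership proof of \cref{thm:Arrow-Debreu} in Section~6.1.2, which implicitly yields an existence proof via Brouwer. Your construction is essentially that proof, with one substantive difference: for the price component the paper does \emph{not} use the Nash-style renormalization $\bar p_h \propto p_h + \max(0,z_h)$; instead it runs a third OPT-gate, namely the LP $\max_{v \in \Delta^{\ell-1}} v \cdot z$, and sets $\bar p$ to its optimum. At a fixed point this gives $p^*\cdot z^* = \max_h z_h^*$, so the weak Walras inequality $p^*\cdot z^*\le 0$ (obtained by summing the budget constraints) immediately forces $z^*\le 0$ without first needing the budget constraints to bind. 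Your update also works, but the paper's choice makes the logical order cleaner and keeps every subproblem in the same OPT-gate framework. A second minor difference: the paper uses the \emph{input} $y_j$ (not the firm OPT-gate output $\bar y_j$) in the consumer's budget, and takes the ambient domain $([-K,K]^\ell)^{m+n}\times[0,1]^\ell$ rather than $P\times\prod_i\bar X_i\times\prod_j\bar Y_j$; this matters because the pseudogate outputs are only guaranteed to land in $\bar X_i,\bar Y_j$ at a fixed point, so your $F$ as written need not be a self-map of your $D$.

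The ordering issue you flag at the end is real in your write-up: you invoke non-satiation to get tight budgets (hence $p^*\cdot z^*=0$) \emph{before} establishing $z^*\le 0$, but the non-satiation step needs $x_i^*$ strictly inside the box, which you only know once $z^*\le 0$. The fix is exactly the paper's order: first sum the (possibly slack) budget constraints to get $p^*\cdot z^*\le 0$; then use the price fixed-point condition to deduce $z^*\le 0$ (your renormalization argument goes through verbatim with $\le 0$ in place of $=0$); then conclude interiority, then tight budgets, then $p^*\cdot z^*=0$.
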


\noindent One important observation is that \cref{market_eq-item1,market_eq-item2} are essentially convex programs whereas \cref{market_eq-item3,market_eq-item4} are essentially sets of constraints. Using our OPT-gate from \cref{sec:kkm}, we can ``substitute'' these optimization programs by pseudogates, effectively transforming an existence proof into a FIXP-membership proof. Indeed, the proof of \cref{thm:arrow-debreu-original} as presented in \citep{arrow1954existence} uses these conditions essentially as optimization programs to construct a fixed point correspondence for the consumption, the production and the prices. Then the existence of an equilibrium is proven via a fixed point theorem due to \citet{debreu1952social}. Subsequent proofs used Kakutani's fixed point theorem (\cref{thm:kakutani}) to obtain the same existence result. Our FIXP-membership result essentially devises a proof via Brouwer's fixed point theorem instead, similar in that regard to the proof of \citet{geanakoplos2003nash}. 

Our result in \cref{thm:Arrow-Debreu} generalizes some membership results that were already presented in the literature, for markets with specific utility functions, production and consumption sets. \citet{SICOMP:EtessamiY10} proved the FIXP-membership of a setting where there are no explicit utilities, and the aggregate demand is a given function, rather than a correspondence which is typically the case in these markets.
\citet{garg2016dichotomies} provided a \FIXP-membership result for Arrow-Debreu markets with \emph{Piecewise Linear Concave} (PLC) utility functions, and PLC production sets. The authors consider straightforward consumption sets, namely that the consumption of an agent is non-negative. For the formal definition of PLC functions, we refer the reader to \cref{app:PDC}.

\begin{remark}\label{rem:CES}
Besides the \FIXP-membership results mentioned above, \citet{ChenPY17-non-monotone-markets} proved a \FIXP-membership result for markets with CES utilities, only non-negativity constraints on consumption and no production. While these are a special case of the Arrow-Debreu market that we consider, strictly speaking, our result does not generalize theirs. This is exclusively due to technical reasons, which we highlight below.

As we explain in \cref{sec:arrow-debreu-computation} below, we assume that access to the utility functions is given to us via the supergradients of those functions; this is clearly necessary when dealing with general concave functions without any additional structure. In the case of explicit utility functions however, like PLC utilities or CES utilities, ideally we would like to be able to compute the supergradients given the utility functions, rather than assume access to them. For PLC utilities (and actually much more general utility functions) we can do that, and we show how in \cref{app:PDC}. In that sense, our \FIXP-membership result is a strict generalization of the corresponding membership result proven in \citep{garg2016dichotomies}. On the other hand, CES utilities are not superdifferentiable when some coordinate is $0$, and therefore we cannot claim the same for the results of \citet{ChenPY17-non-monotone-markets}. We believe that it is possible to adapt our approach in \cref{sec:markets-arrow-debreu} to capture the case of CES utilities, but we leave that for future work.
\end{remark}

\subsubsection{The computational Arrow-Debreu market problem}\label{sec:arrow-debreu-computation}
In some of the applications that we have presented so far, deriving the corresponding computational problems from the existence theorems has been relatively straightforward, and thus not explicitly stated. In the market domain, due to the generality of the utility functions, as well as the production and consumption sets, some additional discussion is in order. 
\begin{itemize}[leftmargin=*]
    \item[-] For the utility functions $u_i$, we will assume that we are given a pseudogate computing the supergradients $\partial u_i$ (see \cref{sec:OPT-gate-pseudogate,sec:OPT-gate-convex}). This is in a sense necessary, since we are concerned with general concave utility functions, with no further particular structure. In the case of specific utility functions, such as the PLC utilities studied in  \citep{garg2016dichotomies}, we can easily construct pseudogates computing the supergradients. In \cref{app:PDC}, we explain how to compute the supergradients for a general class of utility functions that subsumes PLC utilities, which we refer to as \emph{Piecewise Differentiable Concave (PDC) functions}, rather than assuming that they are given to us as inputs. 
    \item[-] For the consumption and production sets, we will assume that they are given to us as sets of convex inequalities and linear equations that satisfy the (standard) Slater condition, see \cref{sec:OPT-gate-convex}. Note that here we do not need them to satisfy the explicit Slater condition of \cref{def:explicit-slater-convex}; this is because the consumption and production sets are part of the input to the problem, and therefore we can apply preprocessing to the corresponding constraints to eliminate any linear dependence. For the convex inequalities, we also assume that we are given pseudogates computing their subgradients, as explained in \cref{sec:OPT-gate-convex}. We further assume that the endowments $\zeta_i$, the shares $\alpha_{ij}$, and the lower bounds $\xi_i$ on the consumption are given in the input as rational numbers. 
    \item[-] For the production sets, the assumptions in the Arrow-Debreu model ensure that the set of all $(y_1,\dots,y_n) \in (\RR^\ell)^n$ that satisfy
	\begin{equation}\label{eq:market-allocation-bound}
	y_j \in Y_j,  \ \ \text{for } j = 1,\dots,n \ \ \ \ \text{ and } \ \ \ \ 
	\sum_{i=1}^m \xi_i - \sum_{j=1}^n y_j - \sum_{i=1}^m \zeta_i \leq 0 
	\end{equation}
is bounded. This is proved by \citet[pp. 276-277]{arrow1954existence} and, from a technical standpoint, one of the main purposes of the various assumptions on the production sets is to ensure that this property indeed holds. Note that any production plan $(y_1,\dots,y_n)$ that satisfies the market clearing condition must necessarily satisfy the inequality constraint above. Thus, in any market equilibrium, the production plan lies in this bounded set.

For the definition of the computational problem, we assume that the input to the problem contains an explicit bound $C$ on the set above (in the $\ell_\infty$-norm). This is needed to ensure that solutions can be bounded with some bound that has polynomial bit-complexity. We remark that in the case of the PLC production sets used in \citep{garg2016dichotomies}, the constraint $y \in Y_j$ corresponds to a set of linear equalities and inequalities. As a result, the bound $C$ does not need to be provided in the input in that case, since it can easily be obtained by solving an LP.
\end{itemize}

\subsubsection{Membership in FIXP -- the proof of \cref{thm:Arrow-Debreu}}

\paragraph*{Bounding the domain of allocations.}
Consider any $(x_1,\dots,x_m) \in (\RR^\ell)^m$ and $(y_1,\dots,y_n) \in (\RR^\ell)^n$ that satisfy $x_i \in X_i$ for $i=1,\dots,m$, $y_j \in Y_j$ for $j=1,\dots,n$, and
\[z := \sum_{i=1}^m x_i - \sum_{j=1}^n y_j - \sum_{i=1}^m \zeta_i \leq 0.\] 
By \cref{cons-lower}, we have $\xi_i \leq x_i$, and as a result \eqref{eq:market-allocation-bound} is satisfied. It follows that $\|y_j\| \leq C$ for $j=1,\dots,n$, where $\|\cdot\|$ denotes the $\ell_\infty$-norm. Now, $z \leq 0$ also implies that for any $i \in [m]$
\[x_i \leq \sum_{j=1}^n y_j + \sum_{k=1}^m \zeta_k - \sum_{k \neq i}x_k \leq \sum_{j=1}^n y_j + \sum_{k=1}^m \zeta_k - \sum_{k \neq i}\xi_k\]
where we used $\xi_k \leq x_k$ again. Together with $\xi_i \leq x_i$, it follows that $\|x_i\| \leq nC + m \max_k \|\zeta_k\| + m \max_k \|\xi_k\| =: C'$.

We let $K := C'+ 1$. From the above, we have that if $(x_1,\dots,x_m,y_1,\dots,y_n)$ satisfies $x_i \in X_i$, $y_j \in Y_j$, and $z \leq 0$, then $x_i, y_j \in (-K,K)^\ell$ for all $i,j$. In particular, this must be satisfied at any equilibrium. Note that $K$ can be computed in polynomial time from the inputs to our problem.

\paragraph*{Construction of the circuit.}
We construct a circuit $F: D \to D$, where $D = ([-K,K]^\ell)^m \times ([-K,K]^\ell)^n \times [0,1]^\ell$. We let $(x_1,\dots,x_m,y_1,\dots,y_n,p)$ denote the input to $F$, and $(\overline{x}_1,\dots,\overline{x}_m,\overline{y}_1,\dots,\overline{y}_n,\overline{p})$ denote the output of $F$.

We set $\overline{y}_j$ as the output of the OPT-gate for the convex optimization problem:
\begin{equation}\label{eq:market-CP-firm}
\begin{aligned}
\mbox{maximize}\quad & p \cdot v\\
\mbox{subject to}\quad & v \in Y_j\\
& v \in [-K,K]^\ell
\end{aligned}
\end{equation}

We set $\overline{x}_i$ as the output of the OPT-gate for the convex optimization problem:
\begin{equation}\label{eq:market-CP-consumer}
\begin{aligned}
\mbox{maximize}\quad & u_i(v)\\
\mbox{subject to}\quad & v \in X_i\\
& p \cdot v \leq p \cdot \zeta_i + \sum_{j=1}^m \alpha_{ij} p \cdot y_j\\
& v \in [-K,K]^\ell
\end{aligned}
\end{equation}
The inequality constraint $p \cdot v \leq p \cdot \zeta_i + \sum_{j=1}^m \alpha_{ij} p \cdot y_j$ is called the \emph{budget constraint}.

Finally, we set $\overline{p}$ as the output of the OPT-gate for the LP:
\begin{equation}\label{eq:market-LP-prices}
\begin{aligned}
\mbox{maximize}\quad & v \cdot z\\
\mbox{subject to}\quad & \sum_{h=1}^\ell v_h = 1\\
& v \in [0,1]^\ell
\end{aligned}
\end{equation}
where $z := \sum_{i=1}^m x_i - \sum_{j=1}^n y_j - \sum_{i=1}^m \zeta_i$.

\paragraph*{Fixed points.}
Consider any fixed point of $F$, i.e., $(x_1,\dots,x_m,y_1,\dots,y_n,p) \in D$ such that
\[(x_1,\dots,x_m,y_1,\dots,y_n,p) = (\overline{x}_1,\dots,\overline{x}_m,\overline{y}_1,\dots,\overline{y}_n,\overline{p}).\]
We begin by showing that the explicit Slater condition holds at $(x_1,\dots,x_m,y_1,\dots,y_n,p)$ for all three optimization problems, and thus they are solved correctly by the OPT-gate. Then, we show that $(x_1,\dots,x_m,y_1,\dots,y_n,p)$ is indeed a market equilibrium, as desired.

\paragraph*{Explicit Slater condition.}
Clearly, the constraints of \eqref{eq:market-LP-prices} satisfy the explicit Slater condition. As a result, the OPT-gate functions properly and $p = \overline{p}$ is indeed an optimal solution of this LP.

The constraints of the convex optimization problem \eqref{eq:market-CP-firm} also satisfy the explicit Slater condition. To see this, note that, by assumption, the constraints describing $Y_j$ satisfy the explicit Slater condition, i.e., there exists $v \in Y_j$ that satisfies the inequality constraints of $Y_j$ strictly. Since $0 \in Y_j \cap (-K,K)^\ell$ (\cref{prod:1}) and by convexity of the inequality constraints of $Y_j$, there must exist $v' \in Y_j \cap (-K,K)^\ell$ that satisfies the inequality constraints of $Y_j$ strictly as well. As a result, the OPT-gate functions properly and $y_j = \overline{y}_j$ is an optimal solution of \eqref{eq:market-CP-firm}. In particular, $y_j \in Y_j$.

By \cref{endowment}, there exists $v \in X_i$ with $v < \zeta_i$ (componentwise). Since $\xi_i \leq v$ (\cref{cons-lower}), it also follows that $v \in X_i \cap (-K,K)^\ell$, by construction of $K$. Since $y_j$ is an optimal solution of \eqref{eq:market-CP-firm}, we have $p \cdot y_j \geq 0$, because $0 \in Y_j \cap [-K,K]^\ell$ is a feasible point of \eqref{eq:market-CP-firm}. Thus, $\sum_{j=1}^m \alpha_{ij} p \cdot y_j \geq 0$, since $\alpha_{ij} \geq 0$. Now, since $v < \zeta_i$ and since there exists $h \in [\ell]$ with $p_h > 0$ (by optimality of $p$ for \eqref{eq:market-LP-prices}), it follows that $p \cdot v < p \cdot \zeta_i + \sum_{j=1}^m \alpha_{ij} p \cdot y_j$, i.e., $v \in X_i \cap (-K,K)^\ell$ strictly satisfies the budget constraint in \eqref{eq:market-CP-consumer}.

Recall that the constraints describing $X_i$ are assumed to satisfy the explicit Slater condition, i.e., there exists $v' \in X_i$ that satisfies the inequality constraints of $X_i$ strictly. By the convexity of the inequality constraints of $X_i$, it follows that there exists $v''$ on the segment $[v,v']$ such that $v'' \in X_i \cap (-K,K)^\ell$, $v''$ strictly satisfies the budget constraint in \eqref{eq:market-CP-consumer}, and $v''$ strictly satisfies the inequality constraints of $X_i$. This means that the constraints of \eqref{eq:market-CP-consumer} satisfy the explicit Slater condition and thus the OPT gate for \eqref{eq:market-CP-consumer} works correctly. As a result, $x_i = \overline{x}_i$ is an optimal solution of \eqref{eq:market-CP-consumer}. In particular, $x_i \in X_i$ and $x_i$ satisfies the budget constraint for consumer $i$.

\paragraph*{Market equilibrium.}
By summing up the budget constraints satisfied by each $x_i$, and using the fact that $\sum_{i=1}^m \alpha_{ij} = 1$ (\cref{shares}) we immediately obtain that $p \cdot z \leq 0$. Now, since $p$ is an optimal solution of \eqref{eq:market-LP-prices}, it must be that $z \leq 0$. But then, by construction of $K$, it follows that $x_i,y_j \in (-K,K)^\ell$ for all $i$ and $j$. As a result, $y_j$ is also an optimal solution of \eqref{eq:market-CP-firm} without the constraint $v \in [-K,K]^\ell$. Similarly, by \emph{convexity of preferences} (\cref{util3}), which in particular holds for concave $u_i$, $x_i$ is also an optimal solution of \eqref{eq:market-CP-consumer} without the constraint $v \in [-K,K]^\ell$. We have thus shown that \cref{market_eq-item1,market_eq-item2} hold.

Recall that we have $z \leq 0$ and $p \cdot z \leq 0$. In order to prove that \cref{market_eq-item4} holds, it remains now to show that $p \cdot z = 0$. Clearly, if the budget constraint for each consumer $i$ is tight, then indeed $p \cdot z = 0$. Assume that for some $i$, the budget constraint is not tight, i.e., $p \cdot x_i < p \cdot \zeta_i + \sum_{j=1}^m \alpha_{ij} p \cdot y_j$. By \emph{non-satiation} (\cref{util2}), there exists $x_i' \in X_i$ with $u_i(x_i') > u_i(x_i)$. By \emph{convexity of preferences} (\cref{util3}), there exists $x_i'' \in X_i \cap (-K,K)^\ell$ with $u_i(x_i'') > x_i$ and such that $x_i''$ also satisfies the budget constraint. This is a contradiction to the optimality of $x_i$ for \eqref{eq:market-CP-consumer}. Thus, \cref{market_eq-item4} also holds.

Finally, note that \cref{market_eq-item3} trivially holds, since $p$ is a feasible point for \eqref{eq:market-LP-prices}.
It follows that any fixed point $(x_1,\dots,x_m,y_1,\dots,y_n,p)$ of $F$ is indeed a market equilibrium.

\begin{remark}\label{rem:market-clearing}
If we also assume \emph{free disposability} (\cref{free-dispo}), then there exists a market equilbrium that also satisfies $z=0$, i.e., supply equals demand for all commodities, even those with zero price. We briefly sketch how $F$ can be modified to yield \FIXP-membership for this problem too.

Instead of outputting $(\overline{y}_1,\dots,\overline{y}_n)$, $F$ outputs $(\overline{y}_1',\dots,\overline{y}_n')$, which is set as the output of the OPT-gate for the convex feasibility problem:
\begin{equation*}
\begin{aligned}
\mbox{maximize}\quad & 0\\
\mbox{subject to}\quad & \sum_{j=1}^n v_j = \sum_{i=1}^m x_i - \sum_{i=1}^m \zeta_i\\
& p \cdot v_j = p \cdot \overline{y}_j, \forall j\\
& v_j \in Y_j, \forall j\\
& v_j \in [-K,K]^\ell, \forall j
\end{aligned}
\end{equation*}
Furthermore, in \eqref{eq:market-CP-consumer} and \eqref{eq:market-LP-prices}, we replace $y_j$ by $\overline{y}_j$, including in $z$.

Using the same arguments as above, it follows that $p=\overline{p}$ is an optimal solution of \eqref{eq:market-LP-prices}, $\overline{y}_j$ is an optimal solution of \eqref{eq:market-CP-firm}, and $x_i=\overline{x}_i$ is an optimal solution of \eqref{eq:market-CP-consumer}. As before, this implies that $z \leq 0$ and $p \cdot z = 0$.

Recall that we have redefined $z := \sum_{i=1}^m x_i - \sum_{j=1}^n \overline{y}_j - \sum_{i=1}^m \zeta_i$.
By \emph{free disposability} (\cref{free-dispo}), and since $z \leq 0$, it follows that $w := \sum_{j=1}^n \overline{y}_j + z \in Y$. This means that we can write $w = \sum_{j=1}^n w_j$ where $w_j \in Y_j$, and it holds that $z' := \sum_{i=1}^m x_i - \sum_{j=1}^n w_j - \sum_{i=1}^m \zeta_i = 0$. In particular, $w_j \in (-K,K)^\ell$ by construction of $K$. Furthermore,
\[\sum_{j=1}^n p \cdot w_j = p \cdot w = \sum_{j=1}^n p \cdot \overline{y}_j + p \cdot z = \sum_{j=1}^n p \cdot \overline{y}_j\]
since $p \cdot z = 0$. This implies that $p \cdot w_j = p \cdot \overline{y}_j$ for all $j$, because $\overline{y}_j$ is an optimal solution of \eqref{eq:market-CP-firm}, and $w_j$ is feasible for \eqref{eq:market-CP-firm}. As a result, $(w_1,\dots,w_n)$ is feasible for the feasibility problem above. In particular, the feasible region is nonempty, and thus the OPT-gate ensures that $(\overline{y}_1',\dots,\overline{y}_n')$ is indeed feasible for the feasibility problem.

Since $x_i$ satisfies the budget constraint with $\overline{y}_j$, it also satisfies it with $w_j$ instead. It follows that $(x_1,\dots,x_m,y_1,\dots,y_n,p)$ is a market equilibrium where supply equals demand for all commodities, even those with zero price.
\end{remark}

\subsection{The Hylland-Zeckhauser pseudomarket mechanism} \label{sec:markets-hz}

In 1979, \citet{hylland1979efficient} introduced the \emph{random assignment
problem} (also known as the randomized \emph{house allocation problem}, e.g., see \citep{bogomolnaia2001new,abdulkadirouglu1998random}). In this problem, the goal is to assign a set of 
indivisible goods $G=\{1,\dots, n\}$ 
to a set of agents $A=\{1,\dots, n\}$, who have preferences over the goods. 
These preferences are expressed via cardinal utilities values, i.e., every agent $i$ has a 
utility vector $u_i$, where $u_{ij}$ denotes the utility of 
agent $i$ for good $j$. 

Together with the introduction of the problem, \citet{hylland1979efficient} also proposed a mechanism
for coming up with an allocation that satisfies several desirable properties, namely \emph{ex-ante envy-freeness} and \emph{ex-ante Pareto efficiency}.
The mechanism works  by assigning every good a price $p_j\geq 0$ and every
agent $i$, endowed with one unit of artificial currency, buys \emph{probability shares} $x_i = (x_{i1},\dots, x_{in})$ 
of the goods in a manner that maximizes the agent's expected utility 
$\sum_{j=1}^{n}u_{ij}x_{ij}$ subject to a budget constraint $\sum_{j=1}^{n}p_j x_{ij}\leq 1$ and a feasibility allocation constraint $\sum_{j=1}^{n}x_{ij}= 1$. 
We require that every good is entirely assigned to the agents, that is 
$\sum_{i=1}^{n}x_{ij}=1$ for every $j$, and if this is not the 
case, the mechanism reacts by adjusting the prices. 
A set of prices and an allocation for which 
all the items are entirely allocated (and the artificial budgets are exhausted) is an equilibrium point of the mechanism, which we refer to as a \emph{HZ-equilibrium}.

\begin{definition}[HZ-equilibrium]
A pair $(p,x)$ of a price vector $p$ and an allocation matrix $x$ form a \emph{HZ-equilibrium} for the pseudomarket described above if:
\begin{enumerate}[label={\roman*.}]
\item For every good $j\in G\colon\sum_{i=1}^{n}x_{ij} = 1$.
\item For every agent $i$, the allocation $x_i$ maximizes 
$\sum_{j=1}^n u_{ij}x_{ij}$ subject to the constraints 
$\sum_{j=1}^{n}p_j x_{ij}\leq 1$ and $\sum_{j=1}^{n}x_{ij}= 1$. 
Also, if there are multiple allocations for $i$ that satisfy this, 
then $x_i$ must be the cheapest such allocation.
\end{enumerate}
\end{definition}

From the definition above, it is clear why 
the Hylland-Zeckhauser mechanism is often referred to as a \emph{pseudomarket}. The HZ-equilibrium clearly resembles the \emph{Competitive Equilibrium from Equal Incomes} \citep{foley1966resource,varian1973equity}, except for the allocation constraints which restrict a buyer's allocation to be a total of one unit. This constraint turns out to be rather crucial, as computing a CEEI can be done in polynomial time, whereas the complexity of computing exact HZ-equilibria is still an open problem. 

The reason for requiring that the $x_i$ are the cheapest 
utility-maximizing allocations is that this ensures that a 
HZ-equilibrium $(p,x)$ is ex-ante Pareto optimal. We note 
that $p_j\leq n$ for all $j$, because of the budget 
constraints. Using Kakutani's fixed point theorem, Hylland 
and Zeckhauser proved the following result.

\begin{theorem}[\citet{hylland1979efficient}]
Every instance of the pseudomarket above admits a HZ-equilibrium 
$(p,x)$. Also, $p$ may be chosen such that $p_j = 0$ for some $j$. 
\end{theorem}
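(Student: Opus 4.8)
The plan is to follow our general recipe: build an algebraic circuit $F\colon D\to D$ from standard gates together with instances of the OPT-gate of \cref{thm:OPT-gate-LP}, whose fixed points yield HZ-equilibria, and then invoke Brouwer's fixed point theorem (\cref{thm:brouwer}); since $F$ is an honest \FIXP-circuit, this simultaneously gives \cref{thm:HZ}. I would take $D=[0,n]^n\times(\Delta^{n-1})^n$ (plus the auxiliary coordinates introduced by the OPT-gates), with the first factor holding a price vector $p$ and the $i$-th copy of $\Delta^{n-1}$ holding agent $i$'s allocation $x_i$. Restricting prices to $[0,n]^n$ loses nothing: in any equilibrium some agent holds at least a $1/n$ share of a fully-allocated good and pays at most her unit budget for it, so every price is at most $n$.

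On input $(p,x_1,\dots,x_n)$ the circuit does three things. First, for each agent $i$ it uses the OPT-gate to produce $\overline x_i$, the cheapest utility-maximizing affordable allocation at prices $p$: the constraints $\sum_j v_j=1$, $v\ge 0$, $\sum_j p_jv_j\le 1$, $v\in[-1,1]^n$ with objective $\sum_j u_{ij}v_j$ describe the utility-maximization LP, and because we need the \emph{cheapest} maximizer this is done in two stages -- a first OPT-gate returns a utility-maximizer, from which the circuit reads off the optimal utility value $U_i$, and a second OPT-gate minimizes $\sum_j p_jv_j$ over the utility-maximizing face (the same constraints together with $\sum_j u_{ij}v_j=U_i$). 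Second, it forms the excess demands $z_j:=\sum_i\overline x_{ij}-1$. Third, it updates the prices through an OPT-gate that maximizes $v\cdot z$ over $v\in[0,n]^n$, producing a vector $q$ (this bids the price of every over-demanded good up to $n$), and renormalizes $\overline p_j:=q_j-\min_k q_k$, so that $\min_j\overline p_j=0$. The circuit outputs $(\overline p,\overline x_1,\dots,\overline x_n)$.

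The fixed-point analysis would go as follows. At a fixed point $\overline p=p$, so $\min_jp_j=0$; hence each agent's budget set $\{v\in\Delta^{n-1}:p\cdot v\le 1\}$ is nonempty, and a suitable mixture of the free good with the uniform distribution (with small enough mixing weight) is a relative-interior point of $\Delta^{n-1}$ satisfying the budget strictly, so the explicit Slater condition (\cref{def:explicit-slater-LP}) holds for the first-stage LP and $\overline x_i$ is genuinely utility-maximizing. Since $\overline x_i\in\Delta^{n-1}$, we get the key identity $\sum_jz_j=\sum_i\sum_j\overline x_{ij}-n=0$. If some good were over-demanded, $z_{j}>0$, then $\sum_jz_j=0$ forces $z_k<0$ for some $k$, so the price OPT-gate sets $q_k=0$, the renormalization shift $\min_k q_k$ is $0$, and $p_{j}=q_{j}=n$; but at price $n$ the budget constraint forces $\overline x_{ij}\le 1/n$ for every agent, hence $z_{j}\le 0$, a contradiction. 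So $z\le 0$, and with $\sum_jz_j=0$ this gives $z=0$: every good is fully allocated, which is condition~(i) of an HZ-equilibrium. Condition~(ii) holds because $\overline x_i$ is the cheapest utility-maximizer, the prices are nonnegative, and $\min_jp_j=0$ gives the ``$p_j=0$ for some $j$'' assertion; Brouwer's theorem then furnishes the desired fixed point and equilibrium.

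The main obstacle is precisely the \emph{cheapest}-allocation requirement. The second-stage program minimizes a linear functional over the utility-maximizing face $\{v\in\Delta^{n-1}:p\cdot v\le 1,\ u_i\cdot v=U_i\}$, which in general has no point in the relative interior of the simplex, so the explicit Slater condition fails there; the OPT-gate is then only guaranteed to return some \emph{feasible} point of the face -- a utility-maximizer, but not necessarily the cheapest one -- and replacing the lexicographic objective by a perturbation $\sum_j u_{ij}v_j-\varepsilon\sum_j p_jv_j$ does not rescue this, since no fixed $\varepsilon>0$ is uniformly small enough across all (possibly irrational) fixed-point price vectors. Getting around this -- e.g.\ via a more careful two-level formulation whose inner program provably satisfies Slater, or by first proving membership for the variant of HZ-equilibrium without the cheapness refinement and then upgrading -- is where the real work lies; by comparison the t\^atonnement fixed-point bookkeeping sketched above, including the interaction of the price OPT-gate with the renormalization step, is routine in spirit but still must be carried out carefully.
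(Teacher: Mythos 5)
Your overall architecture (prices in $[0,n]^n$, per-agent utility-maximization LPs solved by OPT-gates, a price-update LP maximizing $v\cdot z$ over the box followed by subtracting $\min_k q_k$, and the fixed-point bookkeeping showing $z\le 0$, $\sum_j z_j=0$, hence $z=0$) matches the paper's proof of \cref{thm:HZ} closely, and that part of your argument is sound. But the point you flag at the end is not a side issue to be "worked out later" -- it is exactly the step on which the whole proof turns, and as written your construction does not produce an HZ-equilibrium. Your second-stage program minimizes cost over the face $\{v\in\Delta^{n-1}: p\cdot v\le 1,\ u_i\cdot v=U_i\}$, whose equality constraint $u_i\cdot v=U_i$ typically pins the feasible set to a lower-dimensional face (often a vertex with zero coordinates), so the explicit Slater condition of \cref{def:explicit-slater-LP} fails; the OPT-gate then only guarantees a \emph{feasible} point, i.e.\ some utility-maximizer, not the cheapest one. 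Since the cheapness refinement is part of condition (ii) in the definition of an HZ-equilibrium, your fixed points need not be equilibria, and (as you yourself observe) a fixed $\varepsilon$-perturbed objective cannot repair this uniformly over all fixed-point prices.

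The paper resolves precisely this obstacle with a dummy-good construction that you are missing. It enlarges each agent's allocation to $\Delta^n$, introducing an artificial good $n+1$ with utility $u_{i,n+1}=u_{i\max}+\delta_i$ (strictly above every real good) and price $p_{n+1}=3n$, and replaces your equality $u_i\cdot v=U_i$ by the \emph{inequality} $\sum_{j=1}^{n+1}u_{ij}z_{ij}\ge \sum_{j=1}^n u_{ij}x_{ij}$ in the cost-minimization LP \eqref{HZ:LP3}, where $x_i$ (a not-necessarily-cheapest utility maximizer from LP \eqref{HZ:LP2}) is kept as a separate block of the domain. Because weight on the dummy good over-satisfies the utility constraint, a strictly feasible point exists and the explicit Slater condition holds, so the OPT-gate returns a true optimum; a separate lemma then shows that at any fixed point $y_{i,n+1}=0$ (shifting $\varepsilon$ mass from the dummy good and from any non-top real good onto a top real good preserves the utility constraint, by the choice of $\delta_i$, and strictly lowers cost since $p_{n+1}=3n>2n\ge p_j+p_k$), so the projection $\pi(y_i)$ is genuinely the cheapest utility-maximizing allocation over the real goods. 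Something of this kind -- turning the cheapness requirement into an LP that provably satisfies Slater -- is the real content of the membership/existence proof, and without it your proposal does not establish the theorem.
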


\subsubsection{Membership in FIXP -- the proof of \cref{thm:HZ}}

\citet{vazirani_et_al:LIPIcs.ITCS.2021.59} recently provided a proof of 
membership in $\FIXP$ for the problem of computing a 
HZ-equilibrium. 
Their proof defines a map of prices and allocations $F=(F_p,F_1,\dots, F_n)$. Each of
the coordinate functions is defined by a straight-line-program consisting of various steps.
For instance, in one of the steps of $F_i$ the allocation $x_i$ is altered in the case
that agent $i$ has not exhausted its budget and $x_{ij}>0$ for some good $j$ with
$u_{ij}<\max_{k}u_{ik}$. Using a potential function argument, they argue that if $(p,x)$
is a fixed point, then none of the steps of the coordinate functions of $F$ will alter $(p,x)$.
With this in mind, they then argue that if $(p,x)$ is not an equilibrium, then there must
be some step of $F$ that causes a change, implying that $(p,x)$ cannot be a fixed point. 
Next, we present a different proof which makes use 
of our general technique and which we believe to be conceptually simpler.

The domain of our map $F$ is 
$D=[0,n]^n\times ([0,1]^n)^n$. 
The basic idea of the map is that, on input $(p,x)$, we first compute an allocation $x_i'$ for each agent $i$ that 
maximizes the expected utility (but not necessarily the cheapest among those) given prices $p$. 
Then, using $x_i'$, we compute an allocation $y_i$ on $n+1$ goods, which achieves the same utility as $x_i'$, but also minimizes the cost. The extra ``dummy'' good is carefully constructed in such a way that it is never chosen (i.e., $y_{i,n+1} = 0$) at a fixed point of $F$. The function $F$ outputs $(\overline{p}, \overline{x})$, where $\overline{x}_i$ is equal to $\pi(y_i)$, namely the projection of $y_i$ onto the first $n$ coordinates, and $\overline{p}$ is obtained as the solution to a simple LP which ensures that overallocated goods get maximum price, and underallocated goods get price $0$. We note that the input $x$ is not used in the computation of the circuit; the only reason why it is included in the domain of $F$ is so that a fixed point $(p,x)$ of $F$ also includes an optimal allocation.

\paragraph{Description of the map.}

On input $(p,x)$, $F: D \to D$ first computes an allocation $x_i'$ as a solution to the Linear Program:
\begin{align}
\mbox{maximize}\quad & \sum_{j=1}^{n}u_{ij}z_{ij}\label{HZ:LP2}\\
\mbox{subject to}\quad & \sum_{j=1}^{n}z_{ij}=1 \nonumber\\
& \sum_{j=1}^{n}p_j z_{ij}\leq 1 \nonumber\\
& z_{ij}\geq 0,\forall j \nonumber
\end{align}
Before describing the Linear Program computing $y_i$, we
introduce the dummy good that we mentioned earlier. 
For every agent $i$, we let $\delta_i$
denote the difference in utility between its most preferred and 
second most preferred good (if the agent prefers all goods
equally, we let $\delta_i = 1$). The utility of agent $i$ for
the dummy good is now defined as $u_{i,n+1}=u_{i\max}+\delta_i$. 
Furthermore, we let $p_{n+1}=3n$. Now, $y_i$ is computed as a solution to the Linear Program:
\begin{align}
\mbox{minimize}\quad &\sum_{j=1}^{n+1}p_j z_{ij}\label{HZ:LP3}\\
\mbox{subject to}\quad  &\sum_{j=1}^{n+1}z_{ij}=1 \nonumber\\
&\sum_{j=1}^{n+1}{u_{ij} z_{ij}}\geq\sum_{j=1}^{n}u_{ij}x_{ij}' \nonumber\\
& z_{ij}\geq 0,\forall j \nonumber
\end{align}
The second output of $F$ is obtained by setting $\overline{x}_i := \pi(y_i)$, where $\pi(\cdot)$ denotes projection onto the first $n$ coordinates. Finally, we compute a solution $p^{*}$ to
the following Linear Program:
\begin{align}
\mbox{maximize}\quad & \sum_{j=1}^{n}q_j\big(\sum_{i=1}^{n}y_{ij}-1\big)\label{HZ:LP1}\\
\mbox{subject to}\quad & 0\leq q_j\leq n,\,\forall j\nonumber
\end{align}
and we put $\bar{p}=p^{*}-(\min_{j}p_j^{*})\cdot e$ where 
$e=(1,\dots,1).$

\paragraph{Fixed points.}

Suppose that $(p,x)$ is a fixed point of $F$, i.e., $(p,x) = (\bar{p},\bar{x})$. Since by construction $\bar{p}=p^{*}-(\min_{j}p_j^{*})\cdot e$,
there exists some $j$ with $p_j=\bar{p}_j=0$. As a result, there exists a feasible solution to LP~\eqref{HZ:LP2}
that satisfies the inequality constraints strictly: let $z_{ik}=\eps$ for all $k\neq j$ and $z_{ij}=1-(n-1)\eps$ 
for a sufficiently small $\eps > 0$. Also, the equality 
constraints are clearly linearly independent. This implies that the explicit Slater condition is satisfied for LP~\eqref{HZ:LP2}, and therefore $x_i'$ is indeed an optimal solution to this LP. 
Similarly, using that $u_{i,n+1}>u_{ij}$ for all $j\leq n$
we may show that there exists a feasible solution to LP~\eqref{HZ:LP3}
that satisfies all the inequalities strictly. Again, this means that the explicit Slater condition is satisfied for this LP and thus
$y_i$ is an optimal solution.

\begin{lemma}
For every $i$ we have $y_{i,n+1}=0$.
\end{lemma}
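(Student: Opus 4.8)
The plan is to argue by contradiction: suppose $y_{i,n+1}=t>0$ for some agent~$i$. Since $y_i$ is an optimal solution of the minimization LP~\eqref{HZ:LP3} (as just argued), it suffices to exhibit a feasible solution of~\eqref{HZ:LP3} with strictly smaller cost. First I would collect the facts to be used. At the fixed point, $p=\bar p = p^{*}-(\min_j p_j^{*})e$ with $p^{*}\in[0,n]^n$, so the real prices satisfy $0\le p_j\le n$ for $j\le n$, whereas $p_{n+1}=3n$; the dummy utility is $u_{i,n+1}=u_{i\max}+\delta_i$; and, since $\delta_i$ is the utility gap between agent~$i$'s most and second most preferred good (and $\delta_i=1>0$ in the degenerate all-equal case), every real good $j\le n$ that is not a best good of agent~$i$ satisfies $u_{ij}\le u_{i\max}-\delta_i$, while $u_{i,n+1}$ strictly dominates every real good. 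Finally, writing $U^{*}:=\sum_{j=1}^n u_{ij}x_{ij}$ for the right-hand side of the utility constraint in~\eqref{HZ:LP3}, we have $U^{*}\le u_{i\max}$ because $x_i$ lies in the simplex.

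I would then distinguish two cases according to where $y_i$ places its mass on the real goods. \emph{Case A: some real good $j^{\star}\le n$ with $y_{ij^{\star}}>0$ is not a best good of agent~$i$}, so $u_{ij^{\star}}\le u_{i\max}-\delta_i$. Pick a best good $k^{\star}$ and, for small $\eps>0$, move $\eps$ units of mass off the dummy good and $\eta:=\delta_i\eps/(u_{i\max}-u_{ij^{\star}})$ units off $j^{\star}$, adding all of it to $k^{\star}$. The total mass stays $1$; the utility $\sum_j u_{ij}z_{ij}$ is unchanged, since the gain $(u_{i\max}-u_{ij^{\star}})\eta=\delta_i\eps$ from the second transfer cancels the loss $\delta_i\eps$ from removing the dummy mass; and, using $u_{i\max}-u_{ij^{\star}}\ge\delta_i$, so that $\eta\le\eps$, the cost changes by $(p_{k^{\star}}-3n)\eps+(p_{k^{\star}}-p_{j^{\star}})\eta\le-2n\eps+n\eta<0$. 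For $\eps$ small enough the new vector is nonnegative, hence feasible for~\eqref{HZ:LP3}, contradicting the optimality of $y_i$. \emph{Case B: all of $y_i$'s mass on the real goods is on best goods of agent~$i$.} Then $\sum_j u_{ij}y_{ij}=u_{i\max}(1-t)+(u_{i\max}+\delta_i)t=u_{i\max}+\delta_i t>u_{i\max}\ge U^{*}$, so the utility constraint of~\eqref{HZ:LP3} is slack at $y_i$; moving any $\eps\in(0,t]$ of mass from the dummy good to a best good keeps the utility at least $u_{i\max}\ge U^{*}$ and changes the cost by $(p_{k^{\star}}-3n)\eps\le-2n\eps<0$, again contradicting optimality. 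Since both cases lead to a contradiction, $y_{i,n+1}=0$.

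The step requiring the most care — and the only genuinely non-routine one — is the choice of the transfer $\eta$ in Case~A: the whole argument hinges on the dummy utility having been set to exactly $u_{i\max}+\delta_i$, since this is what forces $\eta=\delta_i\eps/(u_{i\max}-u_{ij^{\star}})\le\eps$ and, together with $p_j\le n<3n=p_{n+1}$, makes the cost strictly decrease; with a larger dummy utility the compensating transfer could exceed $\eps$ and the cost comparison would break down. The remaining points — positivity of $\delta_i$ and of the denominator $u_{i\max}-u_{ij^{\star}}$, nonnegativity of the perturbed vectors for $\eps$ sufficiently small, the mass and utility bookkeeping, and the computation in Case~B — are routine.
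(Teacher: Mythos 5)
Your proof is correct and follows essentially the same route as the paper: a local exchange argument contradicting the optimality of $y_i$ in LP~\eqref{HZ:LP3}, exploiting exactly the same facts ($u_{i,n+1}=u_{i\max}+\delta_i$, $u_{ij}\le u_{i\max}-\delta_i$ for non-best goods, and $p_j\le n<3n=p_{n+1}$). The only difference is presentational: the paper first argues that some non-best real good must carry mass (dispatching your Case~B in one sentence) and then uses a uniform transfer of $\eps$ off that good and $\eps$ off the dummy onto a best good, whereas you balance the utility exactly via $\eta=\delta_i\eps/(u_{i\max}-u_{ij^\star})$ — both yield the same contradiction.
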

\begin{proof}
Suppose towards a contradiction that $y_{i,n+1}>0$. Because $y_i$
is a cost-minimizing allocation and $p_{n+1}>p_{j}$ for all 
$j\leq n$, there exists some $j$ with $y_{ij}>0$ and
$u_{ij}<u_{i\max}$, where $u_{i\max}=\max_{j}u_{ij}$. Otherwise, 
one could achieve a better allocation that consists of only 
one of the goods that has maximal utility for agent $i$.

Let $k$ denote a good of maximal utility for agent $i$, and 
pick $0<\eps<y_{ij},y_{i,n+1}$. Now, define a new allocation
$y_{i}'$ by $y_{i\ell}' = y_{i\ell}$ for $\ell\neq j,k,n+1$, $y_{ik}'=y_{ik}+2\eps$,
$y_{ij}'=y_{ij}-\eps$, and $y_{i,n+1}'=y_{i,n+1}-\eps$. We 
claim that $y_{i}'$ is a strictly better allocation. Clearly, 
$y_i'$ is still a stochastic vector. Also, using the definition 
of $\delta_i$, we have that $u_{ij}\leq u_{i\max}-\delta_i$, so 
we may bound the change in utility as
\begin{align*}
\Delta u = 2\eps u_{i\max}-\eps u_{ij}-\eps u_{i,n+1}=\eps (2u_{i\max}-u_{ij}-(u_{i\max}+\delta_{i}))\geq 0
\end{align*}
We conclude that $y_i '$ also satisfies the utility constraint. 
Using that $p_j,p_k\in [0,n]$ and $p_{n+1}=3n$, we may bound 
the change in price of the allocation as
\begin{align*}
\Delta P = 2\eps p_k-\eps p_j-\eps p_{n+1}\leq 2\eps n-3\eps n<0
\end{align*}
We conclude that $y_i'$ is a strictly better solution, contradicting
the optimality of $y_i$. Therefore we have that $y_{i,n+1}=0$. 
\end{proof}

By the lemma above, we obtain that $\pi(y_i)$ is a cheapest 
utility-maximizing allocation for agent $i$ at prices $p$. In particular,
it satisfies the budget constraint. What
remains to show is that every good $j\leq n$ is allocated fully. 
If that was not the case, we would have that 
there is some $j$ with $\sum_{i=1}^{n}y_{ij}<1$. This means
that a solution $p^{*}$ to LP~\eqref{HZ:LP1} must have $p_j^{*}=0$.
There must also exist some $k$ with $\sum_{i=1}^{n}y_{ik}>1$ 
which implies that $p_k^{*}=n$.  As $p_j^{*}=0$, we obtain 
that $p_k = \overline{p}_k = n$. However, $p_k=n$ together with
$\sum_{i=1}^{n}y_{ik}>1$ implies that $\sum_{i=1}^n p_k y_{ik}>n$
which contradicts the budget constraints of the agents. 
Therefore $\sum_{i=1}^{n}y_{ij}=1$ for every good $j$. 
We conclude that $(p,x) = (\overline{p},\pi(y))$ forms a HZ-equilibrium, therefore proving \cref{thm:HZ}.

\section{Conclusion and Future Work}

In this paper, we introduced the OPT-gate, a powerful tool that can be used as a black box for \FIXP-membership proofs, essentially substituting the various Linear Programs and more general convex optimization programs that often appear in the corresponding existence proofs. We demonstrated the strength of our technique via a set of different applications on quite important and fundamental equilibrium computation problems in game theory, fair division and competitive markets. 

We believe that our technique can be used even more broadly in the future, to enable clean and simple \FIXP-membership proofs of other interesting problems. For example, one could study the equilibria of the various generalizations of the \citeauthor{hylland1979efficient} mechanism, e.g., due to \citet{he2018pseudo} and \citet{echenique2021constrained}; we believe that our OPT-gate can be used to show membership results for those problems as well, but the details need to be worked out. Independently of our technique but related to our results, some important open problems are whether one can show a \FIXP-hardness result for the equilibrium computation problem in Hylland and Zeckhauser pseudomarkets, or in Arrow-Debreu markets. For Hylland and Zeckhauser pseudomarkets, it was recently proved that computing a (weak) approximate equilibrium is PPAD-complete \citep{ChenCPY22-HZ-PPAD}, but the complexity of computing exact equilibria remains open. For Arrow-Debreu markets, as we explained in \cref{sec:related}, the result of \citet{garg2017settling} does not quite yield the \FIXP-completeness result for the market model as presented in \citep{arrow1954existence}. On the other hand, our \FIXP-membership result enables future work to consider rather general markets (with general concave utilities and convex consumption and production sets), in the quest of establishing the desired \FIXP-completeness of the problem.

\appendix

\section{Piecewise Differentiable Concave Functions}\label{app:PDC}

A \emph{piecewise differentiable concave} function is a function that is obtained by taking the lower envelope of differentiable concave functions. Formally, a piecewise differentiable concave function $f \colon \RR^n \to \RR$ is given by
\[f(x) = \min_{j \in [m]} g_j(x)\]
where for each $j \in [m]$, $g_j \colon \RR^n \to \RR$ is differentiable and concave. Note that, as a result, $f$ is also concave and, in particular, admits a superdifferential $\partial f$.

For computational purposes $f$ is represented as follows. For each $j \in [m]$, we are given:
\begin{itemize}
    \item an algebraic circuit computing $g_j$, and
    \item an algebraic circuit computing $\nabla g_j$, the gradient of $g_j$.
\end{itemize}
Clearly, given the circuits for $g_j$ we can easily construct an algebraic circuit that computes $f$.

\begin{lemma}\label{lem:PDC}
Given $f$ represented as above, we can in polynomial time construct a pseudogate computing the superdifferential $\partial f$.
\end{lemma}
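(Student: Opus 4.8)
The plan is to reduce the computation of a supergradient of $f = \min_{j \in [m]} g_j$ to a small convex feasibility-type problem that can itself be solved by the OPT-gate (\cref{thm:OPT-gate-convex}), or handled directly via a Heaviside-style pseudogate. The key structural fact is the standard one from convex analysis: at a point $x$, if $J(x) = \{ j \in [m] : g_j(x) = f(x) \}$ denotes the set of active pieces, then $\partial f(x) = \conv\{ \nabla g_j(x) : j \in J(x) \}$. In particular, any convex combination $v = \sum_j \mu_j \nabla g_j(x)$ with $\mu$ in the simplex and $\mu_j = 0$ whenever $g_j(x) > f(x)$ is a valid supergradient. So it suffices to build a pseudogate that, on input $x$ (and the parameter $w$, if the $g_j$ are parameterized), outputs such a $\mu$ and the corresponding $v$.

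First I would compute $f(x) = \min_j g_j(x)$ and then, for each $j$, the ``slack'' $s_j := g_j(x) - f(x) \ge 0$ using the standard $\min$ and $-$ gates. The condition we want to enforce is: $\mu$ lies in the unit simplex $\Delta^{m-1}$, and $s_j > 0 \implies \mu_j = 0$. This is exactly a system expressible via the Heaviside pseudogate (\cref{lem:Heaviside-gate}) combined with a normalization, or alternatively as a convex feasibility program fed to the OPT-gate. Concretely, I would compute $\tau_j \in 1 - \Heaviside(s_j)$ via the Heaviside pseudogate (so $\tau_j = 0$ when $s_j > 0$ and $\tau_j$ can be anything in $[0,1]$ when $s_j = 0$; note that at least one $\tau_j$ is ``free'' since $\min_j s_j = 0$), introduce auxiliary variables $y_j \in [0,1]$ for a further fixed-point layer, set $\tilde\mu_j := \tau_j y_j$, and normalize $\mu_j := \tilde\mu_j / \sum_k \tilde\mu_k$. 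The subtlety is that $\sum_k \tilde\mu_k$ could be zero, so division is not obviously well-defined; this is precisely the kind of issue that the fixed-point trick is designed to absorb. A cleaner route that avoids the division-by-zero worry entirely is to phrase the selection of $\mu$ as the convex feasibility program
\begin{equation*}
\begin{aligned}
\text{minimize}\quad & 0 \\
\text{subject to}\quad & \textstyle\sum_{j=1}^m \mu_j = 1 \\
& \mu_j \ge 0, \quad j = 1,\dots,m \\
& s_j \cdot \mu_j \le 0, \quad j = 1,\dots,m
\end{aligned}
\end{equation*}
whose feasible region is nonempty (it contains $e_{j^\star}$ for any $j^\star$ with $s_{j^\star} = 0$, and such a $j^\star$ always exists), and hand this to the OPT-gate; since we only need feasibility, we invoke the first bullet of \cref{thm:OPT-gate-convex}, and the explicit Slater condition is not required. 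Note here that the entries $s_j$ are themselves computed from the inputs via an algebraic circuit, so they play the role of the external parameter $w$ in \cref{thm:OPT-gate-convex}, and the constraint functions $g_i(\mu; s) = s_j \mu_j$ are bilinear, hence convex (indeed linear) in $\mu$; their subgradients with respect to $\mu$ are $s_j e_j$ and are trivially computable, so the hypotheses of \cref{thm:OPT-gate-convex} are met. Finally, I would output $v := \sum_{j=1}^m \mu_j \nabla g_j(x)$ using the given circuits for $\nabla g_j$, and wrap the whole construction as a pseudogate by threading through all the auxiliary fixed-point variables (those internal to the Heaviside and OPT-gate pseudogates) as required by \cref{def:pseudogate}.

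The correctness argument at a fixed point is then short: at a fixed point of the enclosing circuit, the OPT-gate pseudogate operates correctly, so $\mu$ is feasible for the program above, meaning $\mu \in \Delta^{m-1}$ and $s_j \mu_j \le 0$, hence (since $s_j \ge 0$ and $\mu_j \ge 0$) $\mu_j = 0$ whenever $s_j > 0$, i.e., $\mu$ is supported on the active set $J(x)$; therefore $v = \sum_{j \in J(x)} \mu_j \nabla g_j(x) \in \conv\{\nabla g_j(x) : j \in J(x)\} = \partial f(x)$, so $\fix[G](x) \subseteq \partial f(x)$ as demanded by \cref{def:pseudogate}. The polynomial-time bound on the construction is immediate since it is a fixed amount of glue plus one invocation of the OPT-gate with $m$ constraints and $m$ variables. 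I expect the main obstacle — really the only place that needs care — to be making sure the $g_j$-slacks are correctly positioned as parameters of the OPT-gate (rather than as optimization variables) so that \cref{thm:OPT-gate-convex} applies verbatim, and double-checking that the convexity/subgradient hypotheses of \cref{thm:OPT-gate-convex} are satisfied by the trivial bilinear constraints $s_j \mu_j \le 0$; the rest is a routine application of the active-set characterization of $\partial(\min_j g_j)$.
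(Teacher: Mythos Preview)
Your proposal is correct, but it takes a different route from the paper's proof. The paper does not encode the active-set constraint explicitly via $s_j \mu_j \leq 0$; instead it solves the LP
\[
\min_{v \in \Delta^{m-1}} \sum_{j=1}^m v_j\, g_j(x)
\]
using the OPT-gate. This LP trivially satisfies the explicit Slater condition, so the optimality guarantee of \cref{thm:OPT-gate-convex} applies, and any optimal $w$ is automatically supported on $\{j : g_j(x) = \min_k g_k(x)\}$. The output is then $\sum_j w_j \nabla g_j(x) \in \partial f(x)$.

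So the difference is: the paper uses a clever objective and invokes the \emph{optimality} bullet of the OPT-gate theorem, while you add explicit bilinear constraints $s_j \mu_j \leq 0$ and invoke only the \emph{feasibility} bullet. Your approach is perfectly valid and arguably more transparent about why the output lands in the active set; the paper's approach is slightly slicker in that it keeps the constraint set as the bare simplex and offloads the selection to the objective function. Both buy the same conclusion in polynomial time, and your care about positioning the slacks $s_j$ as external parameters (so that the constraints remain linear in $\mu$) is exactly right.
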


Before proving this Lemma, we provide some notable special cases of piecewise differentiable concave function.
\begin{itemize}
    \item \textbf{Piecewise Linear Concave (PLC):} This corresponds to the special case where the functions $g_j$ are linear affine, i.e., $g_j(x) = a_j \cdot x + b_j$. PLC functions are usually represented by the rationals $a_j$, $b_j$ for $j=1,\dots,m$. Circuits for $g_j$ and $\nabla g_j$ can easily be constructed from these. This class of functions also contains Leontief utilities.
    \item \textbf{Piecewise Polynomial Concave:} This corresponds to the special case where the functions $g_j$ are concave polynomials. The polynomials are represented explicitly as a list of monomials. Again, circuits for $g_j$ and $\nabla g_j$ can easily be constructed from these.
\end{itemize}

\begin{proof}[Proof of \cref{lem:PDC}]
We show how to construct a circuit $G_{\partial f} \colon \RR^n \times [0,1]^\ell \to \RR^n \times [0,1]^\ell$ such that $\fix_\ell[G_{\partial f}](x) \subseteq \partial f(x)$ for all $x \in \RR^n$.

On input $x \in \RR^n$, $G_{\partial f}$ first computes $g_j(x)$ and $\nabla g_j(x)$ for $j=1,\dots,m$, using the circuits for $g_j$ and $\nabla g_j$. Then using the OPT-gate, it computes $w \in \RR^n$ as an optimal solution to the following LP:
\begin{equation*}
\begin{aligned}
\mbox{minimize}\quad & \sum_{j=1}^m v_j g_j(x)\\
\mbox{subject to}\quad & \sum_{j=1}^m v_j = 1\\
& v_j \geq 0, \forall j
\end{aligned}
\end{equation*}
Finally, $G_{\partial f}$ outputs $\sum_{j_1}^m w_j \nabla g_j(x)$. Note that the $\ell$ auxiliary inputs/outputs of $G_{\partial f}$ are used to implement the OPT-gate.

Clearly, the LP above satisfies the explicit Slater condition (\cref{def:explicit-slater-LP}), and as a result $w$ is indeed an optimal solution of the LP. In other words, any $z \in \fix_\ell[G_{\partial f}](x)$ can be written as $z = \sum_{j_1}^m w_j \nabla g_j(x)$, where $w$ is an optimal solution to the LP above. By construction, optimality for the LP ensures that $z \in \conv\{\nabla g_j(x) \colon g_j(x) = \min_k g_k(x)\} \subseteq \partial f(x)$. The last containment follows from standard properties of the superdifferential, see, e.g., \citep{Rockafellar70-convex-analysis}.
\end{proof}

\subsubsection*{Acknowledgments}
We thank the anonymous reviewers for comments and suggestions that helped improve the presentation of the paper. We thank Kousha Etessami for discussions about the Kakutani fixed point theorem.
Kristoffer Arnsfelt Hansen and Kasper Høgh were supported by the Independent Research Fund Denmark under grant no. 9040-00433B. Alexandros Hollender was supported by an EPSRC doctoral studentship (Reference 1892947).

\bibliographystyle{plainnat}
\bibliography{FIXP-OPT}

\end{document}